\documentclass[11pt,british,english,ruled,linesnumbered,vlined]{article}
\usepackage[T1]{fontenc}
\usepackage[latin9]{inputenc}
\usepackage{color}
\usepackage{babel}
\usepackage{float}
\usepackage{textcomp}
\usepackage{mathrsfs}
\usepackage{algorithm2e}
\usepackage{amsmath}
\usepackage{amsthm}
\usepackage{amssymb}
\usepackage{graphicx}
\usepackage{esint}
\usepackage[authoryear]{natbib}
\usepackage[unicode=true,pdfusetitle,
 bookmarks=true,bookmarksnumbered=false,bookmarksopen=false,
 breaklinks=false,pdfborder={0 0 1},backref=false,colorlinks=true]
 {hyperref}

\makeatletter

\providecommand{\tabularnewline}{\\}

  \theoremstyle{remark}
  \newtheorem{rem}{\protect\remarkname}
\theoremstyle{plain}
\newtheorem{thm}{\protect\theoremname}
  \theoremstyle{plain}
  \newtheorem{lem}{\protect\lemmaname}
  \theoremstyle{plain}
  \newtheorem{cor}{\protect\corollaryname}
  \theoremstyle{plain}
  \newtheorem{prop}{\protect\propositionname}

\usepackage{babel}
\usepackage{mathrsfs}
\usepackage{breakurl}
\usepackage{multirow}


 \theoremstyle{definition}
  \theoremstyle{remark}

\linespread{1.20}


\usepackage{amsfonts}
\usepackage{color}
\usepackage[usenames, dvipsnames]{xcolor}
\usepackage{fullpage}
\usepackage{subfigure}
\usepackage{comment}
\usepackage{epsfig}\usepackage{authblk}
\usepackage{bbm}

\setcounter{MaxMatrixCols}{30}
\usepackage[font=footnotesize, labelfont=bf]{caption}


\graphicspath{ ./ } 



\providecommand{\propositionname}{Proposition}
\providecommand{\remarkname}{Remark}
\providecommand{\theoremname}{Theorem}

\providecommand{\remarkname}{Remark}

\newcounter{hypA}

\newenvironment{condition}{\refstepcounter{hypA}\begin{itemize}
\item[({\bf A\arabic{hypA}})]}{\end{itemize}}

  \addto\captionsbritish{\renewcommand{\lemmaname}{Lemma}}
  \addto\captionsbritish{\renewcommand{\propositionname}{Proposition}}
  \addto\captionsbritish{\renewcommand{\remarkname}{Remark}}
  \addto\captionsenglish{\renewcommand{\lemmaname}{Lemma}}
  \addto\captionsenglish{\renewcommand{\propositionname}{Proposition}}
  \addto\captionsenglish{\renewcommand{\remarkname}{Remark}}
  \providecommand{\lemmaname}{Lemma}
  \providecommand{\propositionname}{Proposition}
  \providecommand{\remarkname}{Remark}
\addto\captionsbritish{\renewcommand{\corollaryname}{Corollary}}
\addto\captionsbritish{\renewcommand{\theoremname}{Theorem}}
\addto\captionsenglish{\renewcommand{\corollaryname}{Corollary}}
\addto\captionsenglish{\renewcommand{\theoremname}{Theorem}}
\providecommand{\corollaryname}{Corollary}
\providecommand{\theoremname}{Theorem}


\makeatother

  \addto\captionsbritish{\renewcommand{\lemmaname}{Lemma}}
  \addto\captionsbritish{\renewcommand{\propositionname}{Proposition}}
  \addto\captionsbritish{\renewcommand{\remarkname}{Remark}}
  \addto\captionsenglish{\renewcommand{\lemmaname}{Lemma}}
  \addto\captionsenglish{\renewcommand{\propositionname}{Proposition}}
  \addto\captionsenglish{\renewcommand{\remarkname}{Remark}}
  \providecommand{\lemmaname}{Lemma}
  \providecommand{\propositionname}{Proposition}
  \providecommand{\remarkname}{Remark}
\addto\captionsbritish{\renewcommand{\corollaryname}{Corollary}}
\addto\captionsbritish{\renewcommand{\theoremname}{Theorem}}
\addto\captionsenglish{\renewcommand{\corollaryname}{Corollary}}
\addto\captionsenglish{\renewcommand{\theoremname}{Theorem}}
\providecommand{\corollaryname}{Corollary}
\providecommand{\theoremname}{Theorem}

\begin{document}

\title{Scalable Monte Carlo inference for state-space models}

\author{Sinan Y\i ld\i r\i m$^{*}$, Christophe Andrieu$^{\dagger}$ and
Arnaud Doucet$^{\ddagger}$ }

\maketitle
$^{*}$Faculty of Engineering and Natural Sciences, Sabanc\i{} University,
Turkey. 

\medskip{}

$\dagger$School of Mathematics, Bristol University, UK.

\medskip{}

$\ddagger$Department of Statistics, Oxford University, UK.

\medskip{}
\begin{abstract}
We present an original simulation-based method to estimate likelihood
ratios efficiently for general state-space models. Our method relies
on a novel use of the conditional Sequential Monte Carlo (cSMC) algorithm
introduced in \citet{Andrieu_et_al_2010} and presents several practical
advantages over standard approaches. The ratio is estimated using
a unique source of randomness instead of estimating separately the
two likelihood terms involved. Beyond the benefits in terms of variance
reduction one may expect in general from this type of approach, an
important point here is that the variance of this estimator decreases
as the distance between the likelihood parameters decreases. We show
how this can be exploited in the context of Monte Carlo Markov chain
(MCMC) algorithms, leading to the development of a new class of exact-approximate
MCMC methods to perform Bayesian static parameter inference in state-space
models. We show through simulations that, in contrast to the Particle
Marginal Metropolis\textendash Hastings (PMMH) algorithm of \citet{Andrieu_et_al_2010},
the computational effort required by this novel MCMC scheme scales
very favourably for large data sets.

\emph{Keywords}: Annealed importance sampling, Particle Markov chain
Monte Carlo, Sequential Monte Carlo, State-space models.
\end{abstract}

\section{Introduction\label{sec: Introduction}}

State-space models (SMMs) form an important class of statistical model
used in many fields; see \citet{Douc_et_al_2014} for a recent overview.
In its simplest form a SSM is comprised of an $(\mathsf{X},\mathcal{X})$-valued
latent Markov chain $\{X_{t};t\geq1\},$ and a $(\mathsf{Y},\mathcal{Y})$-valued
observed process $\{Y_{t};t\geq1\}$. The latent process has initial
probability with density $\eta_{\theta}(x_{1})$ and transition density
$f_{\theta}(x_{t-1},x_{t})$; both probability densities defined on
$\mathsf{X}$ and with respect to a common dominating measure on $(\mathsf{X},\mathcal{X})$
denoted generically as ${\rm d}x$ and parametrized by some $\theta\in\Theta\subset\mathbb{R}^{d_{\theta}}$.
Naturally, non-dynamical models for which $f_{\theta}(x_{t-1},x_{t})=f_{\theta}(x_{t})$
form a particular case. The observation at time $t$ is assumed conditionally
independent of all other random variables given $X_{t}=x_{t}$ and
its conditional observation density is $g_{\theta}(x_{t},y_{t})$
on $\mathsf{Y}$ with respect to the dominating measure ${\rm d}y$
on $(\mathsf{Y},\mathcal{Y})$. For a given value $\theta\in\Theta$
we will refer to this model as $\mathscr{M}_{\theta}$, and the corresponding
joint density of the latent and observed variables up to time $T$
is 
\begin{equation}
p_{\theta}(x_{1:T},y_{1:T})=\mu_{\theta}(x_{1})\prod_{t=2}^{T}f_{\theta}(x_{t-1},x_{t})\prod_{t=1}^{T}g_{\theta}(x_{t},y_{t})\;.\label{eq: HMM joint density}
\end{equation}
from which the likelihood function associated to the observations
$y_{1:T}$ can be obtained 
\begin{equation}
l_{\theta}(y_{1:T}):=\int_{\mathsf{X}^{T}}p_{\theta}(x_{1:T},y_{1:T}){\rm d}x_{1:T}\;.\label{eq:likelihoodSSM}
\end{equation}
Such models are typically intractable, therefore requiring the use
of numerical methods to carry out inference about $\theta,x_{1:T}$.
Significant progress was made in the 1990s and early 2000's to solve
numerically the so-called filtering/smoothing problem, that is, assuming
$\theta\in\Theta$ known, efficient methods were proposed to approximate
the posterior density $\pi_{\theta}(x_{1:T}):=p_{\theta}(x_{1:T}\mid y_{1:T})$
or some of its marginals. Indeed particle filters, or more generally
Sequential Monte Carlo methods (SMC), have been shown to provide a
set of versatile and efficient tools to approximate the aforementioned
posteriors by exploiting the sequential structure of $p_{\theta}(x_{1:T}\mid y_{1:T})$,
and their theoretical properties are now well understood \citep{Del_Moral_2004}.

Estimating $\theta\in\Theta$, the static parameter, is however known
to be much more challenging. Indeed, likelihood based methods (e.g.
maximum likelihood or Bayesian estimation) usually require evaluation
of $l_{\theta}(y_{1:T})$ or its derivatives in order to be implemented
practically; see \citet{Kantas_et_al_2015} for a recent review. As
we shall see, of particular interest is the estimation of the likelihood
ratio, that is for $\theta,\theta'\in\Theta$, 
\[
\mathfrak{L}(\theta,\theta'):=\frac{l_{\theta'}(y_{1:T})}{l_{\theta}(y_{1:T})}\;.
\]
In a classical set-up $\mathfrak{L}(\theta,\theta')$ plays a central
role in testing, for example, but is also a direct route to the numerical
evaluation of the gradient of the log-likelihood function or the implementation
of Markov chain Monte Carlo (MCMC) algorithms used to perform Bayesian
inference.

The first contribution of the present paper is the realization that
the conditional SMC (cSMC) kernel introduced in \citet{Andrieu_et_al_2010},
an MCMC kernel to sample from $\pi_{\theta}\big({\rm d}x_{1:T}\big)$,
can be combined with Annealing Importance Sampling (AIS) \citep{Crooks1998,Neal_2001}
in order to develop efficient estimators of $\mathfrak{L}(\theta,\theta')$.
Central to the good behaviour of this class of estimators is the fact
that rather than estimating numerator and denominator independently,
as suggested by current methods, this is here performed jointly using
a unique source of randomness. Alternative approaches exploiting this
principle have been explored briefly in \citet{Lee_Holmes_2010} and
studied more thoroughly in \citet{delagiannidis:doucet:pitt:2015}
in the context of MCMC simulations. Our estimator differs substantially
from these earlier proposals. The second contribution here is to provide
theory for this novel likelihood ratio estimator and show how this
estimator can be exploited in numerical procedures in order to design
algorithms which scale well with the number of data points. In particular
we present a new exact approximate MCMC scheme for perform Bayesian
static parameter inference in SSMs and we demonstrate its performance
through simulations. 

\section{Likelihood ratio estimation in SSM with cSMC\label{sec: cSMC-AIS}}

An efficient technique to estimate $l_{\theta}(y_{1:T})$ for $\theta\in\Theta$
consists of using SMC methods. The algorithm is presented in Algorithm
\ref{alg:SMC}; it requires a user defined instrumental probability
distribution $m_{\theta}(\cdot):\mathcal{X}\rightarrow[0,1]$ and
a Markov kernel $M_{\theta}(\cdot,\cdot):\mathsf{X}\times\mathcal{X}\rightarrow[0,1]$,
referred to as $\mathscr{A}_{\theta}=\{m_{\theta},M_{\theta}\}$\textendash $M_{\theta}(\cdot,\cdot)$
can be made time dependent, but we aim to keep notation simple here.
We also use the notation $\mathcal{P}\big(\omega^{(1)},\omega^{(2)},\ldots,\omega^{(N)}\big)$
to refer to the probability distribution of a discrete valued random
variable $B$ taking values in $\{1,2,\ldots,N\}$ such that $\mathbb{P}\big(B=b\big)\propto\omega^{(b)}$.
An estimator of the likelihood can be obtained by 
\begin{equation}
\hat{l}_{\theta}(y_{1:T}):=\prod_{t=1}^{T}\frac{1}{N}\sum_{i=1}^{N}w_{t}^{(i)}\;.\label{eq:estimate_ell_SMC}
\end{equation}
This estimator has attractive properties. It is unbiased \citep{Del_Moral_2004}
and has a relative variance which scales linearly in $T$ under practically
relevant conditions \citep{cerou2011nonasymptotic}. One can therefore
use two such independent SMC estimators for $\theta$ and $\theta'$
and compute their ratio to estimate $\mathfrak{L}(\theta,\theta')$.
However better estimators are possible if one introduces positive
dependence between the two estimators, this is exploited in \citet{delagiannidis:doucet:pitt:2015}
and \citet{Lee_Holmes_2010}. Our approach relies on the same idea
but the estimator we propose is very different from these earlier
proposals and complementary, as discussed later in the paper.

\begin{algorithm}
\For{ $i=1,\ldots,N$}{

Sample $z_{1}^{(i)}\sim m_{\theta}(\cdot)$

Compute $w_{1}^{(i)}=\mu_{\theta}\big(z_{1}^{(i)}\big)g_{\theta}\big(z_{1}^{(i)},y_{1}\big)/m_{\theta}\big(z_{1}^{(i)}\big)$

}

\For{ $t=2,\ldots,T$}{

\For{ $i=1,\ldots,N$}{

Sample $a_{t-1}^{(i)}\sim\mathcal{P}\big(w_{t-1}^{(1)},\ldots,w_{t-1}^{(N)}\big)$
and $z_{t}^{(i)}\sim M_{\theta}\big(z_{t-1}^{(a_{t-1}^{(i)})},\cdot\big)$

Compute $w_{t}^{i}=f_{\theta}\big(z_{t-1}^{(a_{t-1}^{(i)})},z_{t}^{(i)}\big)g_{\theta}\big(z_{t}^{(i)},y_{t}\big)/M_{\theta}\big(z_{t-1}^{(a_{t-1}^{(i)})},z_{t}^{(i)}\big)$

}

}

\Return$\big(a_{1:T}^{1:N},z_{1:T}^{1:N},w_{1:T}^{1:N}\big)$

\protect\caption{\label{alg:SMC}$\mathrm{SMC}\big(N,\mbox{\ensuremath{\mathscr{M_{\theta},\mathscr{A}_{\theta}}}}\big)$}
\end{algorithm}

We rely here on the AIS method of \citet{Crooks1998,Neal_2001} which
is a state-of-the-art Monte Carlo approach to estimate ratio of normalizing
constants. For $\theta,\theta'\in\Theta$ the method requires one
to first choose a family of probability distributions $\mathscr{P}_{\theta,\theta'}=\big\{\pi_{\theta,\theta',\varsigma},\varsigma\in[0,1]\big\}$
defined on $(\mathsf{X}^{T},\mathcal{X}^{\otimes T})$, whose aim
is to ``bridge'' $\pi_{\theta}$ and $\pi_{\theta'}$, and a family
of transition probabilities $\mathscr{R}_{\theta,\theta'}=\big\{ R_{\theta,\theta',\varsigma}(\cdot,\cdot):\mathsf{X}^{T}\times\mathcal{X}^{\otimes T}\to[0,1],\varsigma\in[0,1]\big\}$
and a mapping $\varsigma(\cdot):[0,1]\rightarrow[0,1]$. The role
of these quantities is clarified below. For $\theta,\theta'\in\Theta$
we say that $\mathscr{P}_{\theta,\theta'}$, $\mathscr{R}_{\theta,\theta'}$
and $\varsigma(\cdot)$ associated with $\mathscr{M}_{\theta}$ and
$\mathscr{M}_{\theta'}$ satisfy (A\ref{hypA:basicconditions}) if

\begin{condition}\label{hypA:basicconditions}Conditions on $\mathscr{P}_{\theta,\theta'}$,
$\mathscr{R}_{\theta,\theta'}$ and $\varsigma(\cdot)$, 
\begin{enumerate}
\item $\mathscr{P}_{\theta,\theta'}=\big\{\pi_{\theta,\theta',\varsigma},\varsigma\in[0,1]\big\}$
is a family of probability distributions on $(\mathsf{X}^{T},\mathcal{X}^{\otimes T})$
satisfying

\begin{enumerate}
\item the end point conditions $\pi_{\theta,\theta',0}(\cdot)=\pi_{\theta}(\cdot)$
and $\pi_{\theta,\theta',1}(\cdot)=\pi_{\theta'}(\cdot)$ as defined
by $\mathscr{M}_{\theta}$ and $\mathscr{M}_{\theta'}$, 
\item for any $A\in\mathcal{X}^{\otimes T}$ and $\varsigma,\varsigma'\in[0,1]$
such that $\varsigma\leq\varsigma'$, $\pi_{\theta,\theta',\varsigma'}(A)>0$
implies $\pi_{\theta,\theta',\varsigma}(A)>0$, 
\end{enumerate}
\item $\mathscr{R}_{\theta,\theta'}=\big\{ R_{\theta,\theta',\varsigma}(\cdot,\cdot):\mathsf{X}^{T}\times\mathcal{X}^{\otimes T}\to[0,1],\varsigma\in[0,1]\big\}$
is such that for any $\varsigma\in[0,1]$, $R_{\theta,\theta',\varsigma}(\cdot,\cdot)$
leaves $\pi_{\theta,\theta',\varsigma}(\cdot)$ invariant, 
\item $\varsigma(\cdot):[0,1]\rightarrow[0,1]$ a non-decreasing mapping
such that $\varsigma(0)=0$ and $\varsigma(1)=1$. 
\end{enumerate}
\end{condition}

In order to implement the AIS procedure, one chooses $K\in\mathbb{N}$
and considers the sub-family of probability distributions $\mathscr{P}_{\theta,\theta',K}:=\{\pi_{\theta,\theta',k}^{[K]},k=0,\ldots,K+1\}\subset\mathscr{P}_{\theta,\theta'}$
such that for any $k=0,\ldots,K+1$ $\pi_{\theta,\theta',k}^{[K]}=\pi_{\theta,\theta',\varsigma(k/(K+1))}$
and the corresponding family of transition kernels $\mathscr{R}_{\theta,\theta',K}:=\big\{ R_{\theta,\theta',k}^{[K]}(\cdot,\cdot):\mathsf{X}^{T}\times\mathcal{X}^{\otimes T}\to[0,1],k=1,\ldots,K\big\}$.
The integer $K$ therefore represents the number of intermediate distributions
introduced to bridge $\pi_{\theta}(\cdot)$ and $\pi_{\theta'}(\cdot)$,
which is allowed to be zero. For notational simplicity we will drop
the dependence on $K$ of the elements of $\mathscr{P}_{\theta,\theta',K}$
and $\mathscr{R}_{\theta,\theta',K}$ when no ambiguity is possible.
Let $\mathbf{u}:=x_{1:T}$ and consider the non-homogeneous Markov
chain $\big\{\mathbf{U}_{i},i=0,\ldots,K\big\}$ such that $\mathbf{U}_{0}\sim\pi_{\theta}$
and for $k\geq1$ $\mathbf{U}_{k}\mid\mathbf{U}_{k-1}=\mathbf{u}_{k-1})\sim R_{\theta,\theta',k}\big(\mathbf{u}_{k-1},\cdot\big)$.
It is routine to show that under these assumptions the quantity 
\[
\prod_{k=0}^{K}\frac{\pi_{\theta,\theta',k+1}\big(\mathbf{U}_{k}\big)}{\pi_{\theta,\theta',k}\big(\mathbf{U}_{k}\big)}
\]
has expectation $1$. The interest of this identity is that whenever
$\pi_{\theta,\theta',\varsigma}=\gamma_{\theta,\theta',\varsigma}/Z_{\theta,\theta',\varsigma}$
where $Z_{\theta,\theta',\varsigma}$ is an unknown normalising constant
but $\gamma_{\theta,\theta',\varsigma}$ can be evaluated pointwise
then

\begin{equation}
\prod_{k=0}^{K}\frac{\gamma_{\theta,\theta',k+1}(\mathbf{U}_{k})}{\gamma_{\theta,\theta',k}(\mathbf{U}_{k})}\label{eq:AISestimator}
\end{equation}
is an unbiased estimator of $Z_{\theta,\theta',K+1}/Z_{\theta,\theta',0}$.
Consequently, for $\gamma_{\theta,\theta',0}(x_{1:T})=p_{\theta}(x_{1:T},y_{1:T})$
and $\gamma_{\theta,\theta',K+1}(x_{1:T})=p_{\theta'}(x_{1:T},y_{1:T})$,
this provides us with a way of estimating the desired likelihood ratio
$\mathcal{L}(\theta,\theta')$. The algorithm is summarized in Algorithm
\ref{alg:Annealing-Importance-Sampling}, which should be initialised
with $x_{1:T}\sim\pi_{\theta}(\cdot)$ to lead to an unbiased estimator
of $\mathcal{L}(\theta,\theta')$.

\begin{algorithm}
Set $\mathbf{u}_{0}=x_{1:T}$

\For{ $k=1,\ldots,K$,}{

Sample $\mathbf{u}_{k}\sim R_{\theta,\theta',k}(\mathbf{u}_{k-1},\cdot)$
targetting $\pi_{\theta,\theta',k}$

}

Compute

\begin{equation}
\mathfrak{\hat{L}}(\theta,\theta')=\prod_{k=0}^{K}\frac{\gamma_{\theta,\theta',k+1}(\mathbf{u}_{k})}{\gamma_{\theta,\theta',k}(\mathbf{u}_{k})}\label{eq: noisy likelihood ratio}
\end{equation}

\Return$\big(\mathfrak{\hat{L}}(\theta,\theta'),\mathbf{u}_{K}\big)$

\protect\caption{\label{alg:Annealing-Importance-Sampling}$\mathrm{AIS}\big(x_{1:T},\mathscr{P}_{\theta,\theta'},\mathscr{R}_{\theta,\theta'},K,\varsigma(\cdot)\big)\;.$}
\end{algorithm}

In general sampling exactly from $\pi_{\theta}(\cdot)$ is not possible.
Instead one can run an MCMC with transition kernel $R_{\theta,\theta',0}$,
hence targetting $\pi_{\theta}$, for $P$ iterations. Provided $R_{\theta,\theta',0}$
is ergodic one can feed $x_{1:T}\sim R_{\theta,\theta',0}^{P}$ into
$\mathrm{AIS}\big(x_{1:T},\mathscr{P}_{\theta,\theta'},\mathscr{R}_{\theta,\theta'},K,\varsigma(\cdot)\big)$
and control bias through $P$. There are several ways one can reduce
variability of this estimator. Under natural smoothness assumptions
on $\varsigma\mapsto\pi_{\theta,\theta',\varsigma},R_{\theta,\theta',\varsigma}$
and the mapping $\varsigma(\cdot)$, and ergodicity of $R_{\theta,\theta',\varsigma}$
one can show that this estimator is consistent as $K\rightarrow\infty$.
More simply it is also possible, for $K$ fixed, to consider $M$
independent copies of the estimator and consider their average\textendash the
latter strategy has the advantage that it lends itself trivially to
parallel computing architectures, in contrast to the former.

There is an additional natural and ``free'' control of both bias
and variance when computation of $\mathfrak{\hat{L}}(\theta,\theta')$
is required only for $\theta$ and $\theta'$ ``close''. Indeed
in such scenarios, provided the models considered are smooth enough
in $\theta$, one expects the estimation of $\mathfrak{L}(\theta,\theta')$
to be easier since the densities $\pi_{\theta}(x_{1:T})$ and $\pi_{\theta'}(x_{1:T})$
will be close to one another. For illustration and concreteness we
briefly describe this fact in the context of a stochastic gradient
algorithm to maximize $l_{\theta}(y_{1:T})$\textendash the main focus
of the paper is on sampling, but this requires additional technicalities.
Assume $\nabla_{\theta}\log l_{\theta}(y_{1:T})$ is intractable and
that we wish to use a finite difference method to approximate this
quantity. The simultaneous perturbation (SPSA) approach of \citet{spall1992multivariate}
is such a method, which naturally lends itself to the use of our class
of estimators . Let $\delta$ be a, possibly random, $d_{\theta}-$dimensional
vector such that $\theta\pm\delta\in\Theta$, then a possible estimator
of $\nabla\log l_{\theta}(y_{1:T})$ could be the vector whose $i-$th
component is 
\[
\frac{1}{2[\delta]_{i}}\log\left(\frac{l_{\theta+\delta}(y_{1:T})}{l_{\theta-\delta}(y_{1:T})}\right)\approx\frac{1}{2[\delta]_{i}}\left(\frac{l_{\theta+\delta}(y_{1:T})}{l_{\theta-\delta}(y_{1:T})}-1\right)\;,
\]
which depends on the likelihood ratio $\mathcal{L}(\theta+\delta,\theta-\delta)$.
A natural idea is to plug-in the AIS estimator $\hat{\mathcal{L}}(\theta+\delta,\theta-\delta)$
developed earlier and note that such a strategy is likely to be better
than a strategy which would consists of estimating numerator and denominator
independently.

We now discuss natural choices of $\mathscr{P}_{\theta,\theta'}$
and $\mathscr{R}_{\theta,\theta'}$ for this AIS procedure in the
context of state-space models. These choices are crucial to the good
performance of the algorithm.

\subsection*{Choice of $\mathscr{P}_{\theta,\theta'}$}

A standard choice consists of using geometric annealing, that is define
for $\varsigma\in[0,1]$, 
\[
\gamma_{\theta,\theta',\varsigma}(x_{1:T}):=\gamma_{\theta}(x_{1:T})^{1-\varsigma}\gamma_{\theta'}(x_{1:T})^{\varsigma}\;,
\]
and, for example, set $\varsigma(t)=t$ for $t\in[0,1]$. This can
be written in a form similar to that arising from a state-space model
\[
\gamma_{\theta,\theta',\varsigma}(x_{1:T})=\tilde{\mu}_{\theta,\theta',\varsigma}(x_{1})\prod_{t=2}^{T}\tilde{f}_{\theta,\theta',\varsigma}(x_{t-1},x_{t})\prod_{t=1}^{T}\tilde{g}_{\theta,\theta',\varsigma}(x_{t},y_{t})\;,
\]
where for $x,x'\in\mathsf{X}$ and $y\in\mathsf{Y}$, $\tilde{\mu}_{\theta,\theta',\varsigma}(x)\propto\mu_{\theta}(x)^{1-\varsigma}\mu_{\theta}(x)^{\varsigma}$,
$\tilde{f}_{\theta,\theta',\varsigma}(x,x')\propto f_{\theta}(x,x')^{1-\varsigma}f_{\theta'}(x,x')^{\varsigma}$
and $\tilde{g}_{\theta,\theta',\varsigma}(x,y)\propto g_{\theta}(x,y)^{1-\varsigma}g_{\theta'}(x,y)^{\varsigma}$.
This could at first sight be a good choice since the sequential structure
of the model crucial to the implementation of efficient sampling techniques
is preserved. However, except for very specific cases such as when
the densities involved belong to the exponential family, the normalising
constant of $\tilde{f}_{\theta,\theta',\varsigma}(x,\cdot)$ may be
intractable, while being dependent on $\theta,\theta'$ and $x$.
While this is not an issue for the computation of \ref{eq:AISestimator},
this may lead to complications when implementing sampling techniques
relying on SMC (see Algorithm \ref{alg:SMC} and Remark \ref{rmk:Note-the-limitation}).
A way around this problem consists of defining $\vartheta(\cdot):[0,1]\rightarrow\Theta$
such that $\vartheta(0)=\theta$ and $\vartheta(1)=\theta'$, and
\[
\gamma_{\theta,\theta',\varsigma}(x_{1:T})=\gamma_{\vartheta(\varsigma)}(x_{1:T})\;,
\]
which trivially admits the desired sequential structure and defines
a tractable model. For example when $\Theta$ is convex the choice
$\vartheta(\varsigma)=(1-\varsigma)\theta+\varsigma\theta'$ will
always work.

\subsection*{Choice of $\mathscr{R}_{\theta,\theta'}$}

The conditional SMC (cSMC) algorithm belongs to the class of particle
MCMC algorithms introduced in \citet{Andrieu_Doucet_Holenstein_2009,Andrieu_et_al_2010}.
It is an SMC based algorithm (see Algorithm \ref{alg:SMC}) particularly
well suited to sampling from distributions arising from models with
a sequential structure, similar to that of $\pi_{\theta}$ for any
$\theta\in\Theta$. More precisely, for $\theta\in\Theta$ the cSMC
targetting $\pi_{\theta}$ yields a Markov transition kernel of invariant
distribution $\pi_{\theta}$, therefore lending itself to being used
as an MCMC method. The cSMC update has been shown both empirically
and theoretically to possess good convergence properties\textendash see
\citet{andrieuleevihola2013,Chopin_and_Singh_2015,lindsten2015uniform}
for recent studies of its theoretical properties. In its original
form the algorithm, corresponding to $\mathrm{cSMC}\big(\mathtt{False},N,x_{1:T},\mbox{\ensuremath{\mathscr{M_{\theta},\mathscr{A}_{\theta}}}}\big)$
in Algorithm \ref{alg:Conditional-SMC}, may suffer from the so-called
path degeneracy, meaning that because of the successive resampling
steps involved the particle paths $x_{1:T}$ at time $T$ have few
distinct values $x_{k}$ for $k\ll T$, resulting in poor mixing of
the corresponding MCMC. The cSMC with backward resampling as suggested
by \citet{Whiteley_2010} overcomes this problem by enabling reselection
of ancestors; a closely related approach is the ancestor resampling
technique of \citet{Lindsten_et_al_2014}. This is described in the
second part of Algorithm \ref{alg:Conditional-SMC}, and corresponds
to $\mathrm{cSMC}\big(\mathtt{True},N,x_{1:T},\mbox{\ensuremath{\mathscr{M_{\theta},\mathscr{A}_{\theta}}}}\big)$.

\begin{algorithm}
Set $z_{t}^{(1)}=x_{t}$ for $t=1,...,T$

\For{ $i=2,\ldots,N$}{

Sample $z_{1}^{(i)}\sim m_{\theta}(\cdot)$

Compute $w_{1}^{(i)}=\mu_{\theta}\big(z_{1}^{(i)}\big)g_{\theta}\big(z_{1}^{(i)},y_{1}\big)/m_{\theta}\big(z_{1}^{(i)}\big)$

}

\For{ $t=2,\ldots,T$}{

\For{ $i=2,\ldots,N$}{

Sample $a_{t-1}^{(i)}\sim\mathcal{P}\big(w_{t-1}^{(1)},\ldots,w_{t-1}^{(N)}\big)$
and $z_{t}^{(i)}\sim M_{\theta}\big(z_{t-1}^{(a_{t-1}^{(i)})},\cdot\big)$

Compute $w_{t}^{i}=f_{\theta}\big(z_{t-1}^{(a_{t-1}^{(i)})},z_{t}^{(i)}\big)g_{\theta}\big(z_{t}^{(i)},y_{t}\big)/M_{\theta}\big(z_{t-1}^{(a_{t-1}^{(i)})},z_{t}^{(i)}\big)$

}

}

Sample $k_{T}\sim\mathcal{P}\big(w_{T}^{(1)},\ldots,w_{T}^{(N)}\big)$
and set $x'_{T}=z_{T}^{(k_{T})}$

\For{ $t=T-1,\ldots,1$}{

\If{ $\neg{\rm BS}$}{ $k_{t}=a_{t}^{(k_{t+1})}$}

\Else{

\For{ $i=1,\ldots,N$}{

Compute $\tilde{w}_{t}^{(i)}=w_{t}^{(i)}f_{\theta}\big(z_{t}^{(i)},z_{t+1}^{(k_{t+1})}\big)$

}

Sample $k_{t}\sim\mathcal{P}\big(\tilde{w}_{t}^{(1)},\ldots,\tilde{w}_{t}^{(N)}\big)$

}

Set $x'_{t}=z_{t}^{(k_{t})}$

}

\Return $x'_{1:T}$

\protect\caption{\label{alg:Conditional-SMC}$\mathrm{cSMC}\big(\mathtt{BS},N,x_{1:T},\mbox{\ensuremath{\mathscr{M_{\theta},\mathscr{A}_{\theta}}}}\big)$ }
\end{algorithm}

Reversibility of cSMC with or without backward sampling with respect
to $\pi_{\theta}(\cdot)$ as well as its theoretical superiority over
the original cSMC are proven in \citet{Chopin_and_Singh_2015}. As
shown in \citep{Chopin_and_Singh_2015,andrieuleevihola2013,lindsten2015uniform},
convergence to stationarity can be made arbitrarily fast as $N$ increases.
For conciseness we will refer to ${\rm AIS}$ in Algorithm \ref{alg:Annealing-Importance-Sampling}
for which $\mathscr{R}_{\theta,\theta'}$ consists of $\mathrm{cSMC}\big(\mathtt{True},N,x_{1:T},\mbox{\ensuremath{\mathscr{M_{\vartheta},\mathscr{A}_{\vartheta}}}}\big)$
for all relevant $\vartheta$'s as ${\rm AIS}-\mathrm{cSMC}\big(x_{1:T},\mathscr{P}_{\theta,\theta'},\mathscr{A}_{\theta,\theta'},N,K,\varsigma(\cdot)\big)$
where $\mathcal{A}_{\theta,\theta'}$ is the set of instrumental methods
$\mathcal{A}_{\vartheta}$ required to implement the cSMCs targetting
the distributions in $\mathscr{P}_{\theta,\theta'}$. 
\begin{rem}
\label{rmk:Note-the-limitation}Contrary to the original cSMC, cSMC
with backward sampling is limited to scenarios where the transition
density $f_{\theta}$ is computable pointwise. Even when pointwise
evaluation is feasible, the backward sampling approach will be inefficient
if $f_{\theta}$ is close to singular; e.g. if $f_{\theta}$ arises
from the fine time discretization of a diffusion process.
\end{rem}

\begin{rem}
It is clear that there is another way of reducing variability : one
can draw several paths in the backward sampling stage and average
the corresponding estimators. We do not pursue this here.
\end{rem}

\section{Application to exact approximate MCMC for SSM\label{sec:Application-to-EA}}

In a Bayesian framework, the static parameter is ascribed a probability
distribution with density $\eta(\theta)$ (with respect to a dominating
measure denoted ${\rm d}\theta$) from which one defines the posterior
distribution of $(\theta,x_{1:T})$ given observations $y_{1:T}$
with density 
\begin{equation}
\pi(\theta,x_{1:T})\propto\eta(\theta)p_{\theta}(x_{1:T},y_{1:T})\;,\label{eq:jointposterior}
\end{equation}
(we drop $y_{1:T}$ in $\pi(\cdot)$ for notational simplicity). This
posterior distribution and its marginal $\pi({\rm d}\theta)$ are
potentially highly complex objects to manipulate in practice and (sampling)
Monte Carlo methods are often the only viable methods available to
extract information from such models. Assume for a moment that our
primary interest is in inferring $\theta$, and therefore that sampling
from $\pi({\rm d}\theta)$ is our concern. Among Monte Carlo methods,
MCMC techniques are often the only possible option\textendash we however
refer the reader to \citet{crisan:miguez:2013arXiv,Kantas_et_al_2015}
for purely particle based on-line methods. MCMC rely on the design
of ergodic Markov chains with the distribution of interest as invariant
distribution, say $\{\theta^{(i)},i\geq0\}$ with invariant distribution
$\pi({\rm d}\theta)$ for our problem. The Metropolis\textendash Hastings
(MH) algorithm plays a central role in the design of MCMC transition
probabilities, and proceeds as follows in our context. Given a family
of user defined and instrumental probability distributions $\big\{ q(\theta,\cdot),\theta\in\Theta\big\}$
on $\Theta$,

\begin{algorithm}
\protect\caption{Marginal algorithm}

\label{alg: MH}Given the current state $\theta$

Sample $\theta'\sim q(\theta,\cdot)$

Set the next state to $\theta'$ with probability $\min\{1,r(\theta,\theta')\}$,
where 
\begin{equation}
r(\theta,\theta'):=\frac{q(\theta',\theta)\pi(\theta')}{q(\theta,\theta')\pi(\theta)}=\frac{q(\theta',\theta)\eta(\theta')}{q(\theta,\theta')\eta(\theta)}\frac{l_{\theta'}(y_{1:T})}{l_{\theta}(y_{1:T})}\label{eq:genericMHacceptratio}
\end{equation}
Otherwise set the next state to $\theta$. 
\end{algorithm}

We will refer to $r(\theta,\theta')$ as the acceptance ratio and
call this MH algorithm targeting $\pi(\theta)$ the marginal MH algorithm.
A crucial point for the implementation of the algorithm is the requirement
to be able to evaluate the likelihood ratio $\mathfrak{L}(\theta,\theta')$.
This significantly reduces the class of models for which the algorithm
above can be used. In particular, one cannot apply this algorithm
to non-linear non-Gaussian SSMs as the likelihood (\ref{eq:likelihoodSSM})
is intractable. 

\subsection{State of the art}

A classical way around this type of intractability problem consists
of running an MCMC algorithm targeting the joint distribution $\pi(\theta,x_{1:T})$
when evaluating this density, possibly up to a constant, is feasible.
This significantly broadens the class of models under consideration
to which MCMC can be applied. There are, however, well documented
difficulties with this approach. The standard strategy consists of
updating alternately $x_{1:T}$ conditional upon $\theta$ and $\theta$
conditional upon $x_{1:T}$. As $x_{1:T}$ is a high-dimensional vector,
one typically updates it by sub-blocks using MH steps with tailored
proposal distributions \citep{shephard1997likelihood}. However, for
complex SSMs, it is very difficult to design efficient proposal distributions.
An alternative consists of using the cSMC update described in Algorithm
3 which allows one to update the state $x_{1:T}$ conditional upon
$\theta$ in one block. A strong dependence between $\theta$ and
$x_{1:T}$ may however still lead to underperforming algorithms. We
will come back to this point later in the paper.

A powerful alternative method to tackle intractability which has recently
attracted some interest consists of replacing the value of $\pi(\theta)$
with a non-negative random estimator $\hat{\pi}(\theta)$ whenever
it is required in \eqref{eq:genericMHacceptratio} for the implementation
of the marginal MH algorithm. If $\mathbb{E}[\hat{\pi}(\theta)]=C\pi(\theta)$
for all $\theta\in\Theta$ and a constant $C>0$ it turns out to lead
to exact algorithms, that is sampling from $\pi$ is guaranteed at
equilibrium under very mild assumptions on $\hat{\pi}(\theta)$. This
approach leads to so called pseudo-marginal algorithms \citep{Andrieu_and_Roberts_2009}.
As SMC provides a nonnegative unbiased estimate \eqref{eq:estimate_ell_SMC}
of $l_{\theta}(y_{1:T})$ for SSMs \citep{Del_Moral_2004}, a pseudo-marginal
approximation of the marginal MH algorithm for state-space models
is possible in this context. The resulting algorithm, the particle
marginal MH (PMMH) introduced \citet{Andrieu_Doucet_Holenstein_2009,Andrieu_et_al_2010},
is presented in Algorithm \ref{alg: PMMH} .

\begin{algorithm}
\protect\caption{PMMH for SSM}

\label{alg: PMMH} \KwIn{Current sample $(\theta,\hat{l}_{\theta}(y_{1:T}))$,
$N\geq1$ } \KwOut{New sample $(\theta',\hat{l}_{\theta'}(y_{1:T}))$}

Sample $\theta'\sim q(\theta,\cdot)$ \\
 Run $\mathrm{SMC}\big(N,\mbox{\ensuremath{\mathscr{M_{\theta'},\mathscr{A}_{\theta'}}}}\big)$
for $\pi_{\theta'}(x_{1:T})$ \\
 Compute the estimate $\hat{l}_{\theta'}(y_{1:T})$ of $l_{\theta'}(y_{1:T})$
with the output of $\mathrm{SMC}\big(N,\mbox{\ensuremath{\mathscr{M_{\theta'},\mathscr{A}_{\theta'}}}}\big)$
using \eqref{eq:estimate_ell_SMC}.

Return $(\theta',\hat{l}_{\theta'}(y_{1:T}))$ with probability 
\[
\min\left\{ 1,\frac{q(\theta',\theta)\eta(\theta')\hat{l}_{\theta'}(y_{1:T})}{q(\theta,\theta')\eta(\theta)\hat{l}_{\theta}(y_{1:T})}\right\} ,
\]
otherwise return $(\theta,\hat{l}_{\theta}(y_{1:T}))$. 
\end{algorithm}

The PMMH defines a Markov chain $\big\{\theta_{i},\hat{l}_{\theta_{i}}(y_{1:T})\big\}$
which leaves $\pi({\rm d}\theta)$ invariant marginally. However,
as shown in \citet{Andrieu_Doucet_Holenstein_2009,Andrieu_et_al_2010},
it is easy to recover samples from $\pi(\theta,x_{1:T})$ by adding
an additional step to Algorithm \ref{alg: PMMH}.

Although the PMMH has been recognised as significantly extending the
applicability of MCMC to a broader class of state-space models \citet{flury2011bayesian},
it comes with some drawbacks. In particular the performance of the
resulting MCMC algorithm depends heavily on the variability of the
induced acceptance ratio \citep{Andrieu_and_Roberts_2009,andrieu2015convergence,andrieu2014establishing,doucet2015efficient,pitt2012some,sherlock2015efficiency},
and overestimates $\hat{l}_{\theta}(y_{1:T})$ of $l_{\theta}(y_{1:T})$
lead to an algorithm rejecting many transitions away from $\theta$,
resulting in poor performance. This means for example that $N$ should
scale linearly with $T$ in order to maintain a set level of performance
as $T$ increases. In the following, we present another new class
of exact approximate MCMC algorithms targetting $\pi(\theta,x_{1:T})$,
which still update $(\theta,x_{1:T})$ jointly but can be interpreted
as using unbiased estimates of the acceptance ratio $r(\theta,\theta')$
computed afresh at each iteration of the MCMC algorithm. This lack
of memory is to be contrasted with the potentially calamitous reliance
of the PMMH's acceptance ratio on the estimate of the likelihood obtained
the last time an acceptance occurred (refreshing this quantity using
SMC would lead to an invalid algorithm, see \citet{Beaumont_2003,Andrieu_and_Roberts_2009}).
In addition, as we shall see, algorithms such as the marginal MH in
Algorithm \ref{alg: MH} requires a proposal such that the distance
between $\theta$ and $\theta'$ is of order $T^{-1/2}$ in order
to account for the concentration of the posterior distribution. This
turns out to provide us with an additional built-in beneficial mechanism
to reduce variability of our estimator of the acceptance ratio, independent
of $N$.

\subsection{AIS within Metropolis-Hastings}

In order to define a valid MH update which uses the estimators of
$\mathfrak{L}(\theta,\theta')$ described in Section \ref{sec: cSMC-AIS},
additional conditions to those of (A\ref{hypA:basicconditions}) are
required\textendash fortunately these conditions are satisfied by
the cSMC update, with or without backward sampling \citep{Chopin_and_Singh_2015}.

\begin{condition}For any $\theta,\theta'\in\Theta$, $\mathscr{P}_{\theta,\theta'}$
and $\mathscr{R}_{\theta,\theta'}$ satisfying (A\ref{hypA:basicconditions}),
and such that 
\begin{enumerate}
\item the distributions in $\mathscr{P}_{\theta,\theta'}$ satisfy $\pi_{\theta,\theta',\varsigma}(\cdot)=\pi_{\theta',\theta,1-\varsigma}(\cdot)$
for any $\varsigma\in[0,1]$ 
\item the transition kernels in $\mathscr{R}_{\theta,\theta'}$ satisfy,
for any $\varsigma\in[0,1]$,

\begin{enumerate}
\item $R_{\theta,\theta',\varsigma}(\cdot,\cdot)=R_{\theta',\theta,1-\varsigma}(\cdot,\cdot)$, 
\item $R_{\theta,\theta',\varsigma}(\cdot,\cdot)$ is $\pi_{\theta,\theta',\varsigma}-$reversible. 
\end{enumerate}
\end{enumerate}
\end{condition}

Following the setup above, the pseudocode of MCMC AIS is given in
Algorithm \ref{alg: MCMC AIS for HMM}.

\begin{algorithm}
\protect\caption{MCMC AIS for SSM}

\label{alg: MCMC AIS for HMM} \KwIn{Current sample $(\theta,x_{1:T})$,$K,\varsigma(\cdot)$}

\KwOut{New sample $(\theta',x'_{1:T})$}

Sample $\theta'\sim q(\theta,\cdot)$. \\
 $\big(x'_{1:T},\mathfrak{\hat{L}}(\theta,\theta')\big)\leftarrow{\rm AIS}\big(x_{1:T},\mathscr{P}_{\theta,\theta'},\mathscr{R}_{\theta,\theta'},K,\varsigma(\cdot)\big)$.
\\
 Return $(\theta',x'_{1:T})$ with probability $\min\{1,r_{\mathbf{u}}(\theta,\theta')\}$,
where 
\begin{equation}
r_{\mathbf{u}}(\theta,\theta')=\frac{q(\theta',\theta)}{q(\theta,\theta')}\frac{\eta(\theta')}{\eta(\theta)}\mathfrak{\hat{L}}(\theta,\theta').\label{eq: noisy acceptance ratio}
\end{equation}
Otherwise return $(\theta,x_{1:T})$. 
\end{algorithm}

It can be shown that this algorithm is reversible with respect to
$\pi(\theta,x_{1:T})$ for any $K\geq0$; see \citet{Neal_2004} and
\citet{Karagiannis_and_Andrieu_2013} for details. An important point
here is that although the approximated acceptance ratio is reminiscent
of that of a MH algorithm targeting $\pi(\theta)$, the present algorithm
targets the joint density $\pi(\theta,x_{1:T})$: the simplification
occurs only because the random variable corresponding to $\mathbf{u}_{K}$
will be approximately distributed according to $\pi_{\theta'}(\cdot)$
when $K$ is large enough, under proper mixing conditions. When $K=0$
this transition leads to a reducible algorithm since $x_{1:T}$ is
not updated. However this scheme can be used as part of a Metropolis-within-Gibbs
where $x_{1:T}$ is updated conditional upon the parameter using,
say, $R_{\theta}(\cdot,\cdot)$. We will refer to the latter algorithm
for which $R_{\theta}$ is a cSMC with backward sampling as Metropolis-within-Particle-Gibbs
(MwPG) in the rest of the paper. 
\begin{rem}
In the scenario where a cSMC procedure involving $N$ particles is
used, the algorithm above may seem wasteful as only one particle is
used in order to approximate the likelihood ratio $\mathfrak{L}(\theta,\theta')$
in \eqref{eq: noisy acceptance ratio}. Ideally one would want to
use $M>1$ particles and average $M$ likelihood ratio estimators
in order to reduce variability and improve the properties of the algorithm.
Using this averaged estimator of the likelihood ratio in Algorithm
\ref{alg: MCMC AIS for HMM} would, however, lead to a Markov kernel
which does preserve $\pi(\theta,x_{1:T})$ as an invariant density.
A novel methodology allowing the use of such averaged estimators within
MCMC has been developed in \citet{andrieu2016alternativepseudo}. 
\end{rem}

\section{A theoretical analysis}

In this section we develop an analysis of the likelihood ratio estimator
and of the MCMC AIS algorithm in a scenario which can be treated rigorously
in a few pages, but yet is of practical interest\textendash in particular
our findings are supported empirically by the simulations of Section
\ref{sec: Numerical examples}, where more general scenarios are considered,
and shed some light on some of our empirical results. Extension to
more general scenarios is however far beyond the scope of the present
manuscript. We consider the scenario where for any $\theta\in\Theta$,
$f_{\theta}(x_{t-1},x_{t})$ is independent of $x_{t-1}$, that is
for any $T\geq1$ 
\[
p_{\theta}\big(x_{1:T},y_{1:T}\big)={\displaystyle \prod\limits _{t=1}^{T}p_{\theta}\big(x_{t},y_{t}}\big),
\]
with 
\[
p_{\theta}\big(x_{t},y_{t}\big):=f_{\theta}\big(x_{t}\big)g_{\theta}\big(y_{t}\mid x_{t}\big).
\]
We define the conditional distributions $\{\pi_{\theta,T}(x_{1:T};\omega)\propto p_{\theta}\big(x_{1:T},y_{1:T}\big),T\geq1\}$
where $\omega:=\left\{ y_{t},t\geq1\right\} \subset\mathsf{Y}^{\mathbb{N}}$.
We further assume that the marginal MH algorithm underpinning our
update is a random walk Metropolis (RWM) algorithm and that $K=1$.
Our aim is to show that as $T\rightarrow\infty$ the algorithm does
not degenerate, in a sense to be made more precise below, provided
the RWM proposal distribution is properly scaled with $T$ and $N_{T}$
sufficiently large, where $N_{T}$ is the number of particles used
in the cSMC. In particular $N_{T}$ is not required to grow with $T$,
as observed in simulations\textendash see Theorem \ref{thm:main}
for a precise formulation of our result. This should be contrasted
with results from the simulated likelihood literature where the condition
$\sqrt{T}/N_{T}=o\left(1\right)$ is necessary to ensure asymptotic
efficiency of the maximum simulated likelihood estimator \citep{flury2011bayesian,lee-lf-1992}
. We now introduce some notation useful in order to formulate and
prove our result. The intermediate distribution is defined as 
\[
\gamma_{\theta,\theta',1}\big(x_{1:T}\big):=p_{(\theta+\theta')/2}\big(x_{1:T},y_{1:T}\big)\,;
\]
it will be clear from our proof that this is in no way a restriction
but has the advantage of keeping our development as simple as possible.
To define our RWM we require an increment proposal distribution based
on a symmetric increment distribution $q_{0}(\cdot)$ (independent
of $T$) and such that $q_{T}(\theta,\theta'):=\sqrt{T}q_{0}\big(\sqrt{T}(\theta-\theta')\big)$.
It will be convenient in what follows to define a proposed sample
in the following way: for any $(\theta,\epsilon)\in\Theta\times\Xi$
($\epsilon$ will be distributed according to $q_{0}(\cdot)$) we
let 
\[
\theta^{\prime}(\epsilon,T):=\theta+\frac{\epsilon}{\sqrt{T}}\;\text{and}\;\mbox{\ensuremath{\tilde{\theta}}}(\epsilon,T):=\frac{\theta+\theta'(\epsilon,T)}{2}.
\]
For simplicity of presentation we assume that $\inf_{(\theta,x,y)\in\Theta\times\mathsf{X}\times\mathsf{Y}}\eta(\theta)p_{\theta}\big(x,y\big)>0$.
As a result for any $(\theta,\epsilon)\in\Theta\times\Xi$ and $\omega\in\mathsf{Y}^{\mathbb{N}}$
we let 
\[
r_{T}(\theta,\epsilon;\omega):=\frac{\eta(\theta'(\epsilon,T))p_{\theta^{\prime}(\epsilon,T)}\big(y_{1:T}\big)}{\eta(\theta)p_{\theta}\big(y_{1:T}\big)}
\]
be the marginal acceptance ratio, which is zero whenever $\theta'(\epsilon,T)\notin\Theta$.
For $\xi:=\left\{ (x_{t},x_{t}'),t\geq1\right\} \subset\big(\mathsf{X}\times\mathsf{X}\big)^{\mathbb{N}}$
the acceptance ratio of the MCMC-AIS algorithm can be written as 
\[
\tilde{r}_{T}(\theta,\epsilon;\omega,\xi):=r_{T}(\theta,\epsilon;\omega)\exp\big(\Lambda_{T}\big(\theta,\epsilon;\omega,\xi\big)\big)
\]
where for $\theta,\theta'(\epsilon,T)\in\Theta$, 
\begin{align}
\Lambda_{T}\big(\theta,\epsilon;\omega,\xi\big) & :=\log\frac{p_{\tilde{\theta}(\epsilon,T)}\big(x_{1:T}\mid y_{1:T}\big)}{p_{\theta}\big(x_{1:T}\mid y_{1:T}\big)}+\log\frac{p_{\theta^{\prime}(\epsilon,T)}\big(x_{1:T}^{\prime}\mid y_{1:T}\big)}{p_{\tilde{\theta}(\epsilon,T)}(x_{1:T}^{\prime}\mid y_{1:T})}\label{eq:defLambda}\\
 & =\sum_{t=1}^{T}\left\{ \log\frac{p_{\tilde{\theta}(\epsilon,T)}\big(x_{t}\mid y_{t}\big)}{p_{\theta}\big(x_{t}\mid y_{t}\big)}+\log\frac{p_{\theta^{\prime}(\epsilon,T)}\big(x_{t}^{\prime}\mid y_{t}\big)}{p_{\tilde{\theta}(\epsilon,T)}\big(x_{t}^{\prime}\mid y_{t}\big)}\right\} .\nonumber 
\end{align}
In order to limit the amount of notation we will not distinguish between
random variables and their realisations using small/capital letters
whenever Greek letters are used. For any $(\theta,y)\in\Theta\times\mathsf{Y}$
and $N\geq1$ we let $R_{\theta,y}^{[N]}:\mathsf{X}\times\mathcal{X}\rightarrow[0,1]$
denote an MCMC kernel targeting the probability distribution of density
$p_{\theta}\big(\cdot\mid y\big)$ using a tuning parameter $N$ governing
its ergodicity properties: we have here in mind a conditional SMC
using $N$ particles, but this will not be a requirement (one could
iterate a given ergodic and reversible kernel $N$ times for example).
Now for any $\omega\in\mathsf{Y}^{\mathbb{N}}$ and $T\geq1$ we define
the process $\xi_{T}:=\left\{ (X_{t},X_{t}'),t\geq1\right\} $ as
a sequence of independent random vectors with marginal laws given
by $\mathbb{P}_{\theta,\epsilon,T}^{\omega}\big((X_{t},X_{t'})\in A\big):=\int_{A}p_{\theta}\big({\rm d}x\mid y_{t}\big)R_{\tilde{\theta}(\epsilon,T),y_{t}}^{[N_{T}]}\big(x,{\rm d}x'\big)$\textendash we
omit the dependence of $(X_{t},X_{t}')$ on $T$ (and $\epsilon$)
for notational simplicity, may write $\xi$ for $\xi_{T}$ when no
ambiguity is possible, but we should bear in mind that we will deal
with triangular arrays of random variables in what follows. We let
$\mathbb{P}_{\theta,\epsilon,T}^{\omega}(\cdot),\mathbb{E}_{\theta,\epsilon,T}^{\omega}(\cdot)$,
$\mathbb{C}_{\theta,\epsilon,T}^{\omega}(\cdot,\cdot)$ and $\mathbb{V}_{\theta,\epsilon,T}^{\omega}(\cdot)$
be the probability, expectation covariance and variance of the process
$\xi$ conditional upon a realisation of $\omega\in\mathsf{Y}^{\mathbb{N}}$\textendash we
may drop $\epsilon,T$ when unnecessary e.g. when considering events
involving $\left\{ X_{t},t\geq1\right\} $ only. Further we consider
$\big\{ Y_{t},t\geq1\big\}$ a sequence of independent and identically
distributed random variables taking their values in $\mathsf{Y}$
(and $\sigma-$algebra $\mathcal{Y}$) and we denote the corresponding
probability distribution $P$. Let $\mathcal{N}\left(\mu,\Sigma\right)$
denote the normal distribution of mean $\mu$ and covariance $\Sigma$.
In essence we show that $P-$a.s., for any suitable $(\theta,\epsilon)\in\Theta\times\Xi$
and an independent sequence $\big\{\xi_{\tau},\tau\in\mathbb{N}\big\}$
where $\xi_{\tau}\sim\mathbb{P}_{\theta,\epsilon,\tau}^{\omega}$
we have that the law of $\Lambda_{T}\big(\theta,\epsilon;\omega,\xi_{T}\big)$
can be approximated to arbitrary precision by $\mathcal{N}\big(-\sigma^{2}(\theta,\epsilon)/2,\sigma^{2}(\theta,\epsilon)\big)$
(for some constant $\sigma^{2}(\theta,\epsilon)<\infty$ independent
of $\omega$) for $T\geq T_{0}$ and $N_{T}\geq N_{0}$ where $N_{0},T_{0}\in\mathbb{N}$
are sufficiently large. In particular $N_{T}$ is not required to
grow with $T$. This suggests that at equilibrium and for sufficiently
large $T$ and $N$ our algorithm behaves similarly to the penalty
method \citep{ceperley1999penalty} with acceptance probability 
\begin{equation}
\min\big\{1,r_{T}(\theta,\epsilon;\omega)\exp\big(Z\big)\big\}\label{eq:penalty-like-acceptance}
\end{equation}
with $Z\mid(\theta,\epsilon,\omega)\sim\mathcal{N}\left(-\varsigma_{T}^{2}(\theta,\epsilon)/2,\varsigma_{T}^{2}(\theta,\epsilon)\right)$
for some sequence $\varsigma_{T}^{2}(\theta,\epsilon)\rightarrow\sigma^{2}(\theta,\epsilon)$
as $T$ increases, although in our scenario the Markov chain considered
consists of both the parameter $\theta$ and the states $x_{1:T}$,
not just the parameter as for the method presented in \citet{delagiannidis:doucet:pitt:2015}.
As a result, if the marginal algorithm scales with $T$ we see that
our algorithm also scales, and only incurs a penalty independent of
$T$. This is the case under the general conditions of \citet[Lemma 19.31]{vdV00}
and ideas of \citet[Lemma 2.1]{kleijn2012} as a local asymptotic
normality in the misspecified scenario can be applied and leads to
the expansion, with $\dot{\ell}_{\theta}(y):=\partial_{\theta}\log p_{\theta}\left(y\right)$,
$\Theta\subset\mathbb{R}$ and some constant $V(\theta)>0$

\[
\log\frac{l_{\theta^{\prime}(\epsilon,T)}\left(Y_{1:T}\right)}{l_{\theta}\left(Y_{1:T}\right)}=\frac{\epsilon}{\sqrt{T}}\sum_{i=1}^{T}\dot{\ell}_{\theta}\big(Y_{i}\big)-\frac{1}{2}\epsilon^{2}V(\theta)+o_{P}(1),
\]
which together with a continuity assumptions on the prior density
$\eta(\theta)$ suggests again a central limit theorem, and hence
the fact that the acceptance ratio converges to a log-normal random
variable independent of $T$ . We do not focus on this latter problem,
but establish that our algorithms behaves similarly to the algorithm
with acceptance ratio given in \eqref{eq:penalty-like-acceptance}
as $T$ and $N_{T}$ are sufficiently large, both in terms of expected
acceptance probability and relative mean square jump distance (or
equivalently first order autocorrelation)\textendash see Theorem \ref{thm:main}.

We let $\ell_{\theta}\big(x\mid y\big):=\log p_{\theta}(x\mid y)$,
$\dot{\ell}_{\theta}(x\mid y):=\partial_{\theta}\log p_{\theta}\left(x\mid y\right)$,
$\ddot{\ell}_{\theta}(x\mid y):=\partial_{\theta}^{2}\log p_{\theta}\left(x\mid y\right)$,
$\dddot{\ell}_{\theta}(x\mid y):=\partial_{\theta}^{3}\log p_{\theta}\left(x\mid y\right)$,
and similarly $\ell_{\theta}\big(y\big):=\log p_{\theta}(y)$, $\dot{\ell}_{\theta}(y):=\partial_{\theta}\log p_{\theta}\left(y\right)$
and $\ddot{\ell}_{\theta}(y):=\partial_{\theta}^{2}\log p_{\theta}\left(y\right)$.
The total variation distance is defined for any probability distributions
$\nu_{1},\nu_{2}$ on $(\mathsf{X},\mathcal{X})$ as $\|\nu_{1}-\nu_{2}\|_{tv}:=\frac{1}{2}\sup_{f:\mathsf{X}\rightarrow[-1,1]}[\nu_{1}(f)-\nu_{2}(f)]$.
We require the following assumptions for our analysis.

\begin{condition}\label{hyp:strong-likelihood-cond} 
\begin{enumerate}
\item \label{enu:-ThetaConvex}$\Theta\subset\mathbb{R}$ and $\Xi\subset\mathbb{R}$
are compact sets, $\Theta$ is convex, $\mathsf{X}\subset\mathbb{R}^{d_{x}}$
and $\mathsf{Y}\subset\mathbb{R}^{d_{y}}$ for some $d_{x},d_{y}\in\mathbb{N}$. 
\item $q_{0}(\cdot)$ is a symmetric probability distribution, bounded away
from zero. 
\item \label{enu:regularityell}$\inf_{(\theta,x,y)\in\Theta\times\mathsf{X}\times\mathsf{Y}}p_{\theta}\big(x,y\big)>0$
and for any $x,y\in\mathsf{X}\times\mathsf{Y}$, $\theta\mapsto\ell_{\theta}(x,y)$
is three times differentiable with 
\[
\bar{\ell}^{(1)}:=\underset{\left(\theta,x,y\right)\in\Theta\times\mathsf{X}\times\mathsf{Y}}{\sup}\big|\dot{\ell}_{\theta}\big(x\mid y\big)\big|<\infty,\quad\bar{\ell}^{(2)}:=\underset{\left(\theta,x,y\right)\in\Theta\times\mathsf{X}\times\mathsf{Y}}{\sup}\big|\ddot{\ell}_{\theta}\big(x\mid y\big)\big|<\infty
\]
and 
\[
\bar{\ell}^{(3)}:=\underset{\left(\theta,x,y\right)\in\Theta\times\mathsf{X}\times\mathsf{Y}}{\sup}\left\vert \dddot{\ell}_{\theta}\big(x\mid y\big)\right\vert <\infty,
\]
\item $\theta,x,y\mapsto\dot{\ell}_{\theta}\big(x\mid y\big),\ddot{\ell}_{\theta}\left(x\mid y\right)$
and $\dddot{\ell}_{\theta}\big(x\mid y\big)$ are measurable, 
\item \label{enu:variancepositiveforallthetas}for all $\theta\in\Theta$
and $\omega\in\mathsf{Y}^{\mathbb{N}}$, $\mathbb{E}_{\theta}^{\omega}\left[\dot{\ell}_{\theta}(X_{1}\mid y_{1})\right]=0$,
$\mathbb{V}_{\theta}^{\omega}\left[\dot{\ell}_{\theta}(X_{1}\mid y_{1})\right]=-\mathbb{E}_{\theta}^{\omega}\left[\ddot{\ell}_{\theta}(X_{1}\mid y_{1})\right]$
and $\inf_{(\theta,y_{1})\in\Theta\times\mathsf{Y}}\mathbb{V}_{\theta}^{\omega}\left[\dot{\ell}_{\theta}(X_{1}\mid y_{1})\right]>0$, 
\item \label{enu:propertieskernel}$R_{\theta,y}^{[N]}$ is a $p_{\theta}(\cdot\mid y)-$reversible
Markov transition probability and 
\[
\lim_{N\rightarrow\infty}\sup_{(\theta,x,y)\in\Theta\times\mathsf{X}\times\mathsf{Y}}\|R_{\theta,y}^{[N]}\big(x,\cdot\big)-p_{\theta}(\cdot\mid y)\|_{tv}=0.
\]
\end{enumerate}
\end{condition}

Some of these conditions are restrictive in the sense that the required
uniformity in $\theta,\omega,\xi$, exploited here to keep the proof
short, implicitly imposes boundedness of these variables; we discuss
this in more detail in subsection \ref{subsec:DiscussionAssumptions}
and explain how our results can be extended to more general scenarios
without changing our proof strategy and the nature of the result,
but at the expense of significant additional technical complications.

For $\omega\in\mathsf{Y}^{\mathbb{N}}$ we let $\mathbb{E}_{T}^{\omega}(\cdot)$
be the expectation such that for any measurable function $f:\Theta\times\Xi\times\mathsf{X}^{\mathbb{N}}\rightarrow\mathbb{R}$
\[
\mathbb{E}_{T}^{\omega}\big[f(\theta,\epsilon,\xi)\big]=\int\mathbb{E}_{\theta,\epsilon,T}^{\omega}\big[f(\theta,\epsilon,\xi)\big]q_{0}({\rm d}\epsilon)\pi_{_{T}}({\rm d}(\theta,x_{1:T});\omega)R_{\tilde{\theta}(\epsilon,T),\omega,T}^{[N_{T}]}(x_{1:T},{\rm d}x'_{1:T})
\]
where $R_{\theta,\omega,T}^{[N]}(x_{1:T},\cdot):=\prod_{t=1}^{T}R_{\theta,y_{t}}^{[N]}\big(x_{t},\cdot\big)$.
Finally for $f:\Theta\times\Xi\times\mathsf{X}^{\mathbb{N}}\times\mathsf{Y}^{\mathbb{N}}\rightarrow\mathbb{R}$
we define 
\[
\mathbb{E}_{T}\big[f(\theta,\epsilon,\xi,\omega)\big]:=\int\mathbb{E}_{T}^{\omega}\big[f(\theta,\epsilon,\xi,\omega)\big]P({\rm d}\omega)
\]
and for $f:\Theta\times\mathsf{X}\times\mathsf{Y}\rightarrow\mathbb{R}$
\[
\mathbb{E}_{\theta}\big[f(\theta,X_{1},Y_{1})\big]:=\int\mathbb{E}_{\theta,\epsilon,T}^{\omega}\big[f(\theta,X_{1},Y_{1})\big]P({\rm d}\omega).
\]
We establish the following result. 
\begin{thm}
\label{thm:main}Assume (A\ref{hyp:strong-likelihood-cond}). Then
$P-$a.s., for any $\varepsilon_{0}>0$ there exist $T_{0},N_{0}\in\mathbb{N}$
such that for any $T\geq T_{0}$ and any sequence $\big\{ N_{T}\big\}\in\mathbb{N}^{\mathbb{N}}$
such that $N_{T}\geq N_{0}$ for $T\geq T_{0}$ 
\[
\sup_{T\geq T_{0}}\left|\mathbb{E}_{T}^{\omega}\left[\min\{1,\tilde{r}_{T}(\theta,\epsilon;\omega,\xi)\}\right]-\check{\mathbb{E}}_{T}^{\omega}\left[\min\{1,r_{T}(\theta,\epsilon;\omega)\exp(Z)\}\right]\right|\leq\varepsilon_{0},
\]
and 
\[
\sup_{T\geq T_{0}}\left|\mathbb{E}_{T}^{\omega}\left[\min\{1,\tilde{r}_{T}(\theta,\epsilon;\omega,\xi)\}\epsilon^{2}\right]-\check{\mathbb{E}}_{T}^{\omega}\left[\min\{1,r_{T}(\theta,\epsilon;\omega)\exp(Z)\}\epsilon^{2}\right]\right|\leq\varepsilon_{0}
\]
where $\check{\mathbb{E}}_{T}^{\omega}[f(\theta,\epsilon,Z)]:=\mathbb{E}_{T}^{\omega}\big[\check{\mathbb{E}}_{\theta,\epsilon}^{\omega}[f(\theta,\epsilon,Z)]\big]$
with, for $(\theta,\epsilon,\omega)\in\Theta\times\Xi\times\mathsf{Y}^{\mathbb{N}}$,
$\check{\mathbb{E}}_{\theta,\epsilon}^{\omega}[\cdot]$ the conditional
expectation of 
\[
Z\mid(\theta,\epsilon,\omega)\sim\mathcal{N}\left(-\frac{\varsigma_{T}^{2}(\theta,\epsilon)}{2},\varsigma_{T}^{2}(\theta,\epsilon)\right)
\]
where $\varsigma_{T}^{2}(\theta,\epsilon):=\sigma^{2}(\tilde{\theta}(\epsilon,T),\epsilon)$
with 
\[
\sigma^{2}(\theta,\epsilon):=\frac{-\epsilon^{2}}{2}\mathbb{E}_{\theta}\left[\ddot{\ell}_{\theta}(X_{1}\mid Y_{1})\right].
\]
\end{thm}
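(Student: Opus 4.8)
The plan is to reduce the claim to a uniform Gaussian approximation for the scalar $\Lambda_{T}(\theta,\epsilon;\omega,\xi_{T})$. Note first that under $\mathbb{E}_{T}^{\omega}$ the acceptance ratio $r_{T}(\theta,\epsilon;\omega)$ depends only on $(\theta,\epsilon,\omega)$, while the conditional law of $\xi=(x_{1:T},x'_{1:T})$ given $(\theta,\epsilon,\omega)$ is exactly $\mathbb{P}_{\theta,\epsilon,T}^{\omega}$ (here one uses that, $f_{\theta}$ being independent of $x_{t-1}$, $\pi_{T}$ conditioned on $\theta$ factorises as $\prod_{t}p_{\theta}({\rm d}x_{t}\mid y_{t})$). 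Since for every $r\ge0$ the map $x\mapsto\min\{1,r\exp(x)\}$ is non-decreasing from $0$ to $1$, hence has total variation $1$ and is bounded by $1$, both displayed differences are at most
\[
\int\Bigl(\sup_{s\in\mathbb{R}}\bigl|\mathbb{P}_{\theta,\epsilon,T}^{\omega}\bigl(\Lambda_{T}(\theta,\epsilon;\omega,\xi_{T})\le s\bigr)-G_{\theta,\epsilon,T}(s)\bigr|\Bigr)\,c(\theta,\epsilon)\,\pi_{T}({\rm d}\theta;\omega)\,q_{0}({\rm d}\epsilon),
\]
where $G_{\theta,\epsilon,T}$ is the distribution function of $\mathcal{N}(-\varsigma_{T}^{2}(\theta,\epsilon)/2,\varsigma_{T}^{2}(\theta,\epsilon))$, where $\pi_{T}({\rm d}\theta;\omega)$ denotes the $\theta$-marginal of $\pi_{T}(\cdot;\omega)$, and where $c\equiv1$ in the first case and $c(\theta,\epsilon)=\epsilon^{2}\le\sup_{\Xi}\epsilon^{2}<\infty$ in the second. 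It thus suffices to show that the Kolmogorov distance in the integrand tends to $0$ as $T,N_{T}\to\infty$, uniformly in $(\theta,\epsilon)\in\Theta\times\Xi$ and for $P$-a.e. $\omega$.

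To that end, Taylor-expand each summand of \eqref{eq:defLambda} to second order in the parameter, around $\theta$ for the first term and around $\tilde{\theta}(\epsilon,T)$ for the second, using $\tilde{\theta}(\epsilon,T)-\theta=\theta'(\epsilon,T)-\tilde{\theta}(\epsilon,T)=\epsilon/(2\sqrt{T})$. Writing $h:=\epsilon/(2\sqrt{T})$ this gives
\[
\Lambda_{T}=h\sum_{t=1}^{T}\bigl(\dot{\ell}_{\theta}(X_{t}\mid y_{t})+\dot{\ell}_{\tilde{\theta}}(X_{t}'\mid y_{t})\bigr)+\frac{h^{2}}{2}\sum_{t=1}^{T}\bigl(\ddot{\ell}_{\theta}(X_{t}\mid y_{t})+\ddot{\ell}_{\tilde{\theta}}(X_{t}'\mid y_{t})\bigr)+R_{T},
\]
with $|R_{T}|\le\tfrac{1}{3}|h|^{3}T\,\bar{\ell}^{(3)}=O(T^{-1/2})$ deterministically by (A\ref{hyp:strong-likelihood-cond}.\ref{enu:regularityell}). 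For the quadratic term, a law of large numbers in $t$ (the pairs $(X_{t},X_{t}')$ are independent and the functionals are bounded by $\bar{\ell}^{(2)}$), followed by replacing the $R^{[N_{T}]}_{\tilde{\theta},y_{t}}$-average of $\ddot{\ell}_{\tilde{\theta}}(\cdot\mid y_{t})$ by its $p_{\tilde{\theta}}(\cdot\mid y_{t})$-average at cost $2\bar{\ell}^{(2)}\delta_{N_{T}}$, where $\delta_{N}:=\sup_{\theta,x,y}\|R^{[N]}_{\theta,y}(x,\cdot)-p_{\theta}(\cdot\mid y)\|_{tv}\to0$ by (A\ref{hyp:strong-likelihood-cond}.\ref{enu:propertieskernel}), and finally a uniform (over $\theta\in\Theta$) law of large numbers over the i.i.d. $y_{t}\sim P$ — valid because $\theta\mapsto\int\ddot{\ell}_{\theta}(x\mid y)\,p_{\theta}({\rm d}x\mid y)$ is Lipschitz uniformly in $y$ by the uniform bounds of (A\ref{hyp:strong-likelihood-cond}.\ref{enu:regularityell}) and $\Theta$ is compact — shows that this quadratic term equals $-\varsigma_{T}^{2}(\theta,\epsilon)/2+o(1)$ with $\varsigma_{T}^{2}(\theta,\epsilon)=\sigma^{2}(\tilde{\theta}(\epsilon,T),\epsilon)$, the $o(1)$ being uniform in $(\theta,\epsilon)$ and $P$-a.s.

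It remains to treat the linear term $T^{-1/2}\sum_{t}\zeta_{t,T}$ with $\zeta_{t,T}:=\tfrac{\epsilon}{2}\bigl(\dot{\ell}_{\theta}(X_{t}\mid y_{t})+\dot{\ell}_{\tilde{\theta}}(X_{t}'\mid y_{t})\bigr)$. By (A\ref{hyp:strong-likelihood-cond}.\ref{enu:variancepositiveforallthetas}) one has $\mathbb{E}_{\theta}^{\omega}[\dot{\ell}_{\theta}(X_{t}\mid y_{t})]=0$, and $|\mathbb{E}[\dot{\ell}_{\tilde{\theta}}(X_{t}'\mid y_{t})\mid X_{t}]|\le2\bar{\ell}^{(1)}\delta_{N_{T}}$ (the same identity at $\tilde{\theta}$ together with the total-variation bound), so the array is centred up to $O(\delta_{N_{T}})$ and bounded by $\epsilon\bar{\ell}^{(1)}$. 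The crucial point is that, because $\delta_{N_{T}}\to0$ also renders $X_{t}$ and $X_{t}'$ asymptotically independent (whence the cross-covariance is $O((\bar{\ell}^{(1)})^{2}\delta_{N_{T}})$), the normalised variance $v_{T}(\theta,\epsilon;\omega):=T^{-1}\sum_{t}\mathbb{V}[\zeta_{t,T}\mid y_{t},\omega]$ satisfies, via the information identity $\mathbb{V}_{\theta}^{\omega}[\dot{\ell}_{\theta}(X_{1}\mid y_{1})]=-\mathbb{E}_{\theta}^{\omega}[\ddot{\ell}_{\theta}(X_{1}\mid y_{1})]$ of (A\ref{hyp:strong-likelihood-cond}.\ref{enu:variancepositiveforallthetas}),
\[
v_{T}(\theta,\epsilon;\omega)=-\frac{\epsilon^{2}}{4T}\sum_{t=1}^{T}\Bigl(\int\ddot{\ell}_{\theta}(x\mid y_{t})\,p_{\theta}({\rm d}x\mid y_{t})+\int\ddot{\ell}_{\tilde{\theta}}(x\mid y_{t})\,p_{\tilde{\theta}}({\rm d}x\mid y_{t})\Bigr)+o(1)=\varsigma_{T}^{2}(\theta,\epsilon)+o(1),
\]
i.e. it is exactly minus twice the limit of the quadratic term, and it is bounded away from $0$ by the last clause of (A\ref{hyp:strong-likelihood-cond}.\ref{enu:variancepositiveforallthetas}). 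A Berry--Esseen bound for triangular arrays of independent, uniformly bounded, variance-bounded-below summands then gives, with $\Phi$ the standard normal distribution function, $\sup_{s}|\mathbb{P}(T^{-1/2}\sum_{t}\zeta_{t,T}\le s\mid\omega)-\Phi(s/\sqrt{v_{T}})|=O(T^{-1/2})+O(\delta_{N_{T}})$ uniformly in $(\theta,\epsilon)$; combining this with the deterministic $R_{T}=O(T^{-1/2})$, the concentration of the quadratic term around $-\varsigma_{T}^{2}/2$ (a Slutsky-type argument, using that $G_{\theta,\epsilon,T}$ is Lipschitz since $\varsigma_{T}^{2}$ is bounded below), and $v_{T}=\varsigma_{T}^{2}+o(1)$ yields the required uniform Kolmogorov bound. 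The main obstacle is to make the three sources of error — the Monte Carlo fluctuations in $(X_{t},X_{t}')$, the sampling fluctuations of $y_{t}\sim P$, and the finite-$N_{T}$ non-stationarity of $R^{[N_{T}]}$ — vanish simultaneously and uniformly in $(\theta,\epsilon)$; this is precisely where the uniform derivative bounds, the uniform ergodicity (A\ref{hyp:strong-likelihood-cond}.\ref{enu:propertieskernel}), the compactness of $\Theta\times\Xi$, and the strict positivity of the limiting variance in (A\ref{hyp:strong-likelihood-cond}.\ref{enu:variancepositiveforallthetas}) are all used, while the $P$-a.s. qualifier is intrinsic because the $y_{t}$-averages converge only almost surely.
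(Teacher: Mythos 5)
Your overall strategy mirrors the paper's (Taylor expansion of $\Lambda_{T}$, reduction of the difference of expected acceptance probabilities to a Kolmogorov distance through the monotonicity of $x\mapsto\min\{1,re^{x}\}$, which is the content of Lemma \ref{lem:EAPinCDF} and Proposition \ref{prop:penaltyapproxEAP}, a Lyapunov/Berry--Esseen CLT for the linear term as in Lemma \ref{lem:UCLT}, and a uniform LLN over the observations for the $P$-a.s.\ statements as in Lemma \ref{lem:ucvbarS2}). However, there is one genuine gap: your centring of the linear term. You bound $\big|\mathbb{E}\big[\dot{\ell}_{\tilde{\theta}(\epsilon,T)}(X_{t}'\mid y_{t})\mid X_{t}\big]\big|\leq2\bar{\ell}^{(1)}\delta_{N_{T}}$ and conclude that "the array is centred up to $O(\delta_{N_{T}})$". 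That is a per-summand bound; after summing $T$ terms and multiplying by the prefactor $\epsilon/(2\sqrt{T})$, the bias of $S^{(1)}_{\theta,\epsilon,T}$ is only controlled at order $\sqrt{T}\,\delta_{N_{T}}$, which is not small in the regime of the theorem, where $N_{T}\geq N_{0}$ is allowed to stay \emph{fixed} while $T\rightarrow\infty$ (this is precisely the point of the result). Consequently your claimed Kolmogorov error $O(T^{-1/2})+O(\delta_{N_{T}})$ against a Gaussian with mean $-\varsigma_{T}^{2}(\theta,\epsilon)/2$ is unjustified: an uncontrolled drift of size $\sqrt{T}\,\delta_{N_{0}}$ would shift the distribution by an amount comparable to, or larger than, its standard deviation. (Note the contrast with the quadratic term, where the same naive total-variation bound does suffice because of the $1/T$ prefactor; your treatment there is fine.)

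The missing ingredient is the two-scale argument of Lemmas \ref{lem:basicLipschitzcontinuities} and \ref{lem:limitexpectationinCLT}: under $\mathbb{P}^{\omega}_{\theta,\epsilon,T}$ the variable $X_{t}$ has law $p_{\theta}(\cdot\mid y_{t})$, the kernel $R^{[N_{T}]}_{\tilde{\theta}(\epsilon,T),y_{t}}$ leaves $p_{\tilde{\theta}(\epsilon,T)}(\cdot\mid y_{t})$ invariant, and $\sup_{y}\|p_{\theta}(\cdot\mid y)-p_{\tilde{\theta}(\epsilon,T)}(\cdot\mid y)\|_{tv}\leq\tfrac{1}{2}\bar{\ell}^{(1)}|\epsilon|/(2\sqrt{T})$. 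Writing
\[
\mathbb{E}\big[\dot{\ell}_{\tilde{\theta}(\epsilon,T)}(X_{t}'\mid y_{t})\big]=(p_{\theta}-p_{\tilde{\theta}(\epsilon,T)})R^{[N_{T}]}_{\tilde{\theta}(\epsilon,T),y_{t}}\big(\dot{\ell}_{\tilde{\theta}(\epsilon,T)}-p_{\tilde{\theta}(\epsilon,T)}\dot{\ell}_{\tilde{\theta}(\epsilon,T)}\big)+p_{\tilde{\theta}(\epsilon,T)}\big(\dot{\ell}_{\tilde{\theta}(\epsilon,T)}\big),
\]
the second term vanishes by (A\ref{hyp:strong-likelihood-cond}) and the first is $O\big(\bar{\ell}^{(1)}\delta_{N_{T}}\cdot|\epsilon|/\sqrt{T}\big)$, i.e.\ each summand has unconditional bias $O(\delta_{N_{T}}/\sqrt{T})$, so the total bias $\bar{S}^{(1)}_{\theta,\epsilon,T}(\omega)$ is $O(\delta_{N_{T}})$ uniformly in $T$, $\theta$, $\epsilon$, $\omega$, and can be made smaller than any $\varepsilon_{0}$ by choosing $N_{0}$ large. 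With this correction your argument aligns with the paper's proof (which also keeps the bias as an $\varepsilon_{0}$-term rather than an $o(1)$-term, consistent with the theorem's ``there exist $T_{0},N_{0}$'' formulation); without it, the step from the linear term to the limiting Gaussian fails exactly in the regime the theorem is designed to cover.
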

\begin{rem}
We remark that the (renormalized) expected mean square jump distance
is typically asymptotically proportional to the second quantity considered
above, since 
\[
\frac{\mathbb{E}_{T}^{\omega}\big[\min\{1,\tilde{r}_{T}(\theta,\epsilon;\omega,\xi)\}\,(\theta'(\epsilon,T)-\theta)^{2}\big]}{\mathbb{V}_{T}^{\omega}(\theta)}=\frac{\mathbb{E}_{T}^{\omega}\big[\min\{1,\tilde{r}_{T}(\theta,\epsilon;\omega,\xi)\}\,\epsilon^{2}\big]}{T\mathbb{V}_{T}^{\omega}(\theta)}
\]
and the fact that under standard regularity conditions we expect the
last denominator to converge to a constant. 
\end{rem}

\begin{rem}
One expects the MCMC AIS algorithm to suffer less from the dependence
between the parameter and the latent variables than the MwPG version.
However there is another advantage, observed empirically in simulations,
which can be explained theoretically in the light of our simple analysis.
One notices that in the MwPG scenario, analysis of the acceptance
ratio at equilibrium involves a term similar to the first term in
the expression for $\Lambda_{T}(\theta,\epsilon;\omega,\xi)$ in \eqref{eq:defLambda},
but where $\tilde{\theta}(\epsilon,T)$ is now replaced with $\theta'(\epsilon,T)$.
As a result, for $\theta,\epsilon\in\Theta\times\Xi$, by revisiting
our proof of Theorem \ref{thm:main}, the asymptotic distribution
of the approximating algorithm can be found to be $\mathcal{N}\big(-\sigma^{2}(\theta,\epsilon),2\sigma^{2}(\theta,\epsilon)\big)$
instead of $\mathcal{N}\big(-\sigma^{2}(\theta,\epsilon)/2,\sigma^{2}(\theta,\epsilon)\big)$
since the attempted jump is not of size $\epsilon/(2\sqrt{T})$, but
$\epsilon/\sqrt{T}$. We note that this result does not require $N_{T}$
to have a minimum value, in contrast with the result of Theorem \ref{thm:main},
but it should be clear that the choice of $N_{T}$ will affect the
performance of the algorithm. The MCMC-AIS method requires $N_{T}$
to be sufficiently large in order to ensure that the dependence between
the first and second term involved in \eqref{eq:defLambda} is sufficiently
small. 
\end{rem}

\section{Numerical examples\label{sec: Numerical examples} }

In subsection \ref{sec: Experiments on an i.i.d. model} we illustrate
our theoretical findings on a simple model which in addition lends
itself to a direct comparison of MwPG and MCMC AIS, which correspond
respectively to $K=0$ and $K>0$, and allows us in particular to
assess the effect of the posterior dependence structure on $\theta$
and $x_{1:T}$ on the performance of the algorithm. In subsection
\ref{sec: Experiments on a non-linear state-space model} we compare
the algorithms proposed on a non-linear state-space model and assess
the scalability of the algorithms in terms of the number of data points
$T$.

\subsection{Experiments on an i.i.d.\ model\label{sec: Experiments on an i.i.d. model}}

Let $\mathcal{N}(z;\mu,\sigma^{2})$ denote the probability density
of a normal distribution of mean $\mu$, variance $\sigma^{2}$ and
argument $z$. We consider the simple model for which $f_{\theta}(x_{t-1},x_{t})=f_{\theta}(x_{t})=\mathcal{N}(x_{t};(1-a)\theta,\sigma_{x}^{2})$,
$\mu_{\theta}(x_{1})=f_{\theta}(x_{1})$, $g_{\theta}(x_{t},y_{t})=\mathcal{N}(y_{t};a\theta+x_{t},\sigma_{y}^{2})$
and $\eta(\theta)=\mathcal{N}(\theta;\mu_{\theta},\sigma_{\theta}^{2})$
where $a\in[0,1]$. The marginal posterior distribution $\pi(\theta)$
is invariant to the choice of $a$, but the choice of $a$ is known
to have important consequences on the posterior dependence of $\theta$
and $x_{1:T}$ \citep{gelfand1995efficient}, and hence the mixing
properties of the Gibbs sampler, that is an MCMC algorithm which alternates
sampling from $\pi(\theta\mid x_{1:T})$ and $\pi(x_{1:T}\mid\theta)$.
Indeed, as shown in \citet{Papaspiliopoulos_et_al_2003}, when $\sigma_{y}^{2}/\sigma_{x}^{2}$
is very large the choice $a\approx1$ is best while when $\sigma_{y}^{2}/\sigma_{x}^{2}$
is small the choice $a\approx0$ is preferable. For the experiments
in this section, we generated artificial data using $\sigma_{y}^{2}=0.01$
and $\sigma_{x}^{2}=1$, making $a\approx0$ optimal. We first compared
MCMC AIS cSMC-BS with $K=1$ and MwPG, whose computational complexities
per iteration are comparable provided that the cost of calculating
the acceptance ratio is much less than that of an iteration of the
cSMC-BS. For MCMC AIS cSMC-BS, the intermediate distribution is chosen
to be $\gamma_{\theta,\theta',1}=\gamma_{(\theta+\theta')/2}$ for
all $\theta,\theta'\in\Theta$. The prior variance was chosen to be
$\sigma_{\theta}^{2}=10^{5}$, therefore leading to a posterior variance
for $\theta$, $1/(1/\sigma_{\theta}^{2}+T/(\sigma_{x}^{2}+\sigma_{y}^{2}))\approx1/T$
as long as $\sigma_{x}^{2}+\sigma_{y}^{2}$ is close to $1$, the
proposal variance of the RWM is the variance of the posterior and
the particles in the cSMC routine were sampled from the prior distribution
for $x_{t}$ conditional on $\theta$, that is $M_{\theta}(x_{t-1},x_{t})=\mathcal{N}(x_{t};(1-a)\theta,\sigma_{x}^{2})$.
We first considered the scenario $a=1$, which is expected to be unfavourable
to the MwPG algorithm, and ran both algorithms once for $10^{5}$
iterations and a fine grid of values for $(N,T)$, $T=1,10,100,1000$
and $N=1,\ldots,500$. Estimates of the integrated autocorrelation
(IAC) times and expected acceptance probabilities for all scenarios
are reported in Figure \ref{fig: AIS vs MwPG PNCNHM informative observations N vs T single run}.
Despite the noisy results, a consequence of us considering only one
MCMC run per $(N,T)$ value, one can make the following observations.
As predicted by our theory, both algorithms seem to be largely insensitive
to $T$ for sufficiently large values of $N$, and while MwPG seems
to reach its asymptotic regime for smaller values of $N$, and beat
MCMC AIS cSMC for such values, MCMC AIS cSMC is more responsive to
an increase in $N$ and very rapidly beats MwPG, although not in an
apparently spectacular way.

\begin{figure}
\centerline{\includegraphics[scale=0.6]{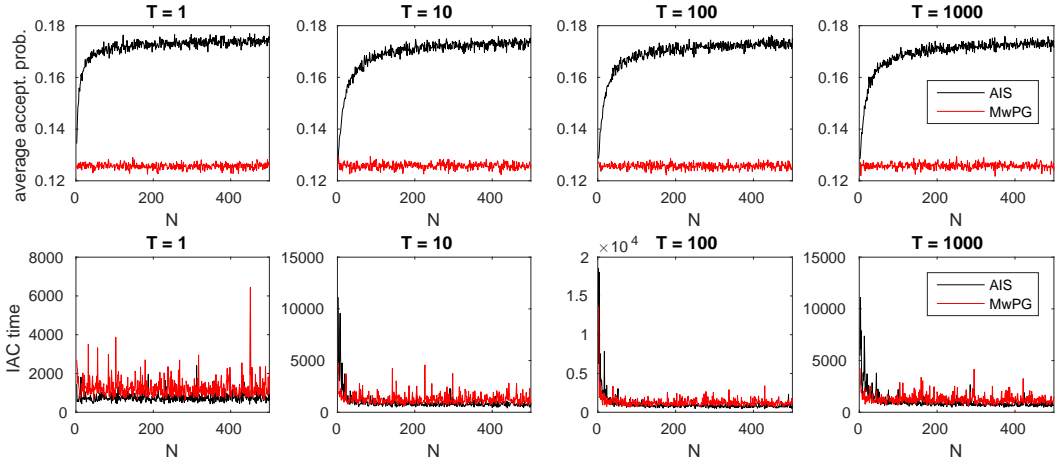}}
\protect\caption{Average rate of acceptance and IAC time for the non-centred parametrisation
of the model with informative observations.}

\label{fig: AIS vs MwPG PNCNHM informative observations N vs T single run} 
\end{figure}

We re-ran these experiments on a coarser grid of values of $(N,T)$,
more precisely all the combinations of $T=10,100,1000,10000$ and
$N=2,25,50,100,200$, but considered this time $200$ runs of the
algorithm for each such combination. The results are reported in Figure
\ref{fig: AIS vs MwPG PNCNHM informative observations N vs T multiple runs ratios included}
where we now also report in addition the ratios (MCMC AIS/MwPG) of
the mean IAC times and mean square jump distances (multiplied by $T$).
We see that the MCMC AIS algorithm is uniformly better in terms of
MSJD, while MwPG seems to be superior for small values of $N$, but
remind reader of the difficulty inherent to the estimation of IAC
and note the presence of a significant number of outliers which indicate
to us that the chains are not mixing well for such a range of values
of $N$. The algorithms' acceptance rates, not shown here, follow
a very similar pattern to that observed for the mean square jumps.

\begin{figure}
\centerline{\includegraphics[scale=0.5]{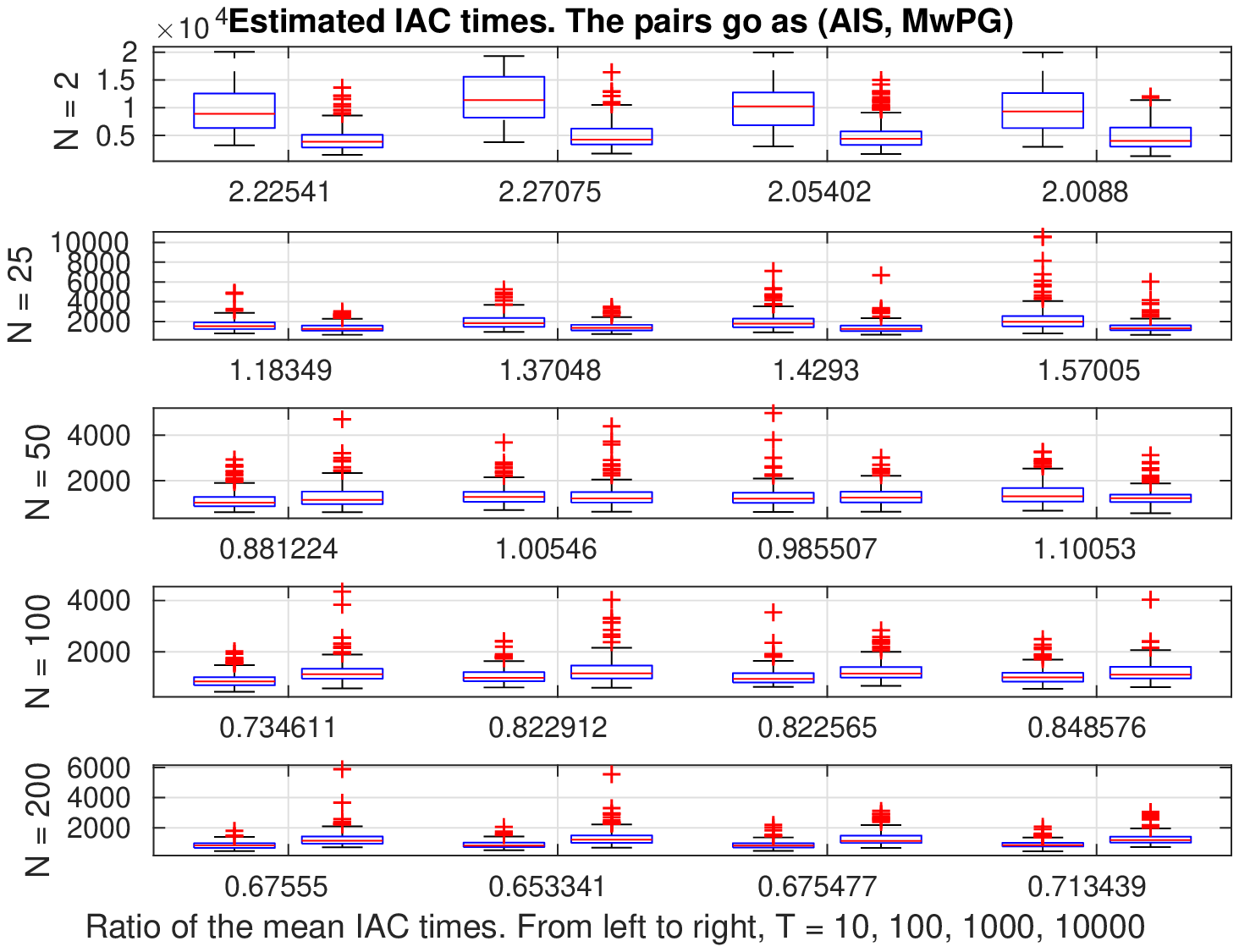}
\hspace{-1cm} \includegraphics[scale=0.5]{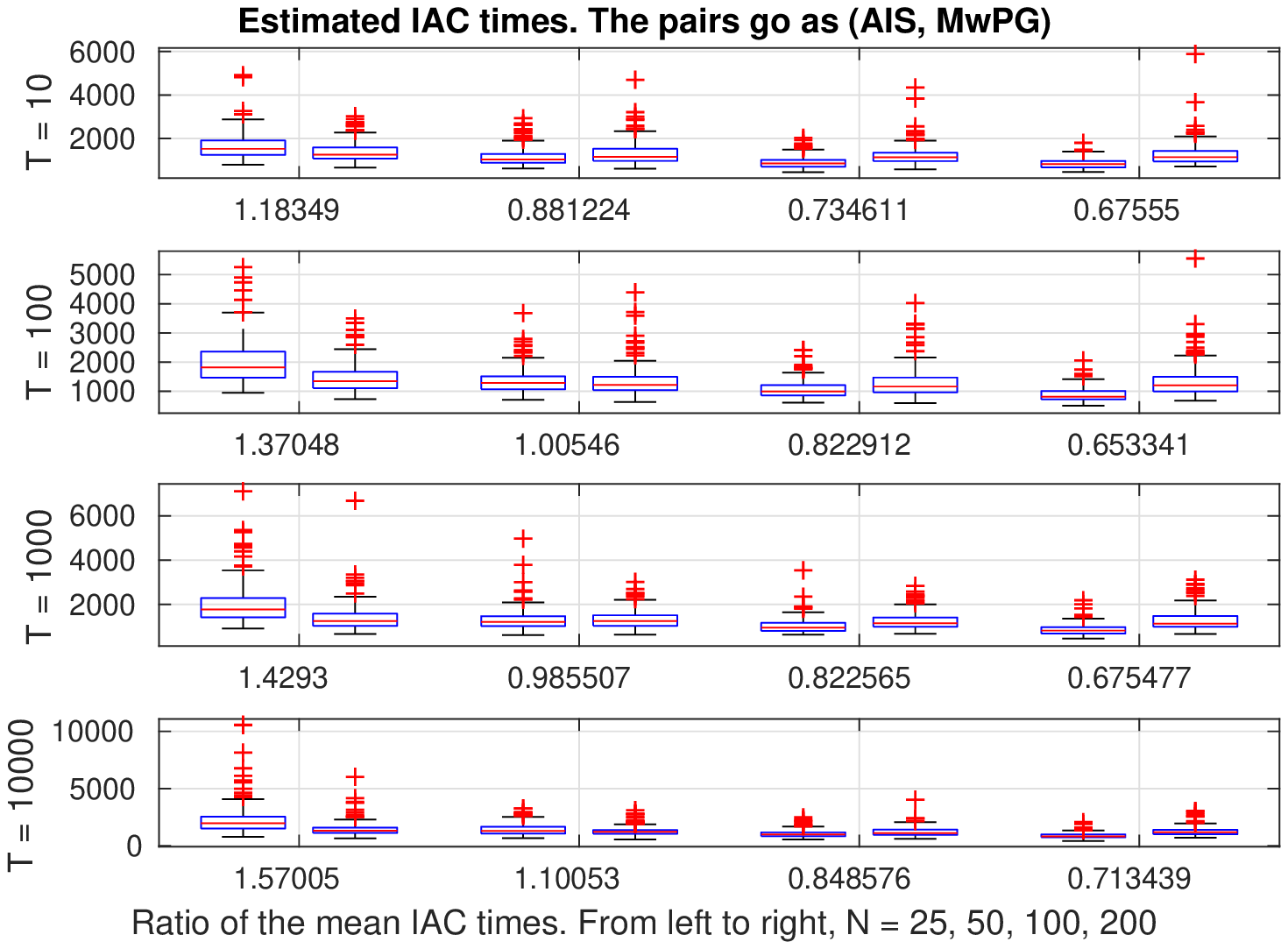}}
\centerline{\includegraphics[scale=0.5]{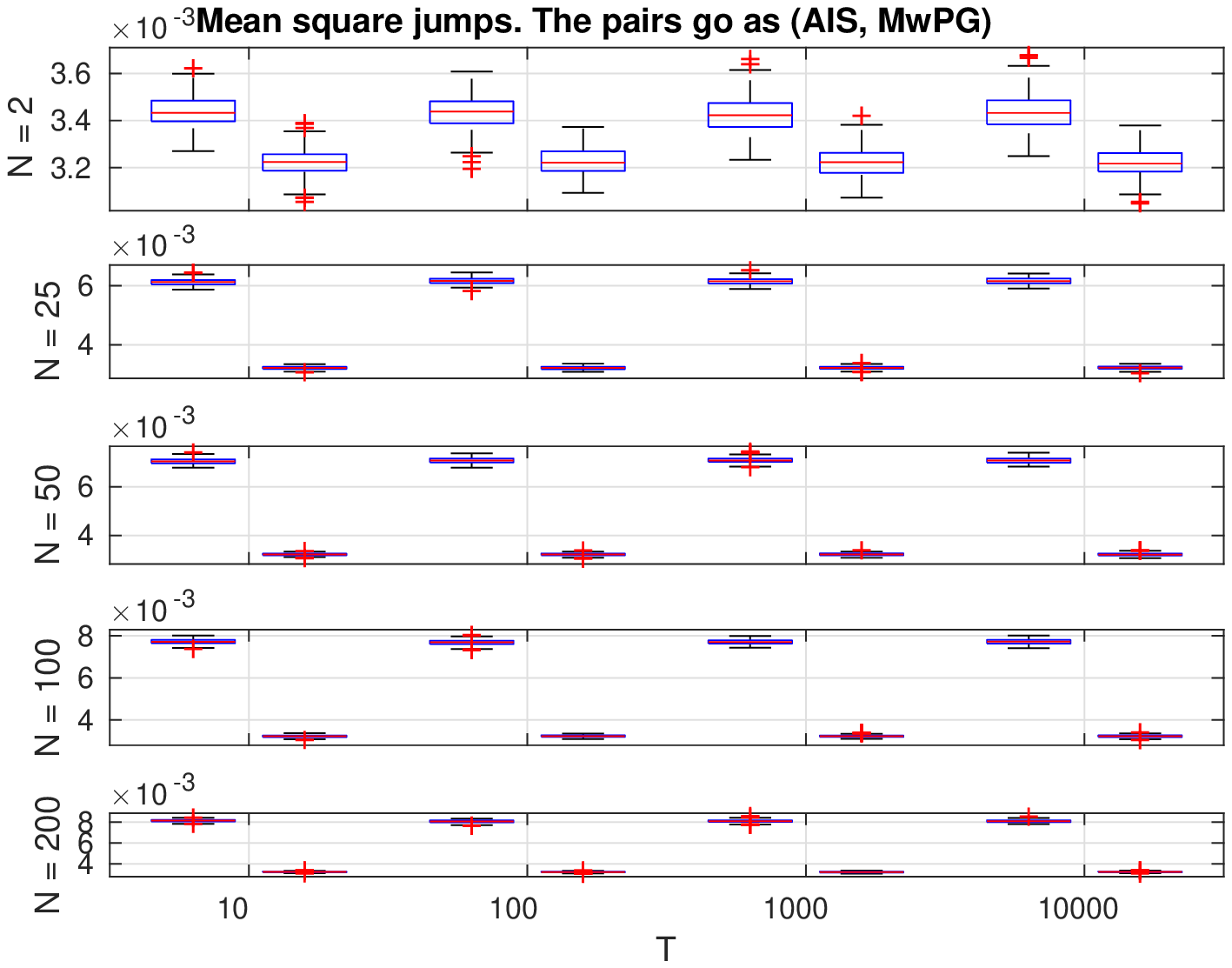}
\hspace{-1cm} \includegraphics[scale=0.5]{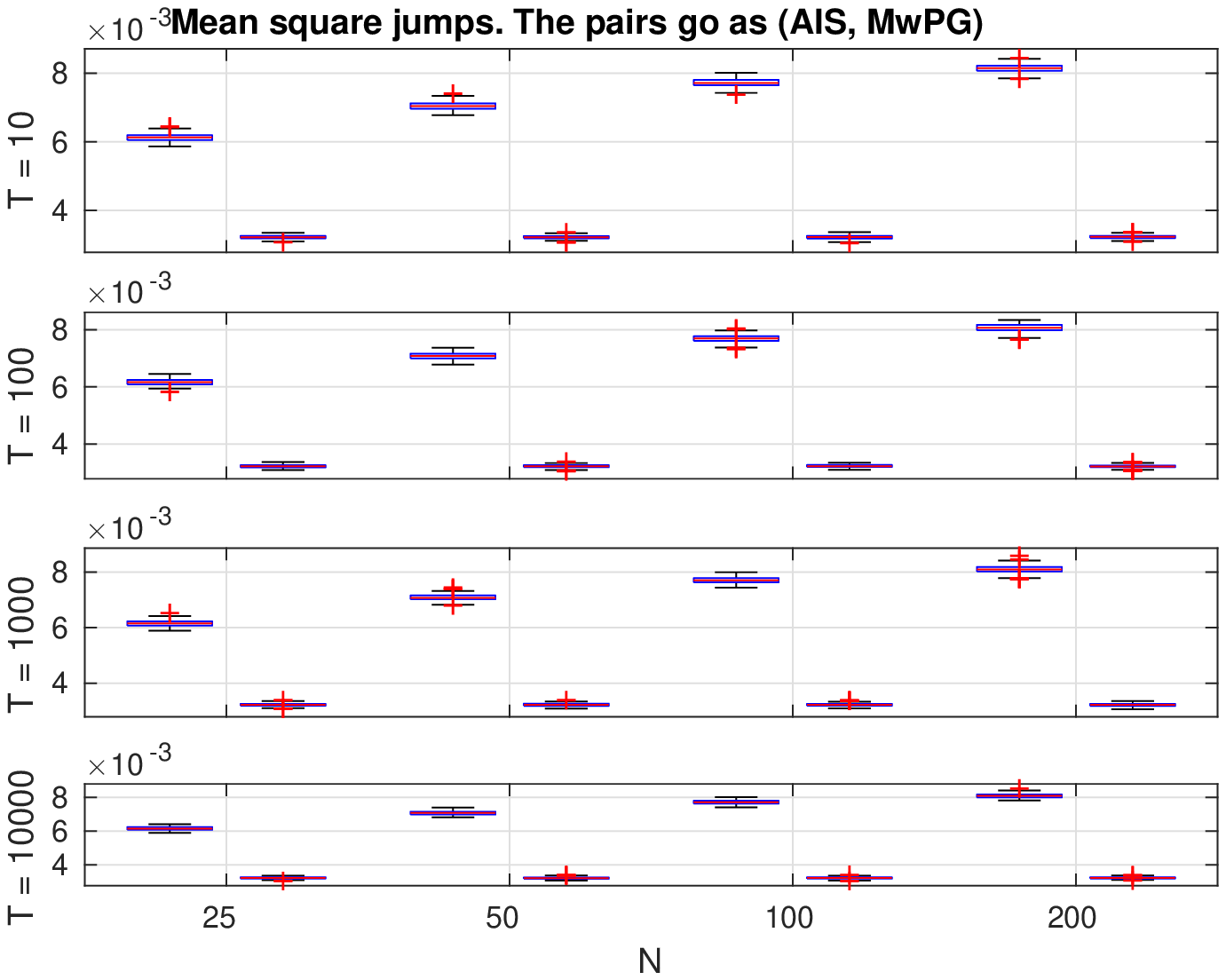}}
\protect\caption{IAC times and mean square jumps (multiplied by $T$) for the hierarchical
model for $a=1$. Ratios of the mean IAC times (MCMC AIS cSMC-BS/MwPG)
are shown on the $x-$axis. On the right hand side plots, results
for $N=2$ are not shown to improve legibility).}

\label{fig: AIS vs MwPG PNCNHM informative observations N vs T multiple runs ratios included} 
\end{figure}

We re-ran these experiments for $a=0.1$, which is more favourable
to the MwPG as this reduces the posterior dependence between $\theta$
and $x_{1:T}$. The results are presented in Figure \ref{fig: AIS vs MwPG PNCNHM informative observations N vs T multiple runs ratios included2}.
We observe that while MCMC AIS remains uniformly superior in terms
of mean square jump distance (MSJD), as expected, the IAC ratios are
now closer to one for large values of $N$, confirming that the wider
gaps observed in our earlier experiments are attributable to the posterior
dependence. This leads us to conclude that MCMC AIS is a more reliable
method than MwPG when this dependence is a priori unknown.

\begin{figure}
\centerline{\includegraphics[scale=0.5]{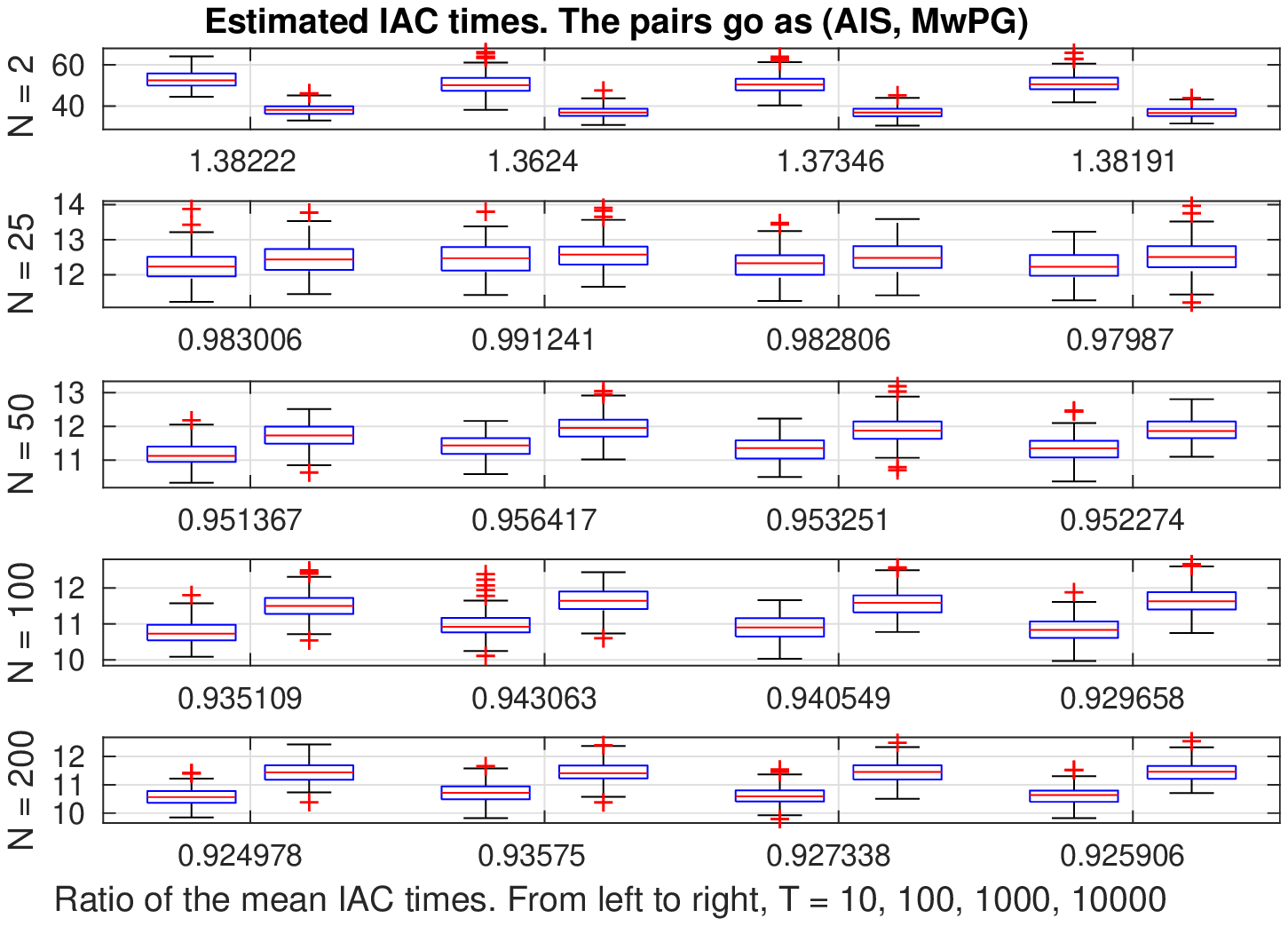}
\hspace{-0.8cm}\includegraphics[scale=0.5]{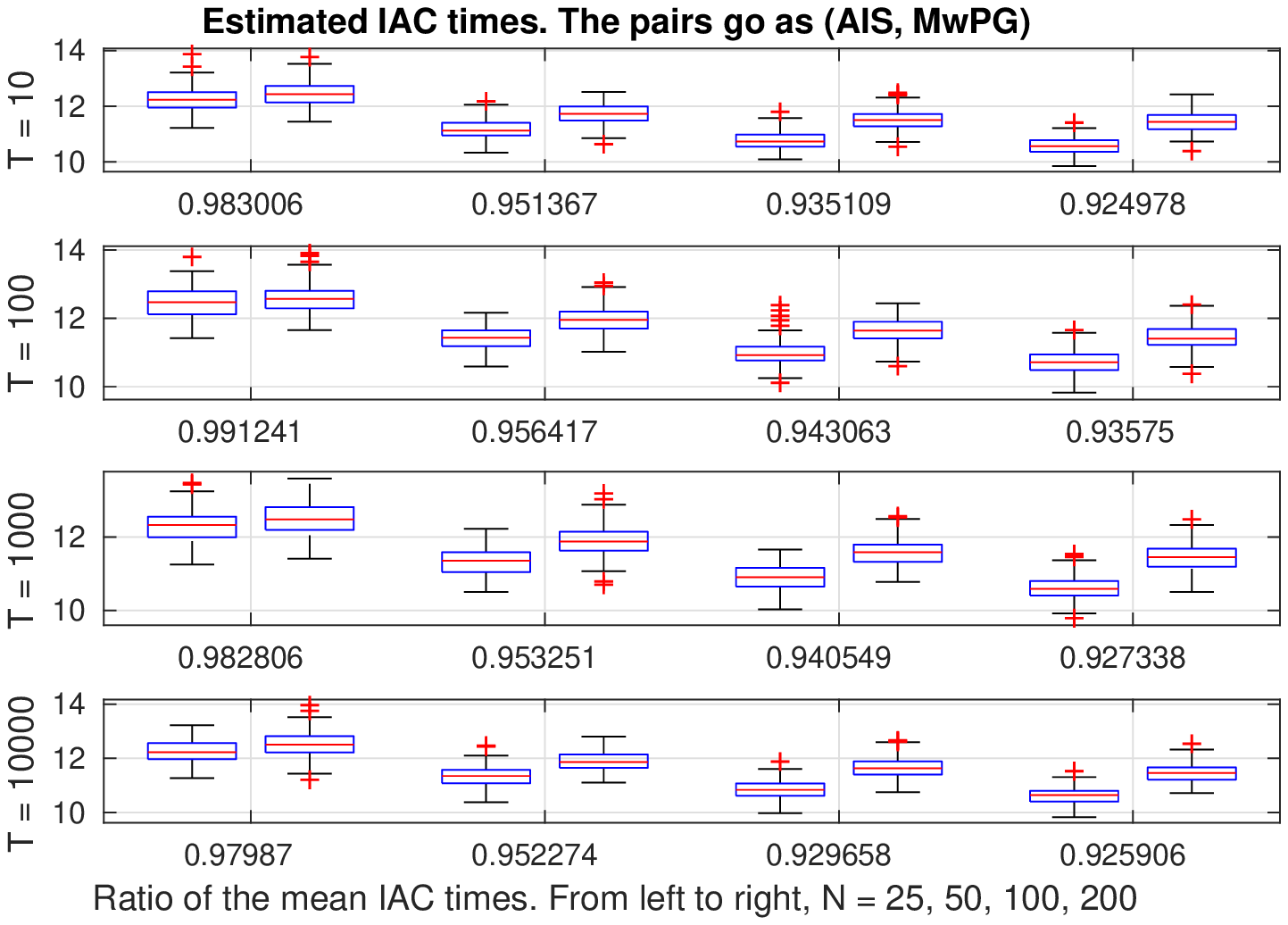}}
\centerline{\includegraphics[scale=0.5]{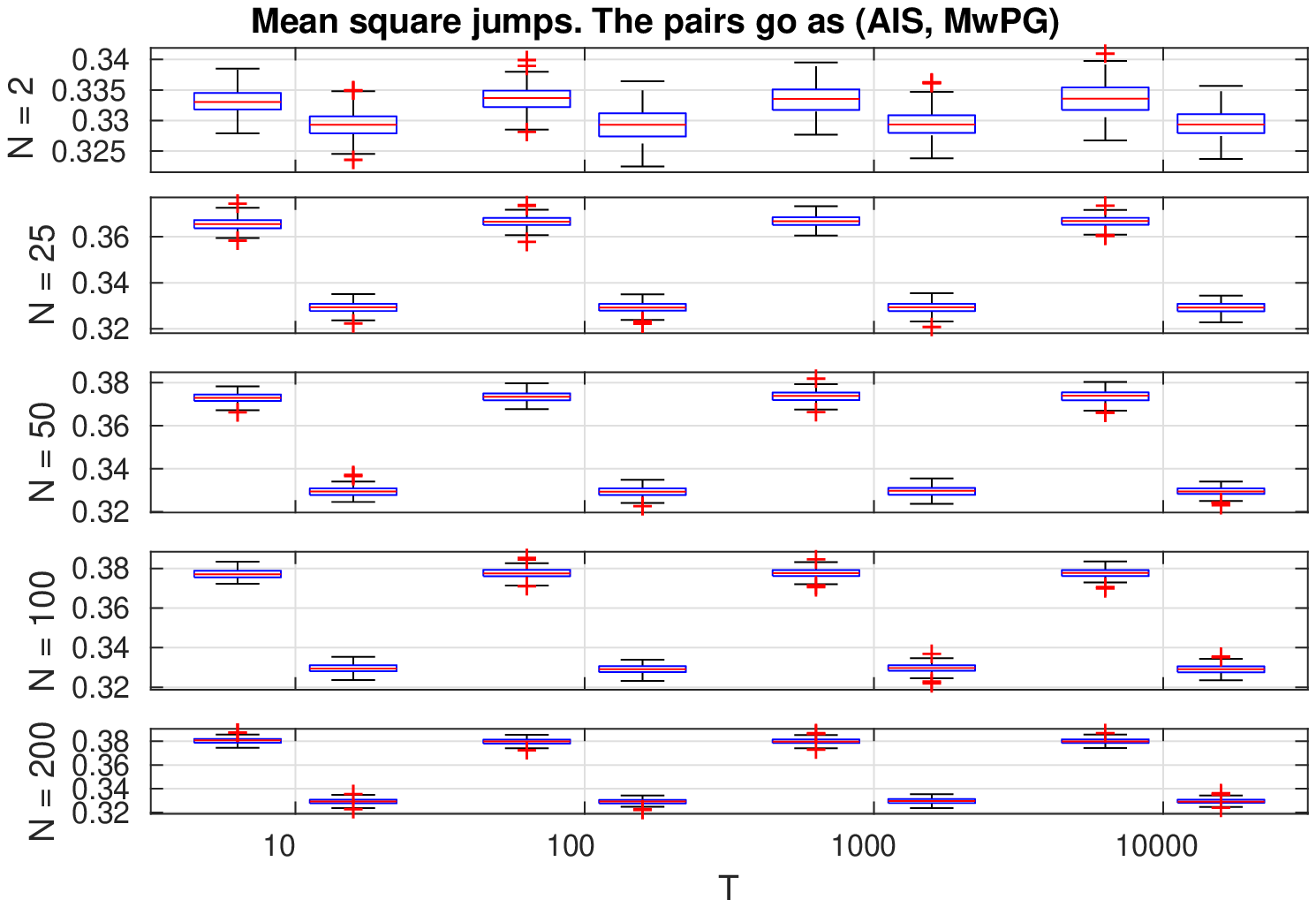}
\hspace{-0.8cm} \includegraphics[scale=0.5]{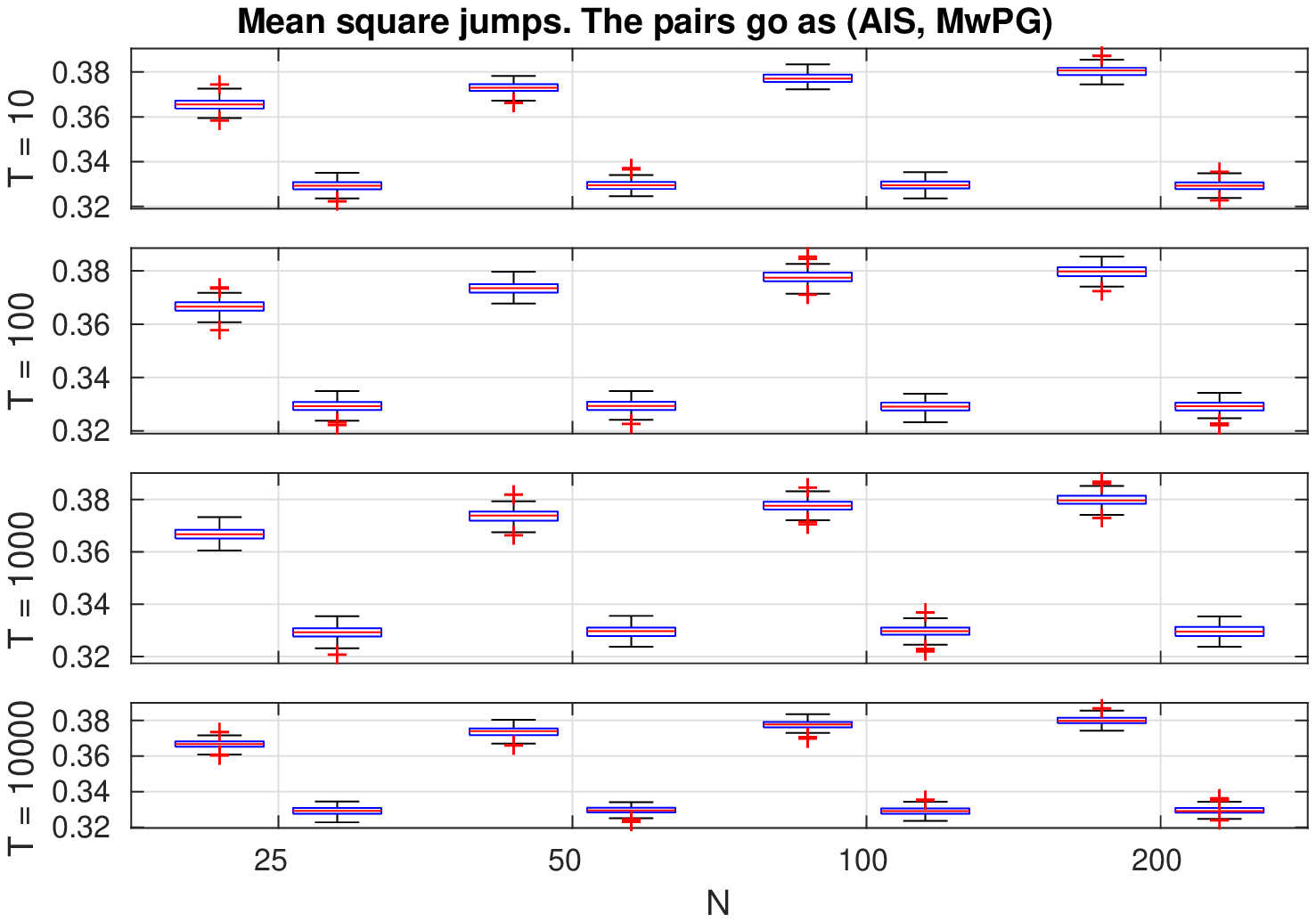}}
\protect\caption{IAC times and mean square jumps (multiplied by $T$) for the hierarchical
model for $a=0.1$. Ratios of the mean IAC times (MCMC AIS cSMC-BS/MwPG)
are shown on the $x-$axis. On the right hand side plots, results
for $N=2$ are not shown to improve readability).}

\label{fig: AIS vs MwPG PNCNHM informative observations N vs T multiple runs ratios included2} 
\end{figure}

\subsection{Experiments on a non-linear state-space model\label{sec: Experiments on a non-linear state-space model} }

We consider now a non-linear SSM often used in the literature to compare
the performance of SMC methods for which $f_{\theta}(x_{t-1},x_{t})=\mathcal{N}\big(x_{t};x_{t-1}/2+25x_{t-1}/(1+x_{t-1}^{2})+8\cos(1.2t),\sigma_{v}^{2}\big)$,
$g_{\theta}(x_{t},y_{t})=\mathcal{N}\big(y_{t};x_{t}^{2}/20,\sigma_{w}^{2}\big)$
and $\mu_{\theta}(x_{1})=\mathcal{N}(x_{1};0,10)$. Here $\theta=(\sigma_{v}^{2},\sigma_{w}^{2})$
and the prior distribution was chosen to be $\sigma_{v}^{2},\sigma_{w}^{2}\overset{\texttt{{\rm iid}}}{\sim}\mathcal{IG}(0.01,0.01)$
where $\mathcal{IG}(a,b)$ is the inverse Gamma distribution with
shape and scale parameters $a$ and $b$. Throughout the experiments,
we generated data using the values $\sigma_{v}^{2}=100$ and $\sigma_{w}^{2}=1$

\subsubsection{Comparison of algorithms for fixed $T$ and varying $N,K$\label{sec: Comparison of Algorithms for fixed T and varying N K} }

We first compare the performance of PMMH, MCMC AIS cSMC, MCMC AIS
cSMC-BS and MwPG for fixed $T=500$ and various values of $K$ and
$N$, for an approximately constant computational budget. To that
purpose, for a given number of intermediate distributions $K$ we
fix the number of particles to $N_{0}=500$ in the cSMC or cSMC-BS
updates used to implement MCMC AIS, while we take the number of particles
to be $N_{0}K$ for both the SMC and cSMC used within the PMMH and
MwPG algorithms respectively. For the MCMC AIS algorithms, the intermediate
distributions are chosen to be of the form $\gamma_{\theta,\theta',k}=\gamma_{\theta_{k}}$,
where $\theta_{k}=(1-\varsigma_{k})\theta+\varsigma_{k}\theta'$,
$\varsigma_{k}=k/(K+1)$, $k=0,\ldots,K+1$. Wherever an SMC or a
cSMC routine is required for the implementation of the algorithms,
multinomial resampling is used at every time step and the transition
density of the SSM is used as the importance sampling distribution.
We used a normal random walk proposal with diagonal covariance matrix
for the RWM updates, where the standard deviation for $\sigma_{v}$
was $0.15$ and $0.08$ for $\sigma_{w}$. We report box plots of
the IAC times associated to $\sigma_{v}^{2}$ and $\sigma_{w}^{2}$
in Figure \ref{fig: IAC times with fixed observation length} and
average IAC times in Table \ref{tbl:_IAC_times_with_fixed_observation_length}.
As observed earlier for the independent scenario the MwPG reaches
its asymptotic regime for small values of $N$ and does not see its
performance improve with the number of particles. This is in contrast
with the PMMH and MCMC AIS cSMC-BS algorithms which achieve similar
performance for large values of $K$ or $N$ and outperform the MwPG
algorithm. We note the crucial role played by the backward sampling
stage in the MCMC AIS algorithm and recall the reader here that the
MwPG also relies on a cSMC-BS step.

\begin{figure}[h]
\centerline{\includegraphics[scale=0.7]{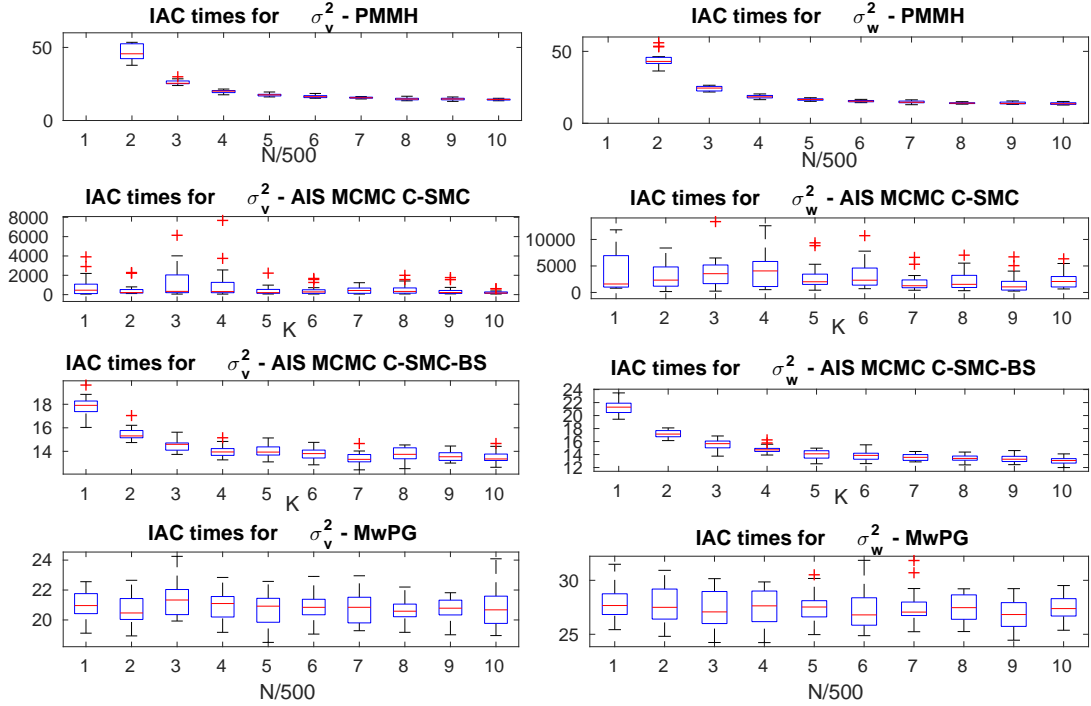}}
\protect\protect\caption{Box plots for the IAC times for $\sigma_{v}^{2}$ and $\sigma_{w}^{2}$
for algorithms PMMH, MCMC AIS cSMC, MCMC AIS cSMC-BS, and MwPG for
various combinations of $N$ and $K$.}

\label{fig: IAC times with fixed observation length} 
\end{figure}

\begin{table}[h]
\centering{}%
\begin{tabular}{|c|c|c|c|c|c|c|c|c|}
\cline{2-9} 
\multicolumn{1}{c|}{} & \multicolumn{2}{c|}{MCMC AIS cSMC} & \multicolumn{2}{c|}{MCMC AIS cSMC-BS} & \multicolumn{2}{c|}{MwPG} & \multicolumn{2}{c|}{PMMH}\tabularnewline
\cline{2-9} 
\multicolumn{1}{c|}{} & $\sigma_{v}^{2}$  & $\sigma_{w}^{2}$  & $\sigma_{v}^{2}$  & $\sigma_{w}^{2}$  & $\sigma_{v}^{2}$  & $\sigma_{w}^{2}$  & $\sigma_{v}^{2}$  & $\sigma_{w}^{2}$ \tabularnewline
\hline 
$K=1$  & 44.9  & 657.2 & 17.7  & 20.9 & 22.9  & 29.8  & 161.9  & 309.3\tabularnewline
$K=2$  & 74.3 & 3096.8  & 14.5  & 15.7 & 22.1 & 28.4  & 41.8 & 43.5 \tabularnewline
$K=3$  & 128.6  & 1960.0 & 13.9 & 15.6  & 22.8  & 28.1 & 22.6  & 21.6 \tabularnewline
$K=4$  & 114.0  & 1428.2 & 15.0  & 15.9  & 20.0  & 31.1  & 19.0  & 19.3\tabularnewline
$K=5$  & 170.8  & 472.2  & 13.4 & 14.9  & 20.4 & 25.8  & 18.9  & 17.5\tabularnewline
$K=6$  & 200.6  & 148.4  & 13.0  & 13.1  & 20.8  & 26.3  & 16.9  & 16.0\tabularnewline
$K=7$  & 66.3 & 1733.6 & 13.7  & 12.4  & 18.3  & 26.5  & 16.6  & 14.1 \tabularnewline
$K=8$  & 638.9 & 544.5 & 13.7  & 12.6  & 22.7  & 27.6 & 14.3  & 13.7\tabularnewline
$K=9$  & 122.2  & 1132.9  & 12.0  & 12.2 & 21.9  & 29.8  & 16.3  & 14.0 \tabularnewline
$K=10$  & 724.6  & 267.3  & 13.5 & 13.7 & 22.7  & 26.7  & 14.9  & 14.0 \tabularnewline
\hline 
\end{tabular}\caption{Estimated IAC times for $\sigma_{v}^{2}$ and $\sigma_{w}^{2}$ for
the algorithms considered. On each row the estimated IAC times for
the MCMC AIS algorithms for $N_{0}=500$ particles and $K$ intermediate
steps and MwG and PMMH algorithms for $N=KN_{0}$ particles are shown.\label{tbl:_IAC_times_with_fixed_observation_length} }
\end{table}

\subsubsection{Comparison of algorithms for fixed $N$ and varying $T$\label{sec: Comparison of algorithms for fixed N and varying T} }

In a second experiment we compared PMMH, MCMC AIS cSMC-BS for $K=1$,
and MwPG for varying values of $T$, in order to assess their scalability
to the size of the observations. All the algorithms used the same
number of particles in order to ensure comparable computational complexity.
Each algorithm was run $200$ times with $N=200$ particles for $T=1000,2000,5000,10000$,
with the exception of the PMMH for which $N=2000$, as otherwise the
estimation of the IAC times was too unreliable, even for $T=1000$.
The prior distribution and the other algorithm settings were similar
to those of subsection \ref{sec: Comparison of Algorithms for fixed T and varying N K}.
In Figure \ref{fig:MwPHIACvsnNLHMM} we report the box plots for the
IAC times estimated from the $200$ runs, while their averages are
reported in Table \ref{tbl:_IAC_times_with_varying_observation_length}.
The PMMH algorithm clearly does not scale well as $T$ increases,
in contrast with MCMC AIS cSMC-BS and MwPG which exhibit remarkable
scaling properties, similar to those observed in the iid scenario.
In line with our earlier findings, MCMC AIS cSMC-BS seems to be consistently
marginally superior to MwPG, for a comparable computational cost.

\begin{figure}[h]
\centerline{\includegraphics[scale=0.7]{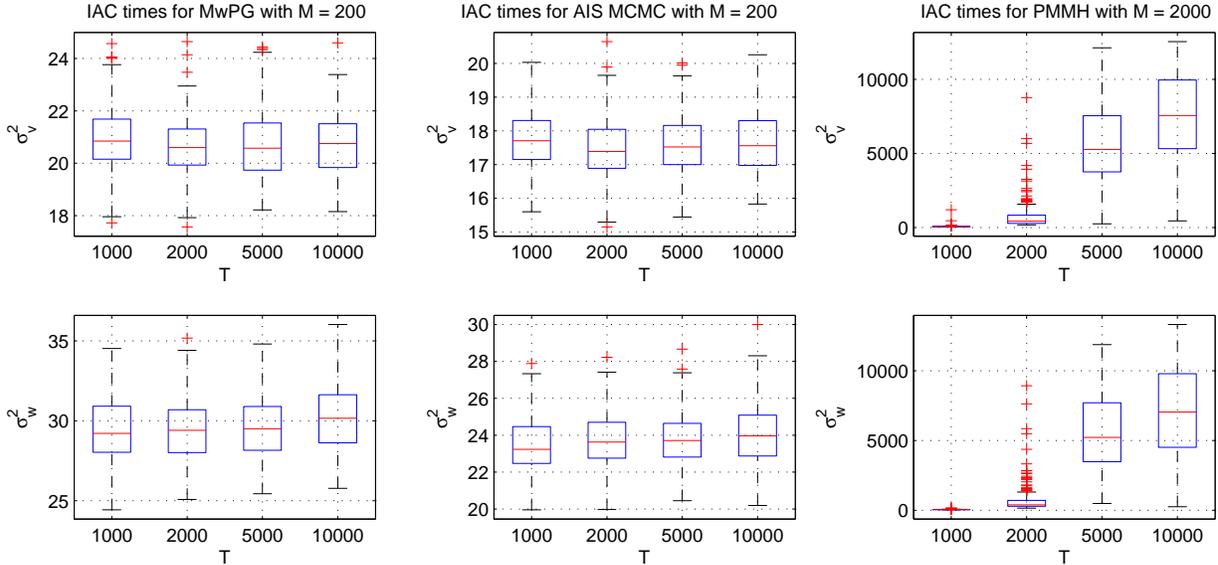}}
\protect\protect\caption{Box plots for the IAC times for $\sigma_{v}^{2}$ and $\sigma_{w}^{2}$
for MCMC AIS cSMC-BS and MwPG with $N=200$ and PMMH with $N=2000$.
Mean IAC values are given in Table \ref{tbl:_IAC_times_with_varying_observation_length}.}

\label{fig:MwPHIACvsnNLHMM} 
\end{figure}

\begin{table}[h]
\centering{}%
\begin{tabular}{|c|c|c|c|c|c|c|}
\cline{2-7} 
\multicolumn{1}{c|}{} & \multicolumn{2}{c|}{MCMC AIS cSMC-BS} & \multicolumn{2}{c|}{MwPG} & \multicolumn{2}{c|}{PMMH}\tabularnewline
\cline{2-7} 
\multicolumn{1}{c|}{} & $\sigma_{v}^{2}$  & $\sigma_{w}^{2}$  & $\sigma_{v}^{2}$  & $\sigma_{w}^{2}$  & $\sigma_{v}^{2}$  & $\sigma_{w}^{2}$ \tabularnewline
\hline 
$T=1000$  & 17.7  & 23.5 & 20.9 & 29.4 & 71.3  & 59.2\tabularnewline
$T=2000$  & 17.5  & 23.7 & 20.6  & 29.4  & 759.0  & 757.9 \tabularnewline
$T=5000$  & 17.6  & 23.7 & 20.7  & 29.6 & 5808.6  & 5663.5 \tabularnewline
$T=10000$  & 17.6 & 24.0 & 20.7  & 30.2  & 7368.1  & 7170.9 \tabularnewline
\hline 
\end{tabular}\caption{Estimated IAC times for $\sigma_{v}^{2}$ and $\sigma_{w}^{2}$ for
MwPG and MCMC AIS cSMC-BS (with $K=1$) for $N=200$ and $N=2000$
for PMMH.\label{tbl:_IAC_times_with_varying_observation_length} }
\end{table}

{\footnotesize{}{}{}{ \linespread{1.0} }}{\footnotesize \par}

\section{Discussion\label{sec:Discussion} }

We have introduced a novel likelihood ratio estimator for SSMs which
relies on an original combination of AIS and cSMC and have shown how
it can be used to obtain an MCMC algorithm to perform Bayesian parameter
inference. In the i.i.d. case, we have provided theory for this estimator
which suggests that the resulting MCMC algorithm has a computational
cost at each iteration scaling linearly with $T$ instead of quadratically
for standard pseudo-marginal methods. In the general SSM case, we
conjecture that similar results also hold for the class of state-space
models where cSMC-BS is efficient as evidenced by our empirical results.

\section*{Acknowledgements}

Arnaud Doucet's research is supported by the Engineering and Physical
Sciences Research Council (EPSRC) EP/K000276/1 Advanced Monte Carlo
Methods for Inference in Complex Dynamic Models and EP/K009850/1 Bayesian
Inference for Big Data with Stochastic Gradient Markov Chain Monte
Carlo. Christophe Andrieu's research was supported by EPSRC EP/K009575/1
Bayesian Inference for Big Data with Stochastic Gradient Markov Chain
Monte Carlo and EP/K0\-14463/1 Intractable Likelihood: New Challenges
from Modern Applications (ILike). Sinan Y\i ld\i r\i m's research
was also supported by ILike, EPSRC EP/K0\-14463/1. The authors acknowledge
the (intensive) use of the Blue Crystal HPC facility at the University
of Bristol.

{\small{}\bibliographystyle{Chicago}
\bibliography{myrefs_thesis}
}{\small \par}

{\footnotesize{}\newpage{}}{\footnotesize \par}

\appendix
{\footnotesize{}

\section{Approximation}

We first establish a simple approximation of $\Lambda_{T}(\theta,\epsilon;\omega,\xi)$,
which relies on a Taylor expansion. We let $\mathring{\Theta}$ be
the interior of $\Theta$. 
\begin{lem}
\label{lem:taylor-approximation}Assume (A\ref{hyp:strong-likelihood-cond}).
For any $(\theta,\epsilon,\xi,\omega)\in\mathring{\Theta}\times\Xi\times\mathsf{X}^{\mathbb{N}}\times\mathsf{Y}^{\mathbb{N}}$
and any $T\in\mathbb{N}$ such that $\tilde{\theta}(\epsilon,T)\in\mathring{\Theta}$
there exist $\{\mathring{\theta}_{t},1\leq t\leq T\},\{\bar{\theta}_{t},1\leq t\leq T\}\in[\theta\wedge\theta{}^{'}(\epsilon,T),\theta\vee\theta{}^{'}(\epsilon,T)]^{T}$
such that

\[
\Lambda_{T}(\theta,\epsilon;\omega,\xi)=S_{\theta,\epsilon,T}^{(1)}(\omega,\xi)+S_{\theta,\epsilon,T}^{(2)}(\omega,\xi)+S_{\theta,\epsilon,T}^{(3)}(\omega,\xi)\quad.
\]
with 
\[
S_{\theta,\epsilon,T}^{(1)}(\omega,\xi):=\frac{\epsilon}{2\sqrt{T}}\sum_{t=1}^{T}\left\{ \dot{\ell}_{\theta}\big(x_{t}\mid y_{t}\big)+\dot{\ell}_{\tilde{\theta}(\epsilon,T)}\big(x'_{t}\mid y_{t}\big)\right\} ,
\]
\[
S_{\theta,\epsilon,T}^{(2)}(\omega,\xi):=\frac{\epsilon^{2}}{8T}\sum_{t=1}^{T}\left\{ \ddot{\ell}_{\theta}(x_{t}\mid y_{t})+\ddot{\ell}_{\tilde{\theta}(\epsilon,T)}(x'_{t}\mid y_{t})\right\} ,
\]
\[
S_{\theta,\epsilon,T}^{(3)}(\omega,\xi):=\frac{\epsilon^{3}}{48T\sqrt{T}}\sum_{t=1}^{T}\left\{ \dddot{\ell}_{\bar{\theta}_{t}}(x_{t}\mid y_{t})+\dddot{\ell}_{\mathring{\theta}_{t}}(x'_{t}\mid y_{t})\right\} .
\]
\end{lem}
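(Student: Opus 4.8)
The plan is to derive the decomposition by a term-by-term third-order Taylor expansion of the summands in the second expression for $\Lambda_{T}$ in \eqref{eq:defLambda}, using the Lagrange form of the remainder. Write $h:=h(\epsilon,T):=\epsilon/(2\sqrt{T})$. The one elementary observation that makes everything fall into place is that the two parameter increments occurring in $\Lambda_{T}$ coincide and both equal $h$, since $\tilde{\theta}(\epsilon,T)-\theta=\theta'(\epsilon,T)-\tilde{\theta}(\epsilon,T)=\epsilon/(2\sqrt{T})$. Hence, for each $t$, the first summand of $\Lambda_{T}$ is $\ell_{\theta+h}(x_{t}\mid y_{t})-\ell_{\theta}(x_{t}\mid y_{t})$ and the second is $\ell_{\tilde{\theta}(\epsilon,T)+h}(x'_{t}\mid y_{t})-\ell_{\tilde{\theta}(\epsilon,T)}(x'_{t}\mid y_{t})$; each has the shape ``value of $\ell$ at $s_{0}+h$ minus value at $s_{0}$''.

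First I would fix $t$ and apply Taylor's theorem to order three to the real map $s\mapsto\ell_{s}(x_{t}\mid y_{t})$ about $s=\theta$, and to $s\mapsto\ell_{s}(x'_{t}\mid y_{t})$ about $s=\tilde{\theta}(\epsilon,T)$. Both expansions are licit: the segment joining $\theta$ and $\theta'(\epsilon,T)$ lies in $\Theta$ by convexity of $\Theta$ (assumed in (A\ref{hyp:strong-likelihood-cond})) and because its endpoints lie in $\Theta$ (the inclusion $\theta'(\epsilon,T)\in\Theta$ being implicit in $\Lambda_{T}$ being defined), $\tilde{\theta}(\epsilon,T)$ lies on that segment, and $s\mapsto\ell_{s}(x\mid y)$ is three times differentiable with bounded derivatives along it by (A\ref{hyp:strong-likelihood-cond}). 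This produces, for some $\bar{\theta}_{t}$ between $\theta$ and $\tilde{\theta}(\epsilon,T)$ and some $\mathring{\theta}_{t}$ between $\tilde{\theta}(\epsilon,T)$ and $\theta'(\epsilon,T)$,
\begin{align*}
\ell_{\tilde{\theta}(\epsilon,T)}(x_{t}\mid y_{t})-\ell_{\theta}(x_{t}\mid y_{t}) & =h\,\dot{\ell}_{\theta}(x_{t}\mid y_{t})+\tfrac{h^{2}}{2}\ddot{\ell}_{\theta}(x_{t}\mid y_{t})+\tfrac{h^{3}}{6}\dddot{\ell}_{\bar{\theta}_{t}}(x_{t}\mid y_{t}),\\
\ell_{\theta'(\epsilon,T)}(x'_{t}\mid y_{t})-\ell_{\tilde{\theta}(\epsilon,T)}(x'_{t}\mid y_{t}) & =h\,\dot{\ell}_{\tilde{\theta}(\epsilon,T)}(x'_{t}\mid y_{t})+\tfrac{h^{2}}{2}\ddot{\ell}_{\tilde{\theta}(\epsilon,T)}(x'_{t}\mid y_{t})+\tfrac{h^{3}}{6}\dddot{\ell}_{\mathring{\theta}_{t}}(x'_{t}\mid y_{t}).
\end{align*}

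Next I would sum these two identities over $t=1,\dots,T$ and regroup by order of $h$. Substituting $h=\epsilon/(2\sqrt{T})$, $h^{2}/2=\epsilon^{2}/(8T)$ and $h^{3}/6=\epsilon^{3}/(48T\sqrt{T})$, the order-one terms assemble precisely into $S_{\theta,\epsilon,T}^{(1)}(\omega,\xi)$, the order-two terms into $S_{\theta,\epsilon,T}^{(2)}(\omega,\xi)$, and the remainders into $S_{\theta,\epsilon,T}^{(3)}(\omega,\xi)$, which is the claimed identity. It then remains only to note that since $\tilde{\theta}(\epsilon,T)$ lies between $\theta$ and $\theta'(\epsilon,T)$, both $\bar{\theta}_{t}$ and $\mathring{\theta}_{t}$ lie in $[\theta\wedge\theta'(\epsilon,T),\theta\vee\theta'(\epsilon,T)]$, as asserted.

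I do not expect a genuine obstacle here: the proof is essentially bookkeeping for Taylor's theorem. The only points requiring a little care are (i) expanding the \emph{second} log-ratio about $\tilde{\theta}(\epsilon,T)$, not about $\theta$ --- this is what equalises the two increments and what yields the subscripts $\tilde{\theta}(\epsilon,T)$ on $\dot{\ell}$ and $\ddot{\ell}$ in $S^{(1)}$ and $S^{(2)}$; (ii) checking the remainder points land in the stated interval; and (iii) observing that the $S^{(i)}_{\theta,\epsilon,T}$ are finite and measurable in $(\theta,\epsilon,\omega,\xi)$, which follows from the boundedness and measurability assumptions in (A\ref{hyp:strong-likelihood-cond}) --- note that the intermediate points $\bar{\theta}_{t},\mathring{\theta}_{t}$ need not be selected measurably, since the statement is only a pointwise existence claim.
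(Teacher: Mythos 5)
Your proposal is correct and follows essentially the same route as the paper: a third-order Taylor expansion with Lagrange remainder of the first log-ratio about $\theta$ and of the second about $\tilde{\theta}(\epsilon,T)$, followed by substituting the common increment $\tilde{\theta}(\epsilon,T)-\theta=\theta'(\epsilon,T)-\tilde{\theta}(\epsilon,T)=\epsilon/(2\sqrt{T})$ and regrouping by order, which gives exactly $S^{(1)}_{\theta,\epsilon,T}$, $S^{(2)}_{\theta,\epsilon,T}$ and $S^{(3)}_{\theta,\epsilon,T}$. Your added remarks on the location of the intermediate points and on measurability are harmless refinements of the same argument.
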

\begin{proof}
Recall that 
\begin{align*}
\Lambda_{T}(\theta,\epsilon;\omega,\xi) & =\sum_{t=1}^{T}\log\frac{p_{\tilde{\theta}(\epsilon,T)}\left(x_{t}\mid y_{t}\right)}{p_{\theta}\left(x_{t}\mid y_{t}\right)}+\log\frac{p_{\theta^{\prime}(\epsilon,T)}\left(x'_{t}\mid y_{t}\right)}{p_{\tilde{\theta}(\epsilon,T)}\left(x'_{t}\mid y_{t}\right)}.
\end{align*}
For $(\theta,\tilde{\theta})\in\mathring{\Theta}$ and $(x,y)\in\mathsf{X}\times\mathsf{Y}$
a Taylor expansion yields 
\begin{align*}
\log\frac{p_{\tilde{\theta}}\left(x\mid y\right)}{p_{\theta}\left(x\mid y\right)} & =\dot{\ell}_{\theta}(x\mid y)\big(\tilde{\theta}-\theta\big)+\frac{1}{2}\ddot{\ell}_{\theta}(x\mid y)\big(\tilde{\theta}-\theta\big)^{2}+\frac{1}{6}\dddot{\ell}_{\bar{\theta}}(x\mid y)\big(\tilde{\theta}-\theta\big)^{3}
\end{align*}
for some $\bar{\theta}\in[\tilde{\theta}\wedge\theta^{\prime},\tilde{\theta}\vee\theta^{\prime}]$,
also dependent on $x$ and $y$. Similarly for $(\tilde{\theta},\theta')\in\mathring{\Theta}$
\begin{align*}
\log\frac{p_{\theta^{\prime}}\left(x\mid y\right)}{p_{\tilde{\theta}}\left(x\mid y\right)} & =\dot{\ell}_{\tilde{\theta}}(x\mid y)\big(\theta^{\prime}-\tilde{\theta}\big)+\frac{1}{2}\ddot{\ell}_{\tilde{\theta}}(x\mid y)\big(\theta^{\prime}-\tilde{\theta}\big)^{2}+\frac{1}{6}\dddot{\ell}_{\mathring{\theta}}(x\mid y)\big(\theta^{\prime}-\tilde{\theta}\big)^{3}
\end{align*}
for some $\mathring{\theta}\in[\tilde{\theta}\wedge\theta^{\prime},\tilde{\theta}\vee\theta^{\prime}]$,
also dependent on $x$ and $y$. It follows that 
\begin{align*}
\Lambda_{T}(\theta,\epsilon;\omega,\xi) & =\sum_{t=1}^{T}\dot{\ell}_{\theta}(x_{t}\mid y_{t})\big(\tilde{\theta}(\epsilon,T)-\theta\big)+\dot{\ell}_{\tilde{\theta}(\epsilon,T)}(x'_{t}\mid y_{t})\big(\theta'(\epsilon,T)-\tilde{\theta}(\epsilon,T)\big)\\
 & +\frac{1}{2}\sum_{t=1}^{T}\ddot{\ell}_{\theta}(x_{t}\mid y_{t})\big(\tilde{\theta}(\epsilon,T)-\theta\big)^{2}+\ddot{\ell}_{\tilde{\theta}(\epsilon,T)}(x'_{t}\mid y_{t})\big(\theta'(\epsilon,T)-\tilde{\theta}(\epsilon,T)\big)^{2}\\
 & +\frac{1}{6}\sum_{t=1}^{T}\dddot{\ell}_{\bar{\theta}_{t}}(x_{t}\mid y_{t})\big(\tilde{\theta}(\epsilon,T)-\theta\big)^{3}+\dddot{\ell}_{\mathring{\theta}_{t}}(x'_{t}\mid y_{t})\big(\theta'(\epsilon,T)-\tilde{\theta}(\epsilon,T)\big)^{3}
\end{align*}
Now, from the definition of $\tilde{\theta}(\epsilon,T)$ we have
\[
\tilde{\theta}(\epsilon,T)-\theta:=\frac{\epsilon/2}{\sqrt{T}},\text{ }\theta^{\prime}(\epsilon,T)-\tilde{\theta}(\epsilon,T):=\frac{\epsilon/2}{\sqrt{T}},
\]
and therefore 
\begin{align*}
\Lambda_{T}(\theta,\epsilon;\omega,\xi) & =\frac{\epsilon}{2\sqrt{T}}\sum_{t=1}^{T}\left\{ \dot{\ell}_{\theta}(x_{t}\mid y_{t})+\dot{\ell}_{\tilde{\theta}(\epsilon,T)}(x'_{t}\mid y_{t})\right\} \\
 & +\frac{\epsilon^{2}}{8T}\sum_{t=1}^{T}\left\{ \ddot{\ell}_{\theta}(x_{t}\mid y_{t})+\ddot{\ell}_{\tilde{\theta}(\epsilon,T)}(x'_{t}\mid y_{t})\right\} \\
 & +\frac{\epsilon^{3}}{48T\sqrt{T}}\sum_{t=1}^{T}\left\{ \dddot{\ell}_{\bar{\theta}_{t}}(x_{t}\mid y_{t})+\dddot{\ell}_{\mathring{\theta}_{t}}(x'_{t}\mid y_{t})\right\} .
\end{align*}
\end{proof}
This is a purely technical lemma to establish various continuity properties
needed after. 
\begin{lem}
\label{lem:basicLipschitzcontinuities}Assume (A\ref{hyp:strong-likelihood-cond}).
Then for any $(\theta,\theta')\in\mathring{\Theta}^{2}$, 
\[
\sup_{y\in\mathsf{Y}}\|p_{\theta}(\cdot\mid y)-p_{\theta'}(\cdot\mid y)\|_{tv}\leq\frac{1}{2}\bar{\ell}^{(1)}|\theta-\theta'|.
\]
Let $\phi_{\theta}(\cdot,\cdot):\Theta\times\mathsf{X}\times\mathsf{Y}\rightarrow\mathbb{R}$
and define $\bar{\phi}:=\sup_{(\theta,x,y)\in\Theta\times\mathsf{X}\times\mathsf{Y}}|\phi_{\theta}(x,y)|$.
Then for any $(\theta,\epsilon,\omega)\in\Theta\times\Xi\times\mathsf{Y}^{\mathbb{N}}$,
$T\geq1$ and $N\in\mathbb{N}$ 
\begin{multline*}
\left|\mathbb{E}_{\theta,\epsilon,T}^{\omega}\left[\phi_{\theta}(X{}_{1},y)\right]-\mathbb{E}_{\theta,\epsilon,T}^{\omega}\left[\phi_{\tilde{\theta}(\epsilon,T)}(X_{1}',y)\right]\right|\leq\bar{\phi}\bar{\ell}^{(1)}\times\frac{|\epsilon|}{\sqrt{T}}\sup_{(\theta,x,y)\in\Theta\times\mathsf{X}\times\mathsf{Y}}\|R_{\theta,y}^{[N]}\big(x,\cdot\big)-p_{\theta}(\cdot\mid y)\|_{tv}\\
+\left|\mathbb{E}_{\tilde{\theta}(\epsilon,T)}^{\omega}\left[\phi_{\tilde{\theta}(\epsilon,T)}(X{}_{1},y)\right]-\mathbb{E}_{\theta}^{\omega}\left[\phi_{\theta}(X{}_{1},y)\right]\right|.
\end{multline*}
If in addition there exists $\tilde{\phi}>0$ such that for all $(\theta,\theta',x,y)\in\mathring{\Theta}^{2}\times\mathsf{X}\times\mathsf{Y}$,
$|\phi_{\theta}(x,y)-\phi_{\theta'}(x,y)|\leq\tilde{\phi}|\theta-\theta'|$
then 
\[
\left|\mathbb{E}_{\theta'}^{\omega}\left[\phi_{\theta'}(X{}_{1},y)\right]-\mathbb{E}_{\theta}^{\omega}\left[\phi_{\theta}(X{}_{1},y)\right]\right|\leq(\bar{\phi}\bar{\ell}^{(1)}+\tilde{\phi})|\theta-\theta'|.
\]
\end{lem}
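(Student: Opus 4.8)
The plan is to prove the three inequalities in turn, each by an elementary decomposition; the only quantitative inputs are the uniform bound $\bar{\ell}^{(1)}$, the reversibility of $R_{\theta,y}^{[N]}$ from (A\ref{hyp:strong-likelihood-cond}), and the identity $\|\nu_{1}-\nu_{2}\|_{tv}=\frac{1}{2}\int_{\mathsf{X}}|p_{1}-p_{2}|\,{\rm d}x$ for probability measures with densities $p_{1},p_{2}$ with respect to ${\rm d}x$. For the first inequality I would use the fundamental theorem of calculus to write $p_{\theta'}(x\mid y)-p_{\theta}(x\mid y)=\int_{\theta}^{\theta'}p_{s}(x\mid y)\dot{\ell}_{s}(x\mid y)\,{\rm d}s$, legitimate by the differentiability assumed in (A\ref{hyp:strong-likelihood-cond}), then take absolute values, apply Fubini, bound $|\dot{\ell}_{s}(x\mid y)|\leq\bar{\ell}^{(1)}$ uniformly and use $\int_{\mathsf{X}}p_{s}(x\mid y)\,{\rm d}x=1$; this yields $\int_{\mathsf{X}}|p_{\theta'}(x\mid y)-p_{\theta}(x\mid y)|\,{\rm d}x\leq\bar{\ell}^{(1)}|\theta-\theta'|$ and hence the claim, uniformly in $y$.

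For the second inequality, observe first that under $\mathbb{P}_{\theta,\epsilon,T}^{\omega}$ the marginal law of $X_{1}$ is $p_{\theta}(\cdot\mid y)$, so $\mathbb{E}_{\theta,\epsilon,T}^{\omega}[\phi_{\theta}(X_{1},y)]=\mathbb{E}_{\theta}^{\omega}[\phi_{\theta}(X_{1},y)]$ and the left-hand side equals $\big|\mathbb{E}_{\theta}^{\omega}[\phi_{\theta}(X_{1},y)]-\mathbb{E}_{\theta,\epsilon,T}^{\omega}[\phi_{\tilde{\theta}(\epsilon,T)}(X_{1}',y)]\big|$. I would then insert $\mathbb{E}_{\tilde{\theta}(\epsilon,T)}^{\omega}[\phi_{\tilde{\theta}(\epsilon,T)}(X_{1},y)]$ as an intermediate term; the triangle inequality reproduces exactly the second term on the right-hand side, plus a remainder $C:=\mathbb{E}_{\tilde{\theta}(\epsilon,T)}^{\omega}[\phi_{\tilde{\theta}(\epsilon,T)}(X_{1},y)]-\mathbb{E}_{\theta,\epsilon,T}^{\omega}[\phi_{\tilde{\theta}(\epsilon,T)}(X_{1}',y)]$ which has to be bounded by $A:=\bar{\phi}\,\bar{\ell}^{(1)}\frac{|\epsilon|}{\sqrt{T}}\sup_{(\theta,x,y)\in\Theta\times\mathsf{X}\times\mathsf{Y}}\|R_{\theta,y}^{[N]}(x,\cdot)-p_{\theta}(\cdot\mid y)\|_{tv}$. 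The key step is to use that $R_{\tilde{\theta}(\epsilon,T),y}^{[N]}$ leaves $p_{\tilde{\theta}(\epsilon,T)}(\cdot\mid y)$ invariant (item \ref{enu:propertieskernel} of (A\ref{hyp:strong-likelihood-cond})) to rewrite $\mathbb{E}_{\tilde{\theta}(\epsilon,T)}^{\omega}[\phi_{\tilde{\theta}(\epsilon,T)}(X_{1},y)]=\int p_{\tilde{\theta}(\epsilon,T)}({\rm d}x\mid y)\int R_{\tilde{\theta}(\epsilon,T),y}^{[N]}(x,{\rm d}x')\phi_{\tilde{\theta}(\epsilon,T)}(x',y)$, so that $C=\int\big(p_{\tilde{\theta}(\epsilon,T)}({\rm d}x\mid y)-p_{\theta}({\rm d}x\mid y)\big)\int R_{\tilde{\theta}(\epsilon,T),y}^{[N]}(x,{\rm d}x')\phi_{\tilde{\theta}(\epsilon,T)}(x',y)$. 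Since $p_{\tilde{\theta}(\epsilon,T)}(\cdot\mid y)-p_{\theta}(\cdot\mid y)$ has total mass zero, I may subtract the $x$-independent constant $\int p_{\tilde{\theta}(\epsilon,T)}({\rm d}x'\mid y)\phi_{\tilde{\theta}(\epsilon,T)}(x',y)$ inside the integral without changing $C$, turning the inner integral into $\int\big(R_{\tilde{\theta}(\epsilon,T),y}^{[N]}(x,{\rm d}x')-p_{\tilde{\theta}(\epsilon,T)}({\rm d}x'\mid y)\big)\phi_{\tilde{\theta}(\epsilon,T)}(x',y)$, whose modulus is at most $2\bar{\phi}\sup\|R_{\theta,y}^{[N]}(x,\cdot)-p_{\theta}(\cdot\mid y)\|_{tv}$; combining with $\int_{\mathsf{X}}|p_{\tilde{\theta}(\epsilon,T)}(x\mid y)-p_{\theta}(x\mid y)|\,{\rm d}x\leq\bar{\ell}^{(1)}|\tilde{\theta}(\epsilon,T)-\theta|=\bar{\ell}^{(1)}|\epsilon|/(2\sqrt{T})$, which follows from the first inequality, gives $|C|\leq A$, and the triangle inequality concludes.

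For the third inequality, using the extra Lipschitz hypothesis on $\phi$, I would split $\mathbb{E}_{\theta'}^{\omega}[\phi_{\theta'}(X_{1},y)]-\mathbb{E}_{\theta}^{\omega}[\phi_{\theta}(X_{1},y)]=\int\big(p_{\theta'}({\rm d}x\mid y)-p_{\theta}({\rm d}x\mid y)\big)\phi_{\theta'}(x,y)+\int p_{\theta}({\rm d}x\mid y)\big(\phi_{\theta'}(x,y)-\phi_{\theta}(x,y)\big)$; the first term is bounded by $2\bar{\phi}\|p_{\theta'}(\cdot\mid y)-p_{\theta}(\cdot\mid y)\|_{tv}\leq\bar{\phi}\,\bar{\ell}^{(1)}|\theta-\theta'|$ via the first inequality, and the second by $\tilde{\phi}|\theta-\theta'|$ directly, giving the claim.

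None of the steps is genuinely difficult; the only point requiring care is in the second inequality, namely that one must invoke invariance of $R_{\tilde{\theta}(\epsilon,T),y}^{[N]}$ for $p_{\tilde{\theta}(\epsilon,T)}(\cdot\mid y)$ --- and \emph{not} for $p_{\theta}(\cdot\mid y)$ --- and then use the vanishing total mass of $p_{\tilde{\theta}(\epsilon,T)}(\cdot\mid y)-p_{\theta}(\cdot\mid y)$ to subtract a constant, which is exactly what upgrades a crude bound of order $|\epsilon|/\sqrt{T}$ to one carrying the ergodicity factor $\sup\|R_{\theta,y}^{[N]}(x,\cdot)-p_{\theta}(\cdot\mid y)\|_{tv}$ that tends to $0$ as $N\to\infty$. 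One should also track carefully which parameter value indexes the measure, the kernel and the integrand at each stage, and note that $\tilde{\theta}(\epsilon,T)\in\Theta$ by convexity of $\Theta$ whenever $\theta'(\epsilon,T)\in\Theta$, which is what makes the first inequality applicable to the pair $\theta,\tilde{\theta}(\epsilon,T)$.
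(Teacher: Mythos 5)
Your proposal is correct and follows essentially the same route as the paper's proof: the fundamental-theorem-of-calculus/Fubini argument with the uniform bound $\bar{\ell}^{(1)}$ for the total-variation estimate, the decomposition of the expectation difference using invariance (reversibility) of $R_{\tilde{\theta}(\epsilon,T),y}^{[N]}$ together with the zero-mass trick of subtracting the constant $p_{\tilde{\theta}(\epsilon,T)}(\phi_{\tilde{\theta}(\epsilon,T)})$ to pick up the ergodicity factor, and the same two-term split plus Lipschitz bound for the final inequality. The only difference is presentational (the paper writes the decomposition in operator notation in a single identity), so there is nothing substantive to add.
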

\begin{proof}
We have for any $y\in\mathsf{Y}$ 
\begin{align*}
\|p_{\theta}(\cdot & \mid y)-p_{\theta'}(\cdot\mid y)\|_{tv}=\frac{1}{2}\int_{\mathsf{X}}|\exp(\ell_{\theta}(x\mid y))-\exp(\ell_{\theta'}(x\mid y))|{\rm d}x\\
 & =\frac{1}{2}\int_{\mathsf{X}}|\int_{\theta}^{\theta'}\dot{\ell}_{\vartheta}(x\mid y)\exp(\ell_{\vartheta}(x\mid y)){\rm d}\vartheta|{\rm d}x\\
 & \leq\frac{1}{2}\bar{\ell}^{(1)}|\int_{\theta}^{\theta'}\int_{\mathsf{X}}\exp(\ell_{\vartheta}(x\mid y)){\rm d}x{\rm d}\vartheta|\\
 & =\frac{1}{2}\bar{\ell}^{(1)}|\theta-\theta'|.
\end{align*}
For the next statement we use standard operator notation for brevity:
for a probability distribution $\mu$, a Markov operator $\Pi$ and
a function $f$, we let $\Pi f(x):=\int f(u)\Pi(x,{\rm d}u)$ and
$\mu f=\mu(f):=\int f(u)\mu({\rm d}u)$. We have the decomposition,
for $\theta,\tilde{\theta}\in\Theta$ and $N\in\mathbb{N}$ 
\begin{align*}
p_{\theta}R_{\tilde{\theta},y}^{[N]}\left(\phi_{\tilde{\theta}}\right)-p_{\theta}\left(\phi_{\theta}\right) & =p_{\theta}R_{\tilde{\theta},y}^{[N]}\left(\phi_{\tilde{\theta}}\right)-p_{\tilde{\theta}}R_{\tilde{\theta},y}^{[N]}\left(\phi_{\tilde{\theta}}\right)+p_{\tilde{\theta}}\left(\phi_{\tilde{\theta}}\right)-p_{\theta}\left(\phi_{\theta}\right)\\
 & =(p_{\theta}-p_{\tilde{\theta}})R_{\tilde{\theta},y}^{[N]}(\phi_{\tilde{\theta}}-p_{\tilde{\theta}}\phi_{\tilde{\theta}})+(p_{\tilde{\theta}}-p_{\theta})\left(\phi_{\tilde{\theta}}\right)-p_{\theta}(\phi_{\theta}-\phi_{\tilde{\theta}})
\end{align*}
and 
\begin{align*}
|(p_{\theta}-p_{\tilde{\theta}})R_{\tilde{\theta},y}^{[N]}(\phi_{\tilde{\theta}}-p_{\tilde{\theta}}\phi_{\tilde{\theta}})| & \leq2\|p_{\theta}-p_{\tilde{\theta}}\|_{tv}\sup_{x\in\mathsf{X}}|R_{\tilde{\theta},y}^{[N]}(\phi_{\tilde{\theta}}-p_{\tilde{\theta}}\phi_{\tilde{\theta}})(x)|\\
 & \leq2\|p_{\theta}-p_{\tilde{\theta}}\|_{tv}2\sup_{x\in\mathsf{X}}\|R_{\tilde{\theta},y}^{[N]}(x,\cdot)-p_{\tilde{\theta}}(\cdot\mid y)\|_{tv}\bar{\phi}.
\end{align*}
Finally we have the decomposition and bound for $\theta,\theta'\in\mathring{\Theta}$
\begin{align*}
\left|(p_{\theta'}-p_{\theta})\left(\phi_{\theta'}\right)-p_{\theta}(\phi_{\theta}-\phi_{\theta'})\right| & \leq2\bar{\phi}\|p_{\theta}(\cdot\mid y)-p_{\theta'}(\cdot\mid y)\|_{tv}+\tilde{\phi}|\theta-\theta'|\\
 & =(\bar{\phi}\bar{\ell}^{(1)}+\tilde{\phi})|\theta-\theta'|.
\end{align*}
\end{proof}
We establish a first level of approximation of $\Lambda_{T}(\theta,\epsilon;\omega,\xi)$
in the following sense. 
\begin{lem}
\label{lem:firstapproximation}Assume (A\ref{hyp:strong-likelihood-cond}).
For any $(\theta,\epsilon,\omega,T)\in\Theta\times\Xi\times\mathsf{Y}^{\mathbb{N}}\times\mathbb{N}$,
let 
\[
\bar{S}_{\theta,T}^{(2)}(\omega):=\frac{\epsilon^{2}}{4T}\sum_{t=1}^{T}\mathbb{E}_{\theta}^{\omega}\big[\ddot{\ell}_{\theta}(X_{t}\mid y_{t})\big].
\]
Then for any $\omega\in\mathsf{Y}^{\mathbb{N}}$ and with the notation
of Lemma \ref{lem:taylor-approximation}, 
\[
\lim_{T\rightarrow\infty}\sup_{(N_{T},\theta,\epsilon)\in\mathbb{N}\times\mathring{\Theta}\times\Xi}\mathbb{E}_{\theta,\epsilon,T}^{\omega}\big|\Lambda_{T}(\theta,\epsilon;\omega,\xi)-S_{\theta,\epsilon,T}^{(1)}(\omega,\xi)-\bar{S}_{\theta,\epsilon,T}^{(2)}(\omega)\big|=0.
\]
\end{lem}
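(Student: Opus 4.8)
The plan is to combine Lemma~\ref{lem:taylor-approximation} with two elementary concentration estimates. Using that lemma, for $(\theta,\epsilon)\in\mathring{\Theta}\times\Xi$ and $T$ with $\tilde{\theta}(\epsilon,T)\in\mathring{\Theta}$ one has the exact identity
\[
\Lambda_{T}(\theta,\epsilon;\omega,\xi)-S_{\theta,\epsilon,T}^{(1)}(\omega,\xi)-\bar{S}_{\theta,\epsilon,T}^{(2)}(\omega)=\big(S_{\theta,\epsilon,T}^{(2)}(\omega,\xi)-\bar{S}_{\theta,\epsilon,T}^{(2)}(\omega)\big)+S_{\theta,\epsilon,T}^{(3)}(\omega,\xi),
\]
so it suffices to bound the $\mathbb{E}_{\theta,\epsilon,T}^{\omega}$-expectation of the absolute value of the two summands on the right by $C/\sqrt{T}$ with $C$ depending only on $\bar{\ell}^{(1)},\bar{\ell}^{(2)},\bar{\ell}^{(3)}$ and $\sup_{\epsilon\in\Xi}|\epsilon|$, which is finite by compactness of $\Xi$, hence independent of $N_{T}$, $\theta$ and $\epsilon$. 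The third-order term is immediate: since $|\dddot{\ell}_{\vartheta}(x\mid y)|\le\bar{\ell}^{(3)}$ for all $(\vartheta,x,y)$, the $2T$-term sum defining $S^{(3)}$ gives $|S_{\theta,\epsilon,T}^{(3)}(\omega,\xi)|\le|\epsilon|^{3}\bar{\ell}^{(3)}/(24\sqrt{T})$ deterministically.

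For the second-order term I would first write, using $\epsilon^{2}/4=\epsilon^{2}/8+\epsilon^{2}/8$,
\[
S_{\theta,\epsilon,T}^{(2)}-\bar{S}_{\theta,\epsilon,T}^{(2)}=\frac{\epsilon^{2}}{8T}\sum_{t=1}^{T}\big\{\ddot{\ell}_{\theta}(X_{t}\mid y_{t})-\mathbb{E}_{\theta}^{\omega}[\ddot{\ell}_{\theta}(X_{t}\mid y_{t})]\big\}+\frac{\epsilon^{2}}{8T}\sum_{t=1}^{T}\big\{\ddot{\ell}_{\tilde{\theta}(\epsilon,T)}(X_{t}'\mid y_{t})-\mathbb{E}_{\theta}^{\omega}[\ddot{\ell}_{\theta}(X_{t}\mid y_{t})]\big\}.
\]
Under $\mathbb{P}_{\theta,\epsilon,T}^{\omega}$ the pairs $(X_{t},X_{t}')$ are independent across $t$ with $X_{t}\sim p_{\theta}(\cdot\mid y_{t})$, so the first sum is an average of independent, centred random variables bounded in absolute value by $2\bar{\ell}^{(2)}$; Jensen's inequality followed by the variance bound for sums of independent bounded variables gives an $L^{1}(\mathbb{P}_{\theta,\epsilon,T}^{\omega})$ bound of $2\bar{\ell}^{(2)}/\sqrt{T}$, uniformly in $N_{T},\theta,\epsilon$.

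For the second sum I would recentre at $\mathbb{E}_{\theta,\epsilon,T}^{\omega}[\ddot{\ell}_{\tilde{\theta}(\epsilon,T)}(X_{t}'\mid y_{t})]$: the resulting fluctuation term is again an average of independent (across $t$), centred and $2\bar{\ell}^{(2)}$-bounded variables, hence $O(T^{-1/2})$ in $L^{1}$ uniformly; and the remaining bias $\big|\mathbb{E}_{\theta,\epsilon,T}^{\omega}[\ddot{\ell}_{\tilde{\theta}(\epsilon,T)}(X_{t}'\mid y_{t})]-\mathbb{E}_{\theta}^{\omega}[\ddot{\ell}_{\theta}(X_{t}\mid y_{t})]\big|$ is handled by Lemma~\ref{lem:basicLipschitzcontinuities} applied with $\phi_{\vartheta}(x,y):=\ddot{\ell}_{\vartheta}(x\mid y)$, for which $\bar{\phi}\le\bar{\ell}^{(2)}$ and, by the mean value theorem and $\bar{\ell}^{(3)}<\infty$, one may take $\tilde{\phi}=\bar{\ell}^{(3)}$; its second and third displays, together with $|\tilde{\theta}(\epsilon,T)-\theta|=|\epsilon|/(2\sqrt{T})$ and the trivial bound $\|R_{\theta,y}^{[N]}(x,\cdot)-p_{\theta}(\cdot\mid y)\|_{tv}\le1$, yield a bound of order $T^{-1/2}$ uniform in $N_{T}$, $\theta\in\mathring{\Theta}$, $\epsilon\in\Xi$ and $t$. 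Collecting the three contributions and letting $T\rightarrow\infty$ gives the claim.

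The point to watch is precisely that this argument must \emph{not} use the convergence $\sup\|R_{\theta,y}^{[N]}(x,\cdot)-p_{\theta}(\cdot\mid y)\|_{tv}\to0$ of assumption~(A\ref{hyp:strong-likelihood-cond}), only its boundedness by $1$: in Lemma~\ref{lem:basicLipschitzcontinuities} that total-variation term already carries the factor $|\epsilon|/\sqrt{T}$, so it contributes $O(T^{-1/2})$ regardless of $N_{T}$. This is exactly what makes the supremum over $N_{T}\in\mathbb{N}$ harmless at this stage, and the reason $N_{T}$ is not yet required to grow with $T$; the ergodicity of $R_{\theta,y}^{[N]}$ will only be needed later, when the joint law of $(X_{t},X_{t}')$ must be shown close to the product of $p_{\theta}(\cdot\mid y_{t})$ with itself in order to identify the limiting variance $\sigma^{2}(\theta,\epsilon)$.
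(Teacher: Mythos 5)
Your proof is correct and follows essentially the same route as the paper's: the exact Taylor decomposition of Lemma \ref{lem:taylor-approximation}, a deterministic $O(T^{-1/2})$ bound on $S_{\theta,\epsilon,T}^{(3)}$, and a fluctuation/bias split of $S_{\theta,\epsilon,T}^{(2)}-\bar{S}_{\theta,\epsilon,T}^{(2)}$ controlled by independence and boundedness on one side and by Lemma \ref{lem:basicLipschitzcontinuities} with the total-variation factor simply bounded by $1$ on the other, which is exactly how the paper obtains uniformity in $N_{T}$. The only difference is bookkeeping: you treat the $X_{t}$ and $X'_{t}$ sums separately and argue in $L^{1}$ via Jensen, whereas the paper recentres both at once through $\tilde{S}_{\theta,\epsilon,T}^{(2)}$ and works with $\|\cdot\|_{2}$ throughout.
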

\begin{proof}
First we have 
\[
\Lambda_{T}(\theta,\epsilon)-S_{\theta,\epsilon,T}^{(1)}(\omega,\xi)-\bar{S}_{\theta,\epsilon,T}^{(2)}(\omega,\xi)=S_{\theta,\epsilon,T}^{(2)}(\omega,\xi)-\bar{S}_{\theta,\epsilon,T}^{(2)}(\omega)+S_{\theta,\epsilon,T}^{(3)}(\omega,\xi)
\]
and we are going to consider the second order moment of the term on
the right hand side\textendash we will then invoke the standard inequality
${\rm \mathbb{E}}_{\theta,\epsilon,T}^{\omega}|Z|\leq\sqrt{{\rm \mathbb{E}}_{\theta,\epsilon,T}^{\omega}[Z^{2}]}$
in order to conclude. In order to alleviate notation we introduce
$\|Z\|_{2}:=\sqrt{{\rm \mathbb{E}}_{\theta,\epsilon,T}^{\omega}\left(Z^{2}\right)}$,
which satisfies the triangle inequality, and drop the dependence on
$\omega$. We bound $\|S_{\theta,\epsilon,T}^{(3)}\|_{2}$ and $\|S_{\theta,\epsilon,T}^{(2)}-\bar{S}_{\theta,\epsilon,T}^{(2)}\|_{2}$.
Clearly we have 
\[
\|S_{\theta,\epsilon,T}^{(3)}\|_{2}\leq\frac{|\epsilon|^{3}}{48\sqrt{T}}2\bar{\ell}^{(3)}\quad.
\]
Now define 
\[
\tilde{S}_{\theta,\epsilon,T}^{(2)}:=\frac{\epsilon^{2}}{8T}\sum_{t=1}^{T}\left\{ \mathbb{E}_{\theta,\epsilon,T}^{\omega}[\ddot{\ell}_{\theta}(X_{t}\mid y_{t})+\ddot{\ell}_{\tilde{\theta}(\epsilon,T)}(X'_{t}\mid y_{t})]\right\} 
\]
and consider the upper bound 
\[
\|S_{\theta,\epsilon,T}^{(2)}-\bar{S}_{\theta,\epsilon,T}^{(2)}\|_{2}\leq\|S_{\theta,\epsilon,T}^{(2)}-\tilde{S}_{\theta,\epsilon,T}^{(2)}\|_{2}+\|\tilde{S}_{\theta,\epsilon,T}^{(2)}-\bar{S}_{\theta,\epsilon,T}^{(2)}\|_{2}.
\]
Using independence we obtain 
\begin{align*}
\|S_{\theta,\epsilon,T}^{(2)}-\tilde{S}_{\theta,\epsilon,T}^{(2)}\|_{2} & =\frac{\epsilon^{2}}{8T}\sqrt{\sum_{t=1}^{T}{\rm \mathbb{V}}_{\theta,\epsilon,T}^{\omega}\left(\ddot{\ell}_{\theta}(X_{t}\mid y_{t})+\ddot{\ell}_{\tilde{\theta}(\epsilon,T)}(X'_{t}\mid y_{t})\right)}\\
 & \leq\frac{\epsilon^{2}}{4\sqrt{T}}\bar{\ell}^{(2)}.
\end{align*}
Finally 
\begin{align*}
\|\tilde{S}_{\theta,\epsilon,T}^{(2)}-\bar{S}_{\theta,\epsilon,T}^{(2)}\|_{2} & =\frac{\epsilon^{2}}{8T}\|\sum_{t=1}^{T}\left\{ \mathbb{E}_{\theta,\epsilon,T}^{\omega}[\ddot{\ell}_{\theta}(X_{t}\mid y_{t})-\ddot{\ell}_{\tilde{\theta}(\epsilon,T)}(X'_{t}\mid y_{t})]\right\} \|_{2}.
\end{align*}
The estimate of the difference obtained in Lemma \ref{lem:basicLipschitzcontinuities}
leads to 
\begin{multline*}
|\mathbb{E}_{\theta,\epsilon,T}^{\omega}[\ddot{\ell}_{\theta}(X_{t}\mid y_{t})-\ddot{\ell}_{\tilde{\theta}(\epsilon,T)}(X'_{t}\mid y_{t})]|\\
\leq\bar{\ell}^{(1)}\bar{\ell}^{(2)}\frac{|\epsilon|}{\sqrt{T}}\sup_{(\theta,x,y)\in\mathring{\Theta}\times\mathsf{X}\times\mathsf{Y}}\|R_{\theta,y}^{[N_{T}]}\big(x,\cdot\big)-p_{\theta}(\cdot\mid y)\|_{tv}+(\bar{\ell}^{(1)}\bar{\ell}^{(2)}+\bar{\ell}^{(3)})\frac{|\epsilon|}{2\sqrt{T}}
\end{multline*}
and we conclude since the total variation term is bounded by $1$.
\end{proof}
The following result establishes that $P-a.s.$ one can approximate
$\Lambda_{T}(\theta,\epsilon;\omega,\xi)$ with $S_{\theta,\epsilon,T}^{(1)}(\omega,\xi)-\sigma^{2}(\theta,\epsilon)/2$
in the sense given in the corollary below. 
\begin{lem}
\label{lem:ucvbarS2}Assume (A\ref{hyp:strong-likelihood-cond}),
then 
\[
\lim_{T\rightarrow\infty}\sup_{(N_{T},\theta,\epsilon)\in\mathbb{N}\times\Theta\times\Xi}\big|\bar{S}_{\theta,\epsilon,T}^{(2)}(\omega)-\frac{\epsilon^{2}}{4}\mathbb{E_{\theta}}\big[\ddot{\ell}_{\theta}(X_{1}\mid Y_{1})\big]\big|=0\quad P-a.s.
\]
\end{lem}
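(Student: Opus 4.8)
The plan is to reduce Lemma \ref{lem:ucvbarS2} to a uniform strong law of large numbers for an i.i.d.\ array indexed by $\theta$. The first step would be to notice that $\bar{S}_{\theta,\epsilon,T}^{(2)}(\omega)$ carries no dependence on $N_{T}$ and depends on $\epsilon$ only through the multiplicative prefactor $\epsilon^{2}/4$, which is bounded on the compact set $\Xi$; the supremum over $(N_{T},\theta,\epsilon)$ therefore reduces, up to a finite constant, to a supremum over $\theta\in\Theta$. Introducing $h_{\theta}(y):=\int_{\mathsf{X}}\ddot{\ell}_{\theta}(x\mid y)\,p_{\theta}({\rm d}x\mid y)=\mathbb{E}_{\theta}^{\omega}\big[\ddot{\ell}_{\theta}(X_{1}\mid y)\big]$ --- a deterministic function of $(\theta,y)$ only, since the marginal law of $X_{t}$ under $\mathbb{P}_{\theta,\epsilon,T}^{\omega}$ is $p_{\theta}(\cdot\mid y_{t})$ irrespective of $\epsilon,T,N_{T}$ --- one has $\bar{S}_{\theta,\epsilon,T}^{(2)}(\omega)=\frac{\epsilon^{2}}{4}\cdot\frac{1}{T}\sum_{t=1}^{T}h_{\theta}(y_{t})$, while unwinding the definition of $\mathbb{E}_{\theta}$ gives $\frac{\epsilon^{2}}{4}\mathbb{E}_{\theta}\big[\ddot{\ell}_{\theta}(X_{1}\mid Y_{1})\big]=\frac{\epsilon^{2}}{4}\,\mathbb{E}_{P}\big[h_{\theta}(Y_{1})\big]$. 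So the target becomes the $P$-a.s. statement
\[
\lim_{T\to\infty}\sup_{\theta\in\Theta}\Big|\frac{1}{T}\sum_{t=1}^{T}h_{\theta}(Y_{t})-\mathbb{E}_{P}\big[h_{\theta}(Y_{1})\big]\Big|=0 .
\]

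Next I would establish the two properties that power a uniform law of large numbers. Uniform boundedness is immediate: $|h_{\theta}(y)|\leq\bar{\ell}^{(2)}<\infty$ by the regularity assumption in (A\ref{hyp:strong-likelihood-cond}), so each $h_{\theta}(Y_{1})$ is integrable with a $\theta$-independent bound. For uniform Lipschitz continuity in $\theta$ I would invoke the last inequality of Lemma \ref{lem:basicLipschitzcontinuities} with $\phi_{\theta}(x,y):=\ddot{\ell}_{\theta}(x\mid y)$: here $\bar{\phi}=\bar{\ell}^{(2)}$, and the mean value theorem together with the bound on $\dddot{\ell}$ gives $|\phi_{\theta}(x,y)-\phi_{\theta'}(x,y)|\leq\bar{\ell}^{(3)}|\theta-\theta'|$, so that $|h_{\theta}(y)-h_{\theta'}(y)|\leq L|\theta-\theta'|$ with $L:=\bar{\ell}^{(1)}\bar{\ell}^{(2)}+\bar{\ell}^{(3)}$, uniformly in $y$. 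The same $L$ bounds the Lipschitz constants of $\theta\mapsto\mathbb{E}_{P}[h_{\theta}(Y_{1})]$ and, uniformly in $T$, of $\theta\mapsto T^{-1}\sum_{t\leq T}h_{\theta}(Y_{t})$.

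With these in hand the finish is the usual net argument: for each $j\in\mathbb{N}$ fix a finite $1/j$-net $\{\theta_{j,1},\ldots,\theta_{j,m_{j}}\}$ of the compact set $\Theta$; Kolmogorov's SLLN applied to the i.i.d.\ bounded variables $h_{\theta_{j,i}}(Y_{t})$ yields a single event of full $P$-probability (intersection of countably many) on which $T^{-1}\sum_{t\leq T}h_{\theta_{j,i}}(Y_{t})\to\mathbb{E}_{P}[h_{\theta_{j,i}}(Y_{1})]$ for all $j,i$; then for any $\theta$ one picks the nearest net point and uses the two Lipschitz bounds to get
\[
\Big|\tfrac{1}{T}\sum_{t\leq T}h_{\theta}(Y_{t})-\mathbb{E}_{P}[h_{\theta}(Y_{1})]\Big|\leq\Big|\tfrac{1}{T}\sum_{t\leq T}h_{\theta_{j,i}}(Y_{t})-\mathbb{E}_{P}[h_{\theta_{j,i}}(Y_{1})]\Big|+\frac{2L}{j},
\]
so $\limsup_{T}\sup_{\theta\in\Theta}$ of the left-hand side is at most $2L/j$ for every $j$, hence $0$; restoring the factor $\sup_{\epsilon\in\Xi}\epsilon^{2}/4<\infty$ gives the lemma. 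I do not expect a genuine obstacle here: the only point requiring care is bookkeeping --- verifying joint measurability of $(\theta,y)\mapsto h_{\theta}(y)$ (it is a Carath\'{e}odory function, being continuous in $\theta$ by the Lipschitz bound and measurable in $y$ by the measurability assumption in (A\ref{hyp:strong-likelihood-cond})), and checking that the various $\mathbb{E}$-symbols collapse as claimed --- and since everything in sight is uniformly bounded there is no analytic subtlety.
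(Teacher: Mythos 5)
Your proof is correct, and it rests on exactly the same structural inputs as the paper's argument---the reduction to a $\theta$-indexed uniform strong law of large numbers for the i.i.d.\ observations (after noting that $\bar{S}^{(2)}_{\theta,\epsilon,T}$ does not depend on $N_T$ and that $\epsilon$ enters only through the bounded prefactor $\epsilon^{2}/4$), uniform boundedness by $\bar{\ell}^{(2)}$, and the continuity in $\theta$ supplied by Lemma \ref{lem:basicLipschitzcontinuities}---but the execution differs. The paper does not prove the uniform law itself: it invokes \citet[Lemma 1]{tauchen1985diagnostic}, checking its hypotheses, namely continuity of $(\theta,\epsilon)\mapsto\epsilon^{2}\mathbb{E}_{\theta}^{\omega}\big[\ddot{\ell}_{\theta}(X_{1}\mid y)\big]$ via Lemma \ref{lem:basicLipschitzcontinuities}, compactness of $\Theta\times\Xi$, domination by the constant $\bar{\ell}^{(2)}\sup_{\epsilon\in\Xi}\epsilon^{2}$, and measurability of the suprema via the supremum-over-rationals device. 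You instead prove the ULLN from scratch: a finite $1/j$-net of the compact $\Theta$, Kolmogorov's SLLN for the bounded i.i.d.\ variables $h_{\theta_{j,i}}(Y_{t})$ at the net points, and transfer to arbitrary $\theta$ through the explicit Lipschitz constant $L=\bar{\ell}^{(1)}\bar{\ell}^{(2)}+\bar{\ell}^{(3)}$ obtained from the last inequality of Lemma \ref{lem:basicLipschitzcontinuities} combined with the third-derivative bound in (A\ref{hyp:strong-likelihood-cond}). Your route is more elementary and self-contained, and it yields a quantitative modulus of continuity, at the price of requiring the Lipschitz estimate rather than mere continuity; the paper's route is shorter but delegates the uniformity to the cited lemma. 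Your measurability bookkeeping (Carath\'eodory property of $(\theta,y)\mapsto h_{\theta}(y)$ and sup over a countable dense subset) matches what the paper checks as condition (ii) of Tauchen's lemma. One cosmetic remark: Lemma \ref{lem:basicLipschitzcontinuities} is stated for $\theta,\theta'\in\mathring{\Theta}$, whereas the supremum in Lemma \ref{lem:ucvbarS2} runs over $\Theta$; since $\Theta$ is convex and the derivative bounds of (A\ref{hyp:strong-likelihood-cond}) hold on all of $\Theta$, your use of the Lipschitz bound over the full parameter set is harmless, but it is worth a sentence.
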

\begin{proof}
We use a straightforward adaptation of the simple result of \citet[Lemma 1]{tauchen1985diagnostic}.
Conditions (iii) and (iv) of \citet[Lemma 1]{tauchen1985diagnostic}
are immediate since for any $\omega\in\mathsf{Y}^{\mathbb{N}}$, $(\theta,\epsilon)\mapsto\epsilon^{2}\mathbb{E}_{\theta}^{\omega}\big[\ddot{\ell}_{\theta}\big(X{}_{1}\mid y\big)\big]$
is continuous from Lemma \ref{lem:basicLipschitzcontinuities} and
$\Theta\times\Xi$ is assumed compact, implying $\sup_{(\theta,\epsilon)\in\Theta\times\Xi}|\epsilon^{2}\mathbb{E}_{\theta}^{\omega}\big[\ddot{\ell}_{\theta}\big(X{}_{1}\mid y\big)\big]|\leq\ddot{\ell}^{(2)}\sup_{\epsilon\in\Xi}\epsilon^{2}<\infty$,
which is obviously integrable w.r.t the distribution of the observations.
We are left with establishing the measurability of the suprema considered,
covered by (ii) of \citet[Lemma 1]{tauchen1985diagnostic}. Note that
if for any $y_{1:T}\in\mathsf{Y}^{T}$ $(\theta,\epsilon)\mapsto\phi(\theta,\epsilon,y_{1:T})$
is continuous then 
\[
y_{1:T}\mapsto\sup_{(\theta,\epsilon)\in\Theta\times\Xi}\phi(\theta,\epsilon,y_{1:T})=\sup_{(\theta,\epsilon)\in(\Theta\times\Xi)\cap\mathbb{Q}^{2}}\phi(\theta,\epsilon,y_{1:T})
\]
is measurable. Since for any $y\in\mathsf{Y}$, $(\theta,\epsilon)\mapsto\epsilon^{2}\mathbb{E}_{\theta}^{\omega}\big[\ddot{\ell}_{\theta}\big(X{}_{1}\mid y\big)\big]$
is continuous by Lemma \ref{lem:basicLipschitzcontinuities}, we conclude.
\end{proof}
\begin{cor}
\label{cor:almostsureapproximation}Recalling that $\sigma^{2}(\theta,\epsilon):=\frac{-\epsilon^{2}}{2}\mathbb{E}_{\theta}\left[\ddot{\ell}_{\theta}(X_{1}\mid Y_{1})\right]$,
$P-a.s.$ we have 
\[
\lim_{T\rightarrow\infty}\sup_{(N_{T},\theta,\epsilon)\in\mathbb{N}\times\mathring{\Theta}\times\Xi}\mathbb{E}_{\theta,\epsilon,T}^{\omega}\big|\Lambda_{T}(\theta,\epsilon;\omega,\xi)-S_{\theta,\epsilon,T}^{(1)}(\omega,\xi)+\sigma^{2}(\theta,\epsilon)/2\big|=0.
\]
\end{cor}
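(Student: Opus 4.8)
The plan is to derive the corollary directly from Lemmas \ref{lem:firstapproximation} and \ref{lem:ucvbarS2} by a triangle-inequality argument, the only preliminary being the identification
\[
-\tfrac{1}{2}\sigma^{2}(\theta,\epsilon)=\tfrac{\epsilon^{2}}{4}\,\mathbb{E}_{\theta}\big[\ddot{\ell}_{\theta}(X_{1}\mid Y_{1})\big],
\]
which is immediate from $\sigma^{2}(\theta,\epsilon):=-\tfrac{\epsilon^{2}}{2}\mathbb{E}_{\theta}[\ddot{\ell}_{\theta}(X_{1}\mid Y_{1})]$. First I would write, for any $(\theta,\epsilon,\omega,T)\in\mathring{\Theta}\times\Xi\times\mathsf{Y}^{\mathbb{N}}\times\mathbb{N}$ and any $N_{T}$, the decomposition
\[
\Lambda_{T}(\theta,\epsilon;\omega,\xi)-S_{\theta,\epsilon,T}^{(1)}(\omega,\xi)+\tfrac{1}{2}\sigma^{2}(\theta,\epsilon)=\Big(\Lambda_{T}-S_{\theta,\epsilon,T}^{(1)}-\bar{S}_{\theta,\epsilon,T}^{(2)}(\omega)\Big)+\Big(\bar{S}_{\theta,\epsilon,T}^{(2)}(\omega)+\tfrac{1}{2}\sigma^{2}(\theta,\epsilon)\Big),
\]
then apply $\mathbb{E}_{\theta,\epsilon,T}^{\omega}|\cdot|$ and bound the right-hand side by the triangle inequality.

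The useful structural remark is that $\bar{S}_{\theta,\epsilon,T}^{(2)}(\omega)$ is $\xi$-independent, being a deterministic function of $\omega,\theta,\epsilon,T$; hence $\mathbb{E}_{\theta,\epsilon,T}^{\omega}\big|\bar{S}_{\theta,\epsilon,T}^{(2)}(\omega)+\tfrac{1}{2}\sigma^{2}(\theta,\epsilon)\big|=\big|\bar{S}_{\theta,\epsilon,T}^{(2)}(\omega)+\tfrac{1}{2}\sigma^{2}(\theta,\epsilon)\big|$, and by the identification above this is exactly the quantity whose supremum over $(N_{T},\theta,\epsilon)$ is shown to vanish $P$-a.s.\ in Lemma \ref{lem:ucvbarS2} (using that the supremum over $\mathring{\Theta}$ is dominated by that over $\Theta$). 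Consequently, for $P$-almost every $\omega$,
\[
\sup_{(N_{T},\theta,\epsilon)}\mathbb{E}_{\theta,\epsilon,T}^{\omega}\big|\Lambda_{T}-S_{\theta,\epsilon,T}^{(1)}+\tfrac{1}{2}\sigma^{2}(\theta,\epsilon)\big|\le\sup_{(N_{T},\theta,\epsilon)}\mathbb{E}_{\theta,\epsilon,T}^{\omega}\big|\Lambda_{T}-S_{\theta,\epsilon,T}^{(1)}-\bar{S}_{\theta,\epsilon,T}^{(2)}(\omega)\big|+\sup_{(N_{T},\theta,\epsilon)}\big|\bar{S}_{\theta,\epsilon,T}^{(2)}(\omega)+\tfrac{1}{2}\sigma^{2}(\theta,\epsilon)\big|.
\]
Letting $T\to\infty$, the first term on the right tends to $0$ for every $\omega$ by Lemma \ref{lem:firstapproximation} and the second tends to $0$ for $P$-almost every $\omega$ by Lemma \ref{lem:ucvbarS2}; hence the left-hand side tends to $0$, $P$-a.s., which is the assertion.

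Since both ingredients are already in hand, I do not anticipate a genuine obstacle: the argument is essentially bookkeeping. The only points worth a line of care are that the exceptional $P$-null set originates solely from Lemma \ref{lem:ucvbarS2} (Lemma \ref{lem:firstapproximation} being valid for every $\omega$), that the $\sup$ over $(N_{T},\theta,\epsilon)$ distributes over the two summands, and that one should reconcile the (suppressed) $\epsilon$-subscript in the displayed definition of $\bar{S}_{T}^{(2)}$ in Lemma \ref{lem:firstapproximation} with the $\epsilon$-dependent quantity controlled in Lemma \ref{lem:ucvbarS2}, which is purely cosmetic.
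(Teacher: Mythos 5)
Your proposal is correct and is essentially the argument the paper intends: the corollary is stated without an explicit proof precisely because it follows by the triangle inequality from Lemma \ref{lem:firstapproximation} (valid for every $\omega$) and Lemma \ref{lem:ucvbarS2} (valid $P$-a.s.), together with the identity $-\sigma^{2}(\theta,\epsilon)/2=\tfrac{\epsilon^{2}}{4}\mathbb{E}_{\theta}\big[\ddot{\ell}_{\theta}(X_{1}\mid Y_{1})\big]$ and the observation that $\bar{S}_{\theta,\epsilon,T}^{(2)}(\omega)$ is deterministic given $\omega$. Your remarks on the null set, the supremum over $\mathring{\Theta}$ versus $\Theta$, and the suppressed $\epsilon$-subscript are all accurate bookkeeping points.
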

We now seek to establish that $S_{\theta,\epsilon,T}^{(1)}(\omega,\xi)$
satisfies a $(\theta,\epsilon)$-uniform central limit theorem (U-CLT)
with limiting mean and variance $P-a.s.$ independent of $\omega$.

\section{{\normalsize{}{}Conditional CLT for $S_{\theta,\epsilon,T}^{(1)}(\xi,\omega)$}}

We now apply a CLT conditional upon the observations and will notice
that $P-$a.s. the constants involved are asymptotically independent
of the realisation of the observations.

\subsection{{\normalsize{}{}Checking Lyapunov's theorem conditions}}

We will use the following technical lemma 
\begin{lem}
\label{lem:boundcovariance}Assume (A\ref{hyp:strong-likelihood-cond}).
Then for any $(\theta,\epsilon)\in\mathring{\Theta}\times\Xi$, $T\geq1$,
$N_{T}\in\mathbb{N}$ and $\omega\in\mathsf{Y}^{\mathbb{N}}$
\[
\left|\mathbb{C}_{\theta,\epsilon,T}^{\omega}\left[\dot{\ell}_{\theta}(X_{t}\mid y_{t}),\dot{\ell}_{\tilde{\theta}(\epsilon,T)}(X'_{t}\mid y_{t})\right]\right|\leq\sup_{(\theta,x,y)\in\Theta\times\mathsf{X}\times\mathsf{Y}}\|R_{\theta,y}^{[N_{T}]}(x,\cdot)-p_{\theta}(\cdot\mid y)\|_{tv}\times\big(\bar{\ell}^{(1)}\big)^{2}\left[1+\bar{\ell}^{(1)}\times\frac{|\epsilon|}{\sqrt{T}}\right].
\]
Further for any $t\geq1$ we have 
\[
\lim_{T\rightarrow\infty}\sup_{(N_{T},\theta,\epsilon,y_{t})\in\mathbb{N}\times\mathring{\Theta}\times\Xi\times\mathsf{Y}}\left|\mathbb{V}_{\theta,\epsilon,T}^{\omega}\left[\dot{\ell}_{\tilde{\theta}(\epsilon,T)}\big(X'_{t}\mid y_{t}\big)\right]-\mathbb{V}_{\theta}^{y}\left[\dot{\ell}_{\theta}(X{}_{t}\mid y_{t})\right]\right|=0.
\]
\end{lem}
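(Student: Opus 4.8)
The plan is to prove the two assertions separately, both relying on the coupling structure of $\xi_{T}$ together with Lemma \ref{lem:basicLipschitzcontinuities}. For the covariance bound, I would first recall that by construction the pair $(X_{t},X_{t}')$ is distributed so that $X_{t}\sim p_{\theta}(\cdot\mid y_{t})$ and, conditional on $X_{t}=x$, $X_{t}'\sim R_{\tilde{\theta}(\epsilon,T),y_{t}}^{[N_{T}]}(x,\cdot)$. Writing $\delta_{T}:=\sup_{(\theta,x,y)}\|R_{\theta,y}^{[N_{T}]}(x,\cdot)-p_{\theta}(\cdot\mid y)\|_{tv}$, the idea is that the conditional law of $X_{t}'$ given $X_{t}$ is within $\delta_{T}$ in total variation of the \emph{unconditional} law $p_{\tilde{\theta}(\epsilon,T)}(\cdot\mid y_{t})$, uniformly in $x$. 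Consequently, for the bounded functions $f(x):=\dot{\ell}_{\theta}(x\mid y_{t})$ and $g(x):=\dot{\ell}_{\tilde{\theta}(\epsilon,T)}(x\mid y_{t})$, both bounded by $\bar{\ell}^{(1)}$, one estimates
\[
\big|\mathbb{E}_{\theta,\epsilon,T}^{\omega}[f(X_{t})g(X_{t}')]-\mathbb{E}_{\theta}^{\omega}[f(X_{t})]\,\mathbb{E}_{\tilde{\theta}(\epsilon,T)}^{\omega}[g(X_{1})]\big|\leq 2\bar{\ell}^{(1)}\delta_{T}\times\bar{\ell}^{(1)},
\]
by conditioning on $X_{t}$ and replacing the conditional expectation of $g(X_{t}')$ by its limit, at a cost bounded by $2\bar{\ell}^{(1)}\delta_{T}$ for each $x$. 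The covariance $\mathbb{C}_{\theta,\epsilon,T}^{\omega}[f(X_{t}),g(X_{t}')]$ differs from the displayed quantity only by the extra term $\mathbb{E}_{\theta}^{\omega}[f(X_{t})]\big(\mathbb{E}_{\tilde{\theta}(\epsilon,T)}^{\omega}[g(X_{1})]-\mathbb{E}_{\theta,\epsilon,T}^{\omega}[g(X_{t}')]\big)$; the second factor is itself $\leq\bar{\ell}^{(1)}\delta_{T}\cdot|\epsilon|/\sqrt{T}$ by the (essentially) first displayed inequality of Lemma \ref{lem:basicLipschitzcontinuities} specialised to $\phi_{\theta}=\dot{\ell}_{\theta}$ with $\bar{\phi}=\bar{\ell}^{(1)}$ (using $\mathbb{E}_{\theta}^{\omega}[\dot{\ell}_{\theta}(X_{1}\mid y)]=0$ for the ``zero-variance part''), and $|\mathbb{E}_{\theta}^{\omega}[f(X_{t})]|\leq\bar{\ell}^{(1)}$. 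Collecting terms yields the factor $(\bar{\ell}^{(1)})^{2}[1+\bar{\ell}^{(1)}|\epsilon|/\sqrt{T}]$ times $\delta_{T}$; one should be slightly careful about constants, but the structure is exactly as above.

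For the variance statement, the plan is to bound
\[
\big|\mathbb{V}_{\theta,\epsilon,T}^{\omega}[\dot{\ell}_{\tilde{\theta}(\epsilon,T)}(X_{t}'\mid y_{t})]-\mathbb{V}_{\theta}^{y_{t}}[\dot{\ell}_{\theta}(X_{t}\mid y_{t})]\big|
\]
by splitting it as a difference of second moments plus a difference of squared first moments, and applying Lemma \ref{lem:basicLipschitzcontinuities} twice: once with $\phi_{\theta}(x,y):=\dot{\ell}_{\theta}(x\mid y)^{2}$ (which is bounded by $(\bar{\ell}^{(1)})^{2}$ and Lipschitz in $\theta$ with constant $2\bar{\ell}^{(1)}\bar{\ell}^{(2)}$, using $|\dot\ell_\theta-\dot\ell_{\theta'}|\le\bar\ell^{(2)}|\theta-\theta'|$ and $|\dot\ell|\le\bar\ell^{(1)}$), and once with $\phi_{\theta}(x,y):=\dot{\ell}_{\theta}(x\mid y)$ for the first moments. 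In both applications the right-hand side of Lemma \ref{lem:basicLipschitzcontinuities} is a sum of a term proportional to $\delta_{T}$ and a term proportional to $|\epsilon|/\sqrt{T}$, and since $\epsilon$ ranges over the compact set $\Xi$ and $\delta_{T}\to0$ by assumption (A\ref{hyp:strong-likelihood-cond})(\ref{enu:propertieskernel}), the supremum over $(N_{T},\theta,\epsilon,y_{t})$ vanishes as $T\to\infty$. Note that the limiting variance is $\mathbb{V}_{\theta}^{y_{t}}[\dot{\ell}_{\theta}(X_{t}\mid y_{t})]$ rather than $\mathbb{V}_{\tilde{\theta}(\epsilon,T)}^{y_{t}}[\dot{\ell}_{\tilde{\theta}(\epsilon,T)}(X_{t}\mid y_{t})]$ because $\tilde{\theta}(\epsilon,T)\to\theta$ as $T\to\infty$ at rate $|\epsilon|/(2\sqrt{T})$, and continuity in $\theta$ (again from Lemma \ref{lem:basicLipschitzcontinuities}) absorbs that discrepancy.

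The main obstacle I anticipate is purely bookkeeping: keeping the total-variation coupling argument and the Lipschitz-in-$\theta$ argument cleanly separated so that the two distinct error sources ($\delta_{T}$ from the imperfect MCMC kernel, and $|\epsilon|/\sqrt{T}$ from the drift $\tilde{\theta}(\epsilon,T)-\theta$) are tracked with the right constants, and verifying that the uniformity over $N_{T}\in\mathbb{N}$ is genuinely free — it is, because every bound is stated in terms of $\delta_{T}=\sup_{N}$ is \emph{not} what we use; rather we use that for the stated inequality $\delta_{T}$ appears as the kernel-dependent factor and the claimed supremum over $N_{T}$ in the limit holds because the bounds are monotone-free in $N_{T}$ and the $|\epsilon|/\sqrt T$ part is $N_{T}$-independent while the $\delta_{T}$ part is $\le 1$ uniformly and $\to 0$ along any choice $\{N_{T}\}$ with $N_T\to\infty$; here, more carefully, the supremum over $N_{T}\in\mathbb{N}$ in the second display is taken \emph{after} bounding by $\sup_{N}\|R^{[N]}-p\|_{tv}$-free quantities, i.e.\ by $1$ for the TV factor, so it is in fact the $|\epsilon|/\sqrt T$ term that drives the limit. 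I would state this carefully to avoid circularity.
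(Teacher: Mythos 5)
Your proposal follows essentially the same route as the paper's own proof: for the covariance you condition on $X_{t}$, exploit the zero mean of the scores under their stationary laws and the uniform total-variation bound on $R_{\theta,y}^{[N_T]}$, and for the variance you reduce to first and second moments of $\dot{\ell}$ and invoke Lemma \ref{lem:basicLipschitzcontinuities} with $\phi_{\theta}=\dot{\ell}_{\theta}$ and $\phi_{\theta}=\dot{\ell}_{\theta}^{2}$, concluding the $N_{T}$-uniformity by bounding the total-variation factor by $1$ so that the $|\epsilon|/\sqrt{T}$ term drives the limit. The only cosmetic difference is that the paper uses Cauchy--Schwarz to retain a $\sqrt{\mathbb{V}_{\theta}^{\omega}[\dot{\ell}_{\theta}]}$ factor where you use the cruder bound $\bar{\ell}^{(1)}$, and the loose constants you flag are equally loose in the paper's own displayed bound, so this is not a gap.
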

\begin{proof}
First note that $\mathbb{E}_{\tilde{\theta}(\epsilon,T)}^{\omega}\big[\dot{\ell}_{\tilde{\theta}(\epsilon,T)}(X{}_{t}\mid y_{t})\big]=0$
and $\mathbb{E}_{\theta}^{\omega}\left[\dot{\ell}_{\theta}(X_{t}\mid y_{t})\right]=0$
and apply the Cauchy-Schwartz inequality to obtain 
\begin{multline}
\left|\mathbb{E}_{\theta,\epsilon,T}^{\omega}\left[\dot{\ell}_{\theta}(X_{t}\mid y_{t})\mathbb{E}_{\theta,\epsilon,T}^{\omega}\big[\dot{\ell}_{\tilde{\theta}(\epsilon,T)}(X'_{t}\mid y_{t})-\mathbb{E}_{\tilde{\theta}(\epsilon,T)}^{\omega}\big[\dot{\ell}_{\tilde{\theta}(\epsilon,T)}(X{}_{t}\mid y_{t})\big]\mid X_{t}\big]\right]\right|\leq\\
\sqrt{\sup_{(\theta,y_{t})\in\mathring{\Theta}\times\mathsf{Y}}\mathbb{V}_{\theta}^{\omega}\left[\dot{\ell}_{\theta}(X{}_{t}\mid y_{t})\right]}\times2\bar{\ell}^{(1)}\sup_{(\theta,x,y)\in\mathring{\Theta}\times\mathsf{X}\times\mathsf{Y}}\|R_{\theta,y}^{[N_{T}]}(x,\cdot)-p_{\theta}(\cdot\mid y)\|_{tv}.\label{eq:covariancevanishes}
\end{multline}
For the second statement, it is sufficient to show that for $\gamma\in\{1,2\}$
\[
\lim_{T\rightarrow\infty}\sup_{(\theta,\epsilon,y_{t})\in\mathring{\Theta}\times\Xi\times\mathsf{Y}}\left|\mathbb{E}_{\theta}^{\omega}\left[\dot{\ell}_{\tilde{\theta}(\epsilon,T)}^{\gamma}(X'_{t}\mid y_{t})\right]-\mathbb{E}_{\theta}^{\omega}\left[\dot{\ell}_{\theta}^{\gamma}(X{}_{t}\mid y_{t})\right]\right|=0.
\]
The case $\gamma=1$ is treated in the proof of Lemma \ref{lem:limitexpectationinCLT}.
For $\gamma=2$ we again use Lemma \ref{lem:basicLipschitzcontinuities}
and get the bound 
\begin{multline*}
|\mathbb{E}_{\theta,\epsilon,T}^{\omega}[\dot{\ell}_{\theta}^{2}(X_{t}\mid y_{t})-\dot{\ell}_{\tilde{\theta}(\epsilon,T)}^{2}(X'_{t}\mid y_{t})]|\\
\leq\big(\bar{\ell}^{(1)}\big)^{3}\frac{|\epsilon|}{\sqrt{T}}\sup_{(\theta,x,y)\Theta\times\mathsf{X}\times\mathsf{Y}}\|R_{\theta,y}^{[N_{T}]}\big(x,\cdot\big)-p_{\theta}(\cdot\mid y)\|_{tv}+(\big(\bar{\ell}^{(1)}\big)^{3}+2\bar{\ell}^{(1)}\bar{\ell}^{(2)})\frac{|\epsilon|}{2\sqrt{T}}.
\end{multline*}
\end{proof}
\begin{cor}
\label{cor:existenceN0}Under (A\ref{hyp:strong-likelihood-cond})
there exists $N_{0},T_{0}\in\mathbb{N}$ such that for any $\big\{ N_{T}\big\}$
such that $\liminf_{T\rightarrow\infty}N_{T}\geq N_{0}$ then 
\[
\sup_{T\geq T_{0}}\sup_{(\theta,\epsilon,y_{t})\in\mathring{\Theta}\times\Xi\times\mathsf{Y}}\frac{\left|\mathbb{C}_{\theta,\epsilon,T}^{\omega}\left[\dot{\ell}_{\theta}(X_{t}\mid y_{t}),\dot{\ell}_{\tilde{\theta}(\epsilon,T)}(X'_{t}\mid y_{t})\right]\right|}{\sqrt{\mathbb{V}_{\theta}^{\omega}\left(\dot{\ell}_{\theta}\big(X_{t}\mid y_{t}\big)\right)\mathbb{V}_{\theta,\epsilon,T}^{\omega}\left(\dot{\ell}_{\tilde{\theta}(\epsilon,T)}\big(X'_{t}\mid y_{t}\big)\right)}}<1,
\]
implying that the first condition of Lemma \ref{lem:UCLT} below holds. 
\end{cor}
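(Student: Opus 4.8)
The plan is to read the statement off directly from the two estimates of Lemma \ref{lem:boundcovariance}, together with the uniform lower bound on the (conditional) Fisher information in (A\ref{hyp:strong-likelihood-cond}) and the ergodicity of the kernels $R_{\theta,y}^{[N]}$. Abbreviate the numerator by $|\mathbb{C}|$ and the two factors under the root by $\mathbb{V}_1:=\mathbb{V}_\theta^\omega\big(\dot\ell_\theta(X_t\mid y_t)\big)$ and $\mathbb{V}_2:=\mathbb{V}_{\theta,\epsilon,T}^\omega\big(\dot\ell_{\tilde\theta(\epsilon,T)}(X'_t\mid y_t)\big)$, and set $\delta_N:=\sup_{(\theta,x,y)}\|R_{\theta,y}^{[N]}(x,\cdot)-p_\theta(\cdot\mid y)\|_{tv}$. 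Since $\Xi$ is compact and the $\bar\ell^{(i)}$ are finite, the first bound of Lemma \ref{lem:boundcovariance} gives $|\mathbb{C}|\le C\,\delta_{N_T}$ for a model constant $C<\infty$, uniformly in $T$, $N_T$ and $(\theta,\epsilon,y_t)\in\mathring\Theta\times\Xi\times\mathsf Y$; and item \ref{enu:propertieskernel} of (A\ref{hyp:strong-likelihood-cond}) gives $\delta_N\to0$. So the numerator can be made as small as we like, uniformly in everything else, by taking $N_T$ large.

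For the denominator, item \ref{enu:variancepositiveforallthetas} of (A\ref{hyp:strong-likelihood-cond}) supplies $v_-:=\inf_{(\theta,y)\in\Theta\times\mathsf Y}\mathbb{V}_\theta^\omega\big(\dot\ell_\theta(X_1\mid y)\big)>0$, hence $\mathbb{V}_1\ge v_-$; and the second statement of Lemma \ref{lem:boundcovariance} says $\mathbb{V}_2\to\mathbb{V}_\theta^{y_t}\big(\dot\ell_\theta(X_t\mid y_t)\big)$ uniformly in $N_T,\theta,\epsilon,y_t$ as $T\to\infty$, so there is $T_0$ with $\mathbb{V}_2\ge v_-/2$ for all $T\ge T_0$. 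Combining the three estimates, for $T\ge T_0$ the ratio in the corollary is bounded by $\sqrt2\,C\,\delta_{N_T}/v_-$. The multiplicative constant $\sqrt2\,C/v_-$ depends only on the quantities appearing in (A\ref{hyp:strong-likelihood-cond}), not on $N$ or $T$, so I first fix $C$ and $v_-$, then choose $N_0$ with $\sqrt2\,C\,\delta_N/v_-<1$ for all $N\ge N_0$ (possible since $\delta_N\to0$), and then choose $T_0$ as above. Any sequence with $\liminf_T N_T\ge N_0$ satisfies $N_T\ge N_0$ for all sufficiently large $T$, so after possibly enlarging $T_0$ one obtains $\sup_{T\ge T_0}\sup_{(\theta,\epsilon,y_t)}|\mathbb{C}|/\sqrt{\mathbb{V}_1\mathbb{V}_2}<1$, which is the claimed bound. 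The implication to Lemma \ref{lem:UCLT} is then immediate: a correlation coefficient uniformly bounded away from $1$ forces $\mathbb{V}_{\theta,\epsilon,T}^\omega\big(\dot\ell_\theta(X_t\mid y_t)+\dot\ell_{\tilde\theta(\epsilon,T)}(X'_t\mid y_t)\big)$ to be bounded below by a positive multiple of $\mathbb{V}_1+\mathbb{V}_2\ge 3v_-/2$, which is exactly the non-degeneracy hypothesis required there.

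Almost everything above is bookkeeping on top of Lemma \ref{lem:boundcovariance}; the one point that genuinely needs care, and the reason a burn-in $T_0$ enters here at all rather than just a constraint on $N$, is the lower bound $\mathbb{V}_2\ge v_-/2$. This is only an asymptotic ($T\to\infty$) statement: for small $T$ the perturbation $\tilde\theta(\epsilon,T)=\theta+\epsilon/(2\sqrt T)$ is not negligible, the kernel $R_{\tilde\theta(\epsilon,T),y_t}^{[N_T]}$ is applied to a state drawn from $p_\theta(\cdot\mid y_t)$ rather than from $p_{\tilde\theta(\epsilon,T)}(\cdot\mid y_t)$, and $\mathbb{V}_2$ need not be close to the Fisher information. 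So the main obstacle is making sure the order in which constants are fixed is consistent — $v_-$ and $C$ from the assumptions, then $N_0$ from $\delta_N\to0$, then $T_0$ from the variance convergence — with none of them depending on the particular sequence $\{N_T\}$ except, at the very end, the place beyond which $N_T\ge N_0$.
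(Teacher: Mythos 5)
Your argument is correct and is exactly the (implicit) derivation the paper intends for this corollary: bound the numerator by $(\bar{\ell}^{(1)})^{2}\big(1+\bar{\ell}^{(1)}\sup_{\Xi}|\epsilon|\big)$ times the total-variation discrepancy, which vanishes uniformly as $N\to\infty$ by (A\ref{hyp:strong-likelihood-cond}), and lower-bound the denominator via the uniform positivity of $\mathbb{V}_{\theta}^{\omega}\big(\dot{\ell}_{\theta}(X_{t}\mid y_{t})\big)$ together with the second statement of Lemma \ref{lem:boundcovariance} for $T\geq T_{0}$. Your ordering of the choices ($v_{-}$ and $C$ first, then $N_{0}$, then $T_{0}$, with the sequence $\{N_{T}\}$ entering only through the point past which $N_{T}\geq N_{0}$) and the deduction of condition \eqref{eq:cond-covariance} are both sound, so there is nothing to add.
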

\begin{lem}
\label{lem:UCLT}Assume (A\ref{hyp:strong-likelihood-cond}) and let
$\big\{ N_{T}\big\}\in\mathbb{N}$ be such that for some $T_{0}\in\mathbb{N}$
\begin{equation}
\inf_{T\geq T_{0}}\inf_{(\theta,\epsilon,y_{t})\in\mathring{\Theta}\times\Xi\times\mathsf{Y}}\frac{\mathbb{C}_{\theta,\epsilon,T}^{\omega}\left[\dot{\ell}_{\theta}(X_{t}\mid y_{t}),\dot{\ell}_{\tilde{\theta}(\epsilon,T)}(X'_{t}\mid y_{t})\right]}{\sqrt{\mathbb{V}_{\theta}^{\omega}\left(\dot{\ell}_{\theta}\big(X_{t}\mid y_{t}\big)\right)\mathbb{V}_{\theta,\epsilon,T}^{\omega}\left(\dot{\ell}_{\tilde{\theta}(\epsilon,T)}\big(X'_{t}\mid y_{t}\big)\right)}}>-1.\label{eq:cond-covariance}
\end{equation}
Let $S_{\theta,\epsilon,T}^{(1)}(\omega,\xi)$ be as defined in Lemma
\ref{lem:taylor-approximation} and let for any $(\theta,\epsilon)\in\mathring{\Theta}\times\Xi$
and $\omega\in\mathsf{Y}^{\mathbb{N}}$ 
\[
\bar{S}_{\theta,\epsilon,T}^{(1)}(\omega):=\frac{\epsilon}{2\sqrt{T}}\sum_{t=1}^{T}\mathbb{E}_{\theta,\epsilon,T}^{\omega}\big[\dot{\ell}_{\tilde{\theta}(\epsilon,T)}\big(X'_{t}\mid y_{t}\big)\big]
\]
and 
\[
\sigma_{T}^{2}(\theta,\epsilon;\omega):=\frac{\epsilon^{2}}{4T}\sum_{t=1}^{T}\mathbb{V}_{\theta,\epsilon,T}^{\omega}\left\{ \dot{\ell}_{\theta}(X_{t}\mid y_{t})+\dot{\ell}_{\tilde{\theta}(\epsilon,T)}(X'_{t}\mid y_{t})\right\} .
\]
Then for any $\omega\in\mathsf{Y}^{\mathbb{N}}$, 
\[
\lim_{T\rightarrow\infty}\sup_{(\theta,\epsilon,z)\in\mathring{\Theta}\times\Xi\times\mathbb{R}}\left|\mathbb{P}_{\theta,\epsilon,T}^{\omega}\left(\tfrac{S_{\theta,\epsilon,T}^{(1)}(\xi,\omega)-\bar{S}_{\theta,\epsilon,T}^{(1)}(\omega)}{\sigma_{T}(\theta,\epsilon;\omega)}\leq z\right)-\Phi(z)\right|=0,
\]
where $\Phi(\cdot)$ is the cumulative distribution function of $\mathcal{N}(0,1)$.
\end{lem}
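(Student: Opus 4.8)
The plan is to realise $S_{\theta,\epsilon,T}^{(1)}(\omega,\xi)$ as a sum of row-wise independent summands, verify a Lyapunov-type moment ratio tends to zero uniformly in $(\theta,\epsilon)$, and then invoke the quantitative (Berry--Esseen) form of the central limit theorem, which simultaneously delivers the uniformity in $z$. Set
\[
\zeta_{t,T}:=\frac{\epsilon}{2\sqrt{T}}\big\{\dot{\ell}_{\theta}(X_{t}\mid y_{t})+\dot{\ell}_{\tilde{\theta}(\epsilon,T)}(X'_{t}\mid y_{t})\big\},\qquad t=1,\dots,T,
\]
so that $S_{\theta,\epsilon,T}^{(1)}=\sum_{t=1}^{T}\zeta_{t,T}$; by construction the vectors $\{(X_{t},X'_{t})\}$ are independent over $t$, hence so are the $\zeta_{t,T}$. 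Using $\mathbb{E}_{\theta,\epsilon,T}^{\omega}[\dot{\ell}_{\theta}(X_{t}\mid y_{t})]=0$ from (A\ref{hyp:strong-likelihood-cond})(\ref{enu:variancepositiveforallthetas}) one has $\sum_{t}\mathbb{E}_{\theta,\epsilon,T}^{\omega}[\zeta_{t,T}]=\bar{S}_{\theta,\epsilon,T}^{(1)}(\omega)$ and $\sum_{t}\mathbb{V}_{\theta,\epsilon,T}^{\omega}(\zeta_{t,T})=\sigma_{T}^{2}(\theta,\epsilon;\omega)$, so that the Berry--Esseen inequality for sums of independent, non-identically distributed random variables gives, with an absolute constant $C$,
\[
\sup_{z\in\mathbb{R}}\left|\mathbb{P}_{\theta,\epsilon,T}^{\omega}\!\left(\tfrac{S_{\theta,\epsilon,T}^{(1)}(\xi,\omega)-\bar{S}_{\theta,\epsilon,T}^{(1)}(\omega)}{\sigma_{T}(\theta,\epsilon;\omega)}\le z\right)-\Phi(z)\right|\le C\,\frac{\sum_{t=1}^{T}\mathbb{E}_{\theta,\epsilon,T}^{\omega}\big|\zeta_{t,T}-\mathbb{E}_{\theta,\epsilon,T}^{\omega}[\zeta_{t,T}]\big|^{3}}{\sigma_{T}^{3}(\theta,\epsilon;\omega)}.
\]
It therefore suffices to show the right-hand side tends to $0$ uniformly over $(\theta,\epsilon)\in\mathring{\Theta}\times\Xi$, with $\epsilon$ bounded away from $0$; the value $\epsilon=0$, at which numerator and $\sigma_{T}$ both vanish, is degenerate but immaterial in the uses of the lemma, where one integrates against $q_{0}$.

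The numerator is routine: (A\ref{hyp:strong-likelihood-cond})(\ref{enu:regularityell}) gives $|\dot{\ell}_{\vartheta}(x\mid y)|\le\bar{\ell}^{(1)}$ for all $(\vartheta,x,y)$, hence $|\zeta_{t,T}-\mathbb{E}_{\theta,\epsilon,T}^{\omega}[\zeta_{t,T}]|\le 2|\epsilon|\bar{\ell}^{(1)}/\sqrt{T}$ and thus $\sum_{t=1}^{T}\mathbb{E}_{\theta,\epsilon,T}^{\omega}|\zeta_{t,T}-\mathbb{E}_{\theta,\epsilon,T}^{\omega}[\zeta_{t,T}]|^{3}\le 8|\epsilon|^{3}(\bar{\ell}^{(1)})^{3}/\sqrt{T}$, which is $O(T^{-1/2})$ uniformly in $(\theta,\epsilon)$. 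The crux is a uniform lower bound on $\sigma_{T}^{2}(\theta,\epsilon;\omega)$. Writing each summand of $\tfrac{4T}{\epsilon^{2}}\sigma_{T}^{2}(\theta,\epsilon;\omega)$, namely $\mathbb{V}_{\theta,\epsilon,T}^{\omega}\{\dot{\ell}_{\theta}(X_{t}\mid y_{t})+\dot{\ell}_{\tilde{\theta}(\epsilon,T)}(X'_{t}\mid y_{t})\}$, in the form $a+b+2\rho\sqrt{ab}$ with $a=\mathbb{V}_{\theta}^{\omega}(\dot{\ell}_{\theta}(X_{t}\mid y_{t}))$, $b=\mathbb{V}_{\theta,\epsilon,T}^{\omega}(\dot{\ell}_{\tilde{\theta}(\epsilon,T)}(X'_{t}\mid y_{t}))$ and $\rho$ the associated correlation, hypothesis \eqref{eq:cond-covariance} provides $\rho\ge\rho_{\min}$ with $\rho_{\min}>-1$ uniformly over $T\ge T_{0}$ and $(\theta,\epsilon,y_{t})$. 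The elementary inequality $a+b+2\rho\sqrt{ab}\ge\big(1\wedge(1+\rho_{\min})\big)(a+b)\ge\big(1\wedge(1+\rho_{\min})\big)a$ then combines with $\inf_{(\theta,y_{1})\in\Theta\times\mathsf{Y}}\mathbb{V}_{\theta}^{\omega}(\dot{\ell}_{\theta}(X_{1}\mid y_{1}))>0$ from (A\ref{hyp:strong-likelihood-cond})(\ref{enu:variancepositiveforallthetas}) to give $\sigma_{T}^{2}(\theta,\epsilon;\omega)\ge c'\epsilon^{2}$ for some $c'>0$, uniformly in $(\theta,\epsilon)$ and in $T\ge T_{0}$. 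Substituting the two bounds, the Berry--Esseen ratio is $O(T^{-1/2})$ uniformly, which proves the claim.

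The main obstacle is precisely this uniform-in-$(\theta,\epsilon)$ lower bound on $\sigma_{T}^{2}$: it is what the correlation hypothesis \eqref{eq:cond-covariance} is tailored to provide (and why Corollary \ref{cor:existenceN0}, itself relying on Lemma \ref{lem:boundcovariance}, is established beforehand, to guarantee that valid $T_{0},N_{0}$ exist), and one has to check that it does not degenerate as $\theta$ approaches $\partial\Theta$ or as $T$ varies — here kept under control by the uniform boundedness assumptions in (A\ref{hyp:strong-likelihood-cond}), which is the price paid for a short proof. Everything else — independence of the $\zeta_{t,T}$ across $t$, the third-moment bound, and the passage from the Lyapunov ratio to the Kolmogorov distance — is standard. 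The finer asymptotics of $\bar{S}_{\theta,\epsilon,T}^{(1)}(\omega)$ and $\sigma_{T}^{2}(\theta,\epsilon;\omega)$, needed later to replace them by their $\omega$-free limits (and where $\sup_{(\theta,x,y)}\|R^{[N_{T}]}_{\theta,y}(x,\cdot)-p_{\theta}(\cdot\mid y)\|_{tv}$ re-enters, explaining the requirement $N_{T}\ge N_{0}$ in Theorem \ref{thm:main}), are a separate matter handled by Lemma \ref{lem:boundcovariance} together with a uniform law of large numbers as in Lemma \ref{lem:ucvbarS2}.
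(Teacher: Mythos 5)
Your proposal is correct and follows essentially the same route as the paper: both write $S_{\theta,\epsilon,T}^{(1)}$ as a sum of independent, bounded, non-identically distributed terms and apply a quantitative Lyapunov/Berry--Esseen bound, controlling the numerator via $\bar{\ell}^{(1)}$ and the denominator via a $(\theta,\epsilon,y_t)$-uniform lower bound on the per-term variance obtained from \eqref{eq:cond-covariance} and (A\ref{hyp:strong-likelihood-cond})-(\ref{enu:variancepositiveforallthetas}). The only (harmless) variation is your elementary inequality $a+b+2\rho\sqrt{ab}\geq\big(1\wedge(1+\rho_{\min})\big)(a+b)\geq\big(1\wedge(1+\rho_{\min})\big)a$, which lets you skip the appeal to the second part of Lemma \ref{lem:boundcovariance} that the paper uses at this step.
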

\begin{proof}
For $\delta>0$ let 
\[
N_{\theta,\epsilon,T}^{\omega}:=\sum_{t=1}^{T}\mathbb{E}_{\theta,\epsilon,T}^{\omega}\left[\left\vert \dot{\ell}_{\theta}(X_{t}\mid y_{t})+\dot{\ell}_{\tilde{\theta}(\epsilon,T)}\big(X'_{t}\mid y_{t}\big)-\mathbb{E}_{\theta}^{\omega}\big[\dot{\ell}_{\tilde{\theta}(\epsilon,T)}\big(X'_{t}\mid y_{t}\big)\big]\right\vert ^{2+\delta}\right]
\]
and 
\[
D_{\theta,\epsilon,T}^{\omega}:=\left({\textstyle \sum}_{t=1}^{T}\mathbb{V}_{\theta,\epsilon,T}^{\omega}\big(\dot{\ell}_{\theta}\big(X_{t}\mid y_{t}\big)+\dot{\ell}_{\tilde{\theta}(\epsilon,T)}\big(X'_{t}\mid y_{t}\big)\big)\right)^{1+\delta/2}.
\]
Lyapunov's theorem \citep[Theorem 5.7, p. 154]{petrov1995limit} states
that there exists a universal constant $C$ such that for any $T\in\mathbb{N}$
\[
\sup_{z\in\mathbb{R}}\left|\mathbb{P}_{\theta,\epsilon,T}^{\omega}\left(\tfrac{S_{\theta,\epsilon,T}^{(1)}(\xi,\omega)-\bar{S}_{\theta,\epsilon,T}^{(1)}(\omega)}{\sigma_{T}(\theta,\epsilon;\omega)}\leq z\right)-\Phi(z)\right|\leq C\frac{N_{\theta,\epsilon,T}^{\omega}}{D_{\theta,\epsilon,T}^{\omega}}.
\]
In order to establish our uniform result we will check that we have
$(\theta,\epsilon)-$uniform convergence of the upper bound. Clearly
\[
N_{\theta,\epsilon,T}^{\omega}\leq3^{2+\delta}\bar{\ell}^{(1)}T
\]
and 
\[
D_{\theta,\epsilon,T}^{\omega}\geq T^{1+\delta/2}\left\{ \inf_{(\theta,\epsilon,T,y_{t})\in\mathring{\Theta}\times\Xi\times\mathbb{N}\times\mathsf{Y}}\mathbb{V}_{\theta,\epsilon,T}^{\omega}\left(\dot{\ell}_{\theta}\big(X_{t}\mid y_{t}\big)+\dot{\ell}_{\tilde{\theta}(\epsilon,T)}\big(X'_{t}\mid y_{t}\big)\right)\right\} ^{1+\delta/2}.
\]
If $\inf_{(\theta,\epsilon,T,y_{t})\in\mathring{\Theta}\times\Xi\times\mathbb{N}\times\mathsf{Y}}\mathbb{V}_{\theta,\epsilon,T}^{\omega}\left(\dot{\ell}_{\theta}\big(X_{t}\mid y_{t}\big)+\dot{\ell}_{\tilde{\theta}(\epsilon,T)}\big(X'_{t}\mid y_{t}\big)\right)>0$
then the denominator grows super linearly and we can conclude. We
have 
\begin{multline*}
\mathbb{V}_{\theta,\epsilon,T}^{\omega}\left(\dot{\ell}_{\theta}\big(X_{t}\mid y_{t}\big)+\dot{\ell}_{\tilde{\theta}(\epsilon,T)}\big(X'_{t}\mid y_{t}\big)\right)\\
\geq2\sqrt{\mathbb{V}_{\theta}^{\omega}\left(\dot{\ell}_{\theta}\big(X_{t}\mid y_{t}\big)\right)\mathbb{V}_{\theta,\epsilon,T}^{\omega}\left(\dot{\ell}_{\tilde{\theta}(\epsilon,T)}\big(X'_{t}\mid y_{t}\big)\right)}+2\mathbb{C}_{\theta,\epsilon,T}^{\omega}\left[\dot{\ell}_{\theta}(X_{t}\mid y_{t}),\dot{\ell}_{\tilde{\theta}(\epsilon,T)}(X'_{t}\mid y_{t})\right]
\end{multline*}
and conclude using the second part of Lemma \ref{lem:boundcovariance},
the assumption in \eqref{eq:cond-covariance} and the fact that by
(A\ref{hyp:strong-likelihood-cond}) we have $\inf_{(\theta,\epsilon,y_{t})\in\Theta\times\Xi\times\mathsf{Y}}\mathbb{V}_{\theta}^{\omega}\left(\dot{\ell}_{\theta}\big(X_{t}\mid y_{t}\big)\right)>0$.
\end{proof}
\begin{rem}
The assumption in \eqref{eq:cond-covariance} will be satisfied whenever
$R_{\theta,y}^{[N]}\big(x,\cdot\big)$ is a positive operator, or
can be checked using Corollary \ref{cor:existenceN0}. 
\end{rem}
We now examine the limits as $T\rightarrow\infty$ of $\bar{S}_{\theta,\epsilon,T}^{(1)}(\omega)$
and $\sigma_{T}^{2}(\theta,\epsilon;\omega)$ under the distribution
of the observations $P$.

\subsection{{\normalsize{}{}Limit of the expectations in the conditional CLT}}
\begin{lem}
\label{lem:limitexpectationinCLT}Assume (A\ref{hyp:strong-likelihood-cond}).
Then with the notation from Lemma \ref{lem:UCLT}, for any $\varepsilon_{0}>0$
there exists $N_{0}\in\mathbb{N}$ such that

\[
\sup_{T\geq1}\sup_{(\theta,\epsilon,\omega)\in\mathring{\Theta}\times\Xi\times\mathsf{Y}^{\mathbb{N}}}\left|\bar{S}_{\theta,\epsilon,T}^{(1)}(\omega)\right|\mathbb{I}\{N_{T}\geq N_{0}\}\leq\varepsilon_{0}.
\]
\end{lem}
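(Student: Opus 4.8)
The plan is to bound $\bar S^{(1)}_{\theta,\epsilon,T}(\omega)$ \emph{uniformly} in $(\theta,\epsilon,\omega)$ and in $T$ by a quantity depending on $N_T$ only through the ergodicity modulus $\varrho_N:=\sup_{(\theta,x,y)\in\Theta\times\mathsf{X}\times\mathsf{Y}}\|R^{[N]}_{\theta,y}(x,\cdot)-p_\theta(\cdot\mid y)\|_{tv}$, which by (A\ref{hyp:strong-likelihood-cond}).\ref{enu:propertieskernel} tends to $0$ as $N\to\infty$. The key observation is that the two ``score has zero mean'' identities in (A\ref{hyp:strong-likelihood-cond}).\ref{enu:variancepositiveforallthetas} annihilate the non-ergodic part of the bound in Lemma \ref{lem:basicLipschitzcontinuities}, and that the factor $T$ produced by the summation over $t$ cancels exactly against the two factors $1/\sqrt T$ carried by the prefactor $\epsilon/(2\sqrt T)$ and by the per-term estimate; this cancellation is precisely why the final bound is $T$-free, hence why $N_T$ need not grow with $T$.

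Concretely, I would fix $t\in\{1,\dots,T\}$ and apply the second displayed inequality in Lemma \ref{lem:basicLipschitzcontinuities} with $\phi_\theta(x,y):=\dot\ell_\theta(x\mid y)$, for which $\bar\phi=\bar\ell^{(1)}$ by (A\ref{hyp:strong-likelihood-cond}).\ref{enu:regularityell}. Since the $X_t$-marginal of $\mathbb{P}^{\omega}_{\theta,\epsilon,T}$ is $p_\theta(\cdot\mid y_t)$, (A\ref{hyp:strong-likelihood-cond}).\ref{enu:variancepositiveforallthetas} gives $\mathbb{E}^{\omega}_{\theta,\epsilon,T}[\dot\ell_\theta(X_t\mid y_t)]=0$ and, applied at both $\theta$ and $\tilde\theta(\epsilon,T)$, also $\mathbb{E}^{\omega}_{\tilde\theta(\epsilon,T)}[\dot\ell_{\tilde\theta(\epsilon,T)}(X_1\mid y_t)]-\mathbb{E}^{\omega}_{\theta}[\dot\ell_\theta(X_1\mid y_t)]=0$, so every term in Lemma \ref{lem:basicLipschitzcontinuities} other than the $\varrho$-term vanishes and
\[
\big|\mathbb{E}^{\omega}_{\theta,\epsilon,T}[\dot\ell_{\tilde\theta(\epsilon,T)}(X'_t\mid y_t)]\big|\ \le\ \big(\bar\ell^{(1)}\big)^2\,\frac{|\epsilon|}{\sqrt T}\,\varrho_{N_T}\,.
\]
Summing over $t=1,\dots,T$ and multiplying by $|\epsilon|/(2\sqrt T)$ --- which is exactly how $\bar S^{(1)}_{\theta,\epsilon,T}(\omega)$ is assembled from these terms --- the factor $T$ cancels and, using that $\Xi$ is compact by (A\ref{hyp:strong-likelihood-cond}).\ref{enu:-ThetaConvex},
\[
\big|\bar S^{(1)}_{\theta,\epsilon,T}(\omega)\big|\ \le\ \tfrac12\big(\bar\ell^{(1)}\big)^2\,\epsilon^2\,\varrho_{N_T}\ \le\ \tfrac12\big(\bar\ell^{(1)}\big)^2\Big(\sup_{\epsilon\in\Xi}\epsilon^2\Big)\varrho_{N_T}\,,
\]
a bound valid for every $T\ge1$ and every $(\theta,\epsilon,\omega)\in\mathring\Theta\times\Xi\times\mathsf{Y}^{\mathbb{N}}$.

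To finish, given $\varepsilon_0>0$ (and assuming $\bar\ell^{(1)}>0$, otherwise $\bar S^{(1)}_{\theta,\epsilon,T}(\omega)\equiv0$ and there is nothing to prove), use (A\ref{hyp:strong-likelihood-cond}).\ref{enu:propertieskernel} to choose $N_0\in\mathbb{N}$ with $\varrho_N\le 2\varepsilon_0/\big((\bar\ell^{(1)})^2\sup_{\epsilon\in\Xi}\epsilon^2\big)$ for all $N\ge N_0$; then, for any sequence $\{N_T\}$, whenever $N_T\ge N_0$ the displayed bound is at most $\varepsilon_0$, while for $N_T<N_0$ the indicator $\mathbb{I}\{N_T\ge N_0\}$ kills the term, so $|\bar S^{(1)}_{\theta,\epsilon,T}(\omega)|\,\mathbb{I}\{N_T\ge N_0\}\le\varepsilon_0$ uniformly over $T\ge1$ and $(\theta,\epsilon,\omega)$, which is the assertion. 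I do not anticipate a genuine obstacle: once Lemma \ref{lem:basicLipschitzcontinuities} is in hand the argument is essentially bookkeeping, and the only point needing care --- that the two ``stationary'' expectations in that lemma really are zero --- is immediate from (A\ref{hyp:strong-likelihood-cond}).\ref{enu:variancepositiveforallthetas}. A purely technical caveat, handled as elsewhere in the paper, is that $\tilde\theta(\epsilon,T)$ may fall outside $\Theta$ for small $T$; on that event neither $R^{[N_T]}_{\tilde\theta(\epsilon,T),y_t}$ nor $\bar S^{(1)}_{\theta,\epsilon,T}(\omega)$ is invoked, so the statement is unaffected.
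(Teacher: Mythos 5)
Your proof is correct and follows essentially the same route as the paper's: you apply Lemma \ref{lem:basicLipschitzcontinuities} with $\phi_{\theta}=\dot{\ell}_{\theta}$, use the zero-mean score condition in (A\ref{hyp:strong-likelihood-cond}) to kill all but the total-variation term, sum to get a $T$-free bound proportional to $\sup_{(\theta,x,y)}\|R_{\theta,y}^{[N_T]}(x,\cdot)-p_{\theta}(\cdot\mid y)\|_{tv}$, and conclude via the uniform ergodicity in $N$. Your constant $\tfrac{1}{2}(\bar{\ell}^{(1)})^{2}\epsilon^{2}$ is in fact the careful version of the paper's stated $\tfrac{|\epsilon|}{2}(\bar{\ell}^{(1)})^{2}$ (an inconsequential slip there, since $\Xi$ is compact), so nothing further is needed.
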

\begin{proof}
We apply the result of Lemma \ref{lem:basicLipschitzcontinuities}
and use the fact that here for any $\theta\in\Theta$ and $t\geq1$
we have $\mathbb{E}_{\theta}^{\omega}\big[\dot{\ell}_{\theta}(X_{t}\mid y_{t})\big]=0$.
Therefore for any $t\geq1$ and $(\theta,\epsilon,y_{t})\in\mathring{\Theta}\times\Xi\times\mathsf{Y}$,
\begin{multline*}
\left|\mathbb{E}_{\theta}^{\omega}\left[\dot{\ell}_{\theta}(X{}_{t}\mid y_{t})\right]-\mathbb{E}_{\theta,\epsilon,T}^{\omega}\left[\dot{\ell}_{\tilde{\theta}(\epsilon,T)}(X'_{t}\mid y_{t})\right]\right|\leq\big(\bar{\ell}^{(1)}\big)^{2}\times\frac{|\epsilon|}{\sqrt{T}}\sup_{(\theta,x,y)\Theta\times\mathsf{X}\times\mathsf{Y}}\|R_{\theta,y}^{[N_{T}]}\big(x,\cdot\big)-p_{\theta}(\cdot\mid y)\|_{tv}.
\end{multline*}
Hence 
\[
\left|\bar{S}_{\theta,\epsilon,T}^{(1)}\right|\leq\frac{|\epsilon|}{2}\big(\bar{\ell}^{(1)}\big)^{2}\sup_{(\theta,x,y)\in\Theta\times\mathsf{X}\times\mathsf{Y}}\|R_{\theta,y}^{[N_{T}]}\big(x,\cdot\big)-p_{\theta}(\cdot\mid y)\|_{tv},
\]
and we conclude with the increasing ergodicity of $R_{\theta,y}^{[N]}\big(x,\cdot\big)$
with $N$. 
\end{proof}

\subsection{{\normalsize{}{}Limit of the variances in the conditional CLT}}
\begin{lem}
\label{lem:convergencevariance}Assume (A\ref{hyp:strong-likelihood-cond}).
Then for any $\varepsilon_{0}>0$ there exists $T_{0},N_{0}\in\mathbb{N}$
such that 
\[
\sup_{T\geq T_{0}}\sup_{(\theta,\epsilon)\in\mathring{\Theta}\times\Xi}\left|\sigma_{T}^{2}(\theta,\epsilon;\omega)-\sigma^{2}(\theta,\epsilon)\right|\mathbb{I}\{N_{T}\geq N_{0}\}\leq\varepsilon_{0}\quad P-a.s.
\]
\end{lem}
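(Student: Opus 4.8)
The plan is to expand $\sigma_T^2(\theta,\epsilon;\omega)$ by bilinearity of the variance, recognise its leading part as $-2\bar S_{\theta,\epsilon,T}^{(2)}(\omega)$ (the quantity already controlled in Lemma \ref{lem:ucvbarS2}), and absorb the remainder into the error by choosing $T_0,N_0$ large. Write $A_t:=\mathbb{V}_{\theta}^{\omega}\big(\dot{\ell}_{\theta}(X_t\mid y_t)\big)$ (which, by independence and since the marginal law of $X_t$ is $p_\theta(\cdot\mid y_t)$, does not depend on $\epsilon,T$), $B_t:=\mathbb{V}_{\theta,\epsilon,T}^{\omega}\big(\dot{\ell}_{\tilde\theta(\epsilon,T)}(X'_t\mid y_t)\big)$ and $C_t:=\mathbb{C}_{\theta,\epsilon,T}^{\omega}\big[\dot{\ell}_{\theta}(X_t\mid y_t),\dot{\ell}_{\tilde\theta(\epsilon,T)}(X'_t\mid y_t)\big]$, so that $\sigma_T^2(\theta,\epsilon;\omega)=\tfrac{\epsilon^2}{4T}\sum_{t=1}^T(A_t+B_t+2C_t)$. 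Under (A\ref{hyp:strong-likelihood-cond})(\ref{enu:variancepositiveforallthetas}) the Fisher-type identity $A_t=-\mathbb{E}_{\theta}^{\omega}[\ddot{\ell}_\theta(X_t\mid y_t)]$ holds for every $t$, whence $\tfrac{\epsilon^2}{4T}\sum_{t=1}^T 2A_t=-2\bar S_{\theta,\epsilon,T}^{(2)}(\omega)$ and therefore
\[
\sigma_T^2(\theta,\epsilon;\omega)-\sigma^2(\theta,\epsilon)=\Big[-2\bar S_{\theta,\epsilon,T}^{(2)}(\omega)-\sigma^2(\theta,\epsilon)\Big]+\frac{\epsilon^2}{4T}\sum_{t=1}^T\big(B_t-A_t\big)+\frac{\epsilon^2}{2T}\sum_{t=1}^T C_t .
\]

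The three terms are handled in turn. For the bracket, using $\sigma^2(\theta,\epsilon)=-2\cdot\tfrac{\epsilon^2}{4}\mathbb{E}_{\theta}[\ddot{\ell}_\theta(X_1\mid Y_1)]$, it equals $-2\big(\bar S_{\theta,\epsilon,T}^{(2)}(\omega)-\tfrac{\epsilon^2}{4}\mathbb{E}_{\theta}[\ddot{\ell}_\theta(X_1\mid Y_1)]\big)$, which by Lemma \ref{lem:ucvbarS2} tends to $0$ as $T\to\infty$ uniformly over $(N_T,\theta,\epsilon)\in\mathbb{N}\times\mathring{\Theta}\times\Xi$, $P$-a.s.; hence $P$-a.s.\ there is $T_1(\omega)$ with this bracket $\le\varepsilon_0/3$ for all $T\ge T_1(\omega)$, with no constraint on $N_T$. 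For the second sum, the second assertion of Lemma \ref{lem:boundcovariance} gives $\sup_{(N_T,\theta,\epsilon,y)\in\mathbb{N}\times\mathring{\Theta}\times\Xi\times\mathsf{Y}}|B_t-A_t|\to0$ as $T\to\infty$; since that supremum is over the value of the conditioning observation it bounds every summand $t=1,\dots,T$ simultaneously, so using $\sup_{\epsilon\in\Xi}\epsilon^2<\infty$ (compactness of $\Xi$) the second term is $\le\varepsilon_0/3$ for $T\ge T_2$, again for all $N_T$. For the third sum, the first assertion of Lemma \ref{lem:boundcovariance} gives $|C_t|\le \delta_{N_T}\,(\bar{\ell}^{(1)})^2\big[1+\bar{\ell}^{(1)}|\epsilon|/\sqrt T\big]$ with $\delta_N:=\sup_{(\theta,x,y)}\|R_{\theta,y}^{[N]}(x,\cdot)-p_\theta(\cdot\mid y)\|_{tv}$; the bracket is bounded by the constant $1+\bar{\ell}^{(1)}\sup_{\epsilon\in\Xi}|\epsilon|$ for $T\ge1$, and by (A\ref{hyp:strong-likelihood-cond})(\ref{enu:propertieskernel}) $\delta_N\to0$ as $N\to\infty$, so one can choose $N_0$ with $\tfrac{\epsilon^2}{2}\sup_t|C_t|\le\varepsilon_0/3$ whenever $N_T\ge N_0$, uniformly in $T$ and in $(\theta,\epsilon)$.

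Finally, set $T_0:=\max\{T_1(\omega),T_2\}$ and take $N_0$ as above; the triangle inequality applied to the displayed decomposition yields $\sup_{T\ge T_0}\sup_{(\theta,\epsilon)\in\mathring{\Theta}\times\Xi}\big|\sigma_T^2(\theta,\epsilon;\omega)-\sigma^2(\theta,\epsilon)\big|\mathbb{I}\{N_T\ge N_0\}\le\varepsilon_0$, $P$-a.s. I do not anticipate a genuine obstacle here: the only probabilistic ingredient—the almost-sure uniform law of large numbers converting the per-observation quantities $\mathbb{E}_\theta^\omega[\ddot{\ell}_\theta(X_t\mid y_t)]$ into $\mathbb{E}_\theta[\ddot{\ell}_\theta(X_1\mid Y_1)]$—is already packaged as Lemma \ref{lem:ucvbarS2}, and everything else is the deterministic bookkeeping above. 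The one point demanding care is the use of the ``for any $t\ge1$'' limit of Lemma \ref{lem:boundcovariance}: we need it uniformly over $t\le T$, which is legitimate precisely because the supremum there is taken over the value $y$ of the conditioning observation and is thus the same for all $t$.
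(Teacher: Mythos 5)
Your proof is correct and follows essentially the same route as the paper: the same decomposition of $\mathbb{V}_{\theta,\epsilon,T}^{\omega}\big(\dot{\ell}_{\theta}(X_t\mid y_t)+\dot{\ell}_{\tilde{\theta}(\epsilon,T)}(X'_t\mid y_t)\big)$ into the two variances plus the covariance, with Lemma \ref{lem:boundcovariance} controlling the covariance (through $N_0$) and the variance difference (through $T$, uniformly in $N_T$), and the Fisher identity of (A\ref{hyp:strong-likelihood-cond})-(\ref{enu:variancepositiveforallthetas}) together with Lemma \ref{lem:ucvbarS2} supplying the $P$-a.s.\ uniform limit of the remaining term. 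The only difference is cosmetic bookkeeping (you isolate $-2\bar{S}^{(2)}_{\theta,\epsilon,T}$ explicitly and split the error into thirds rather than quarters/halves), and your remark about the $t$-uniformity of the second bound in Lemma \ref{lem:boundcovariance} is exactly the right justification.
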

\begin{proof}
Note that for any $t\geq1$ 
\begin{multline*}
\mathbb{V}_{\theta,\epsilon,T}^{\omega}\left(\dot{\ell}_{\theta}\big(X_{t}\mid y_{t}\big)+\dot{\ell}_{\tilde{\theta}(\epsilon,T)}\big(X'_{t}\mid y_{t}\big)\right)\\
=\mathbb{V}_{\theta}^{\omega}\left(\dot{\ell}_{\theta}\big(X_{t}\mid y_{t}\big)\right)+\mathbb{V}_{\theta,\epsilon,T}^{\omega}\left(\dot{\ell}_{\tilde{\theta}(\epsilon,T)}\big(X'_{t}\mid y_{t}\big)\right)+2\mathbb{C}_{\theta,\epsilon,T}^{\omega}\left[\dot{\ell}_{\theta}(X_{t}\mid y_{t}),\dot{\ell}_{\tilde{\theta}(\epsilon,T)}(X'_{t}\mid y_{t})\right].
\end{multline*}
From Lemma \ref{lem:boundcovariance} and (A\ref{hyp:strong-likelihood-cond})
there exist $T_{0,1},N_{0}\in\mathbb{N}$ such that for $T\geq T_{0,1}$
and $N_{T}\geq N_{0}$ 
\[
\sup_{(\theta,\epsilon,y_{t})\in\mathring{\Theta}\times\Xi\times\mathsf{Y}}\left|\mathbb{C}_{\theta,\epsilon,T}^{\omega}\big[\dot{\ell}_{\theta}(X_{t}\mid y_{t}),\dot{\ell}_{\tilde{\theta}(\epsilon,T)}(X'_{t}\mid y_{t})\big]\right|+\left|\mathbb{V}_{\theta,\epsilon,T}^{\omega}\left(\dot{\ell}_{\theta}\big(X_{t}\mid y_{t}\big)\right)-\mathbb{V}_{\theta,\epsilon,T}^{\omega}\left(\dot{\ell}_{\tilde{\theta}(\epsilon,T)}\big(X'_{t}\mid y_{t}\big)\right)\right|\leq\varepsilon_{0}/4
\]
Therefore, since $\sup_{(\theta,\epsilon,y_{t})\in\mathring{\Theta}\times\Xi\times\mathsf{Y}}\mathbb{V}_{\theta,\epsilon,T}^{\omega}\left[\dot{\ell}_{\theta}(X_{t}\mid y_{t})+\dot{\ell}_{\tilde{\theta}(\epsilon,T)}(X'_{t}\mid y_{t})\right]<\infty$,
there exists $T_{0,2}\in\mathbb{N}$ such that 
\[
\sup_{T\geq T_{0,2}}\sup_{(\theta,\epsilon,\omega)\in\mathring{\Theta}\times\Xi\times\mathsf{Y}^{\mathbb{N}}}\left|\frac{\epsilon^{2}}{4T}\sum_{t=1}^{T}\mathbb{V}_{\theta,\epsilon,T}^{\omega}\left[\dot{\ell}_{\theta}(X_{t}\mid y_{t})+\dot{\ell}_{\tilde{\theta}(\epsilon,T)}(X'_{t}\mid y_{t})\right]-\frac{\epsilon^{2}}{2T}\sum_{t=1}^{T}\mathbb{V}_{\theta}^{\omega}\left[\dot{\ell}_{\theta}(X_{t}\mid y_{t})\right]\right|\leq\varepsilon_{0}/2.
\]
Finally we show that $P-a.s.$ 
\begin{equation}
\lim_{T\rightarrow\infty}\sup_{(\theta,\epsilon)\in\Theta\times\Xi}\left|\frac{\epsilon^{2}}{2T}\sum_{t=1}^{T}\mathbb{V}_{\theta}^{\omega}\left[\dot{\ell}_{\theta}(X_{t}\mid Y_{t})\right]-\sigma^{2}(\theta,\epsilon)\right|=0.\label{eq:cvasympvariance}
\end{equation}
This is immediate upon noting that by assumption for any $(\theta,\epsilon)\in\Theta\times\Xi$
and $y_{t}\in\mathsf{Y}$ we have 
\begin{align*}
\mathbb{V}_{\theta}^{\omega}\left[\dot{\ell}_{\theta}(X_{t}\mid y_{t})\right] & =-\mathbb{E}_{\theta}^{\omega}\left[\ddot{\ell}_{\theta}(X_{t}\mid y_{t})\right],
\end{align*}
and by applying Lemma \ref{lem:ucvbarS2}, up to a constant factor.
We deduce the existence of $T_{0,3}$ such that the absolute difference
in \eqref{eq:cvasympvariance} is less than $\varepsilon_{0}/2$ for
$T\geq T_{0,3}$. We conclude by choosing $T_{0}=T_{0,1}\vee T_{0,2}\vee T_{0,3}$. 
\end{proof}

\section{{\normalsize{}{}Proof of the main result and discussion\label{sec:ProofMainResult}}}

Before proving the main result we establish four intermediate results
which will allow us to work with the approximation of $\Lambda_{T}(\theta,\epsilon;\omega,\xi)$
given in Corollary \ref{cor:almostsureapproximation}, the U-CLT and
uniform strong law of large numbers established in Lemmata \ref{lem:UCLT}
and \ref{lem:limitexpectationinCLT}.

\subsection{{\normalsize{}{}Preliminary results}}

In order to simplify notation we introduce a parameter $\vartheta\in\varTheta$
(which plays the role of $\theta,\epsilon$) and introduce associated
sequences of random variables $\{A_{n}^{\vartheta},n\in\mathbb{N}\}$
and $\{B_{n}^{\vartheta},n\in\mathbb{N}\}$ (which play the role of
$\{\Lambda_{T}(\theta,\epsilon;\omega,\xi),T\geq1\}$ and its approximation)
defined on the same space and associated to a probability distribution
denoted $\mathbb{P}_{\vartheta}$. We let $\alpha(x):=1\wedge\exp(x)$. 
\begin{lem}
\label{lem:firstapproxEAP}Let $\{A_{n}^{\vartheta},\vartheta\in\varTheta,n\in\mathbb{N}\}$
and $\{B_{n}^{\vartheta},\vartheta\in\varTheta,n\in\mathbb{N}\}$
be two families of random variables defined on a common probability
space. Let $\big\{\vartheta_{n}\sim\mu_{n},n\in\mathbb{N}\big\}$
for a family of probability distributions $\big\{\mu_{n},n\in\mathbb{N}\big\}$
on $\varTheta$ and an associated $\sigma-$algebra, $\varphi:\varTheta\rightarrow[0,1]$
and $\{a_{n},n\in\mathbb{N}\}$ a real valued sequence. Assume that
$\lim_{n\rightarrow\infty}\sup_{\vartheta\in\varTheta}\mathbb{E}_{\vartheta}\big|A_{n}^{\vartheta}-B_{n}^{\vartheta}\big|=0$.
Then 
\[
\lim_{n\rightarrow\infty}\big\{\mathbb{E}\big[\alpha\big(a_{n}+A_{n}^{\vartheta_{n}}\big)\varphi(\vartheta_{n})\big]-\mathbb{E}\big[\alpha\big(a_{n}+B_{n}^{\vartheta_{n}}\big)\varphi(\vartheta_{n})\big]\big\}=0.
\]
\end{lem}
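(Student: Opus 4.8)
The plan is to reduce the claim directly to the hypothesis $\lim_{n\to\infty}\sup_{\vartheta\in\varTheta}\mathbb{E}_{\vartheta}|A_n^{\vartheta}-B_n^{\vartheta}|=0$ by exploiting two elementary facts about $\alpha(x)=1\wedge\exp(x)$: it is bounded by $1$ and it is $1$-Lipschitz on $\mathbb{R}$. The Lipschitz property is immediate since $\alpha$ is differentiable away from $0$ with $\alpha'(x)=\exp(x)\mathbb{I}\{x<0\}\in[0,1]$, so $|\alpha(x)-\alpha(y)|\le|x-y|$ for all $x,y\in\mathbb{R}$; combined with $\varphi$ taking values in $[0,1]$, this lets the increments of $\alpha$ be controlled pointwise by the increments $|A_n^{\vartheta}-B_n^{\vartheta}|$, uniformly in the additive constant $a_n$.

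Concretely, I would first write, for each $n\in\mathbb{N}$,
\[
\left|\mathbb{E}\big[\alpha\big(a_n+A_n^{\vartheta_n}\big)\varphi(\vartheta_n)\big]-\mathbb{E}\big[\alpha\big(a_n+B_n^{\vartheta_n}\big)\varphi(\vartheta_n)\big]\right|\le\mathbb{E}\left[\big|\alpha\big(a_n+A_n^{\vartheta_n}\big)-\alpha\big(a_n+B_n^{\vartheta_n}\big)\big|\,\varphi(\vartheta_n)\right]\le\mathbb{E}\big[\big|A_n^{\vartheta_n}-B_n^{\vartheta_n}\big|\big],
\]
using Jensen's inequality, then $0\le\varphi\le1$, then the $1$-Lipschitz bound on $\alpha$ (the constant $a_n$ cancels inside). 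It then remains to disintegrate the last expectation over $\vartheta_n\sim\mu_n$: conditioning on $\vartheta_n=\vartheta$ and using that the conditional law is $\mathbb{P}_{\vartheta}$,
\[
\mathbb{E}\big[\big|A_n^{\vartheta_n}-B_n^{\vartheta_n}\big|\big]=\int_{\varTheta}\mathbb{E}_{\vartheta}\big[\big|A_n^{\vartheta}-B_n^{\vartheta}\big|\big]\,\mu_n({\rm d}\vartheta)\le\sup_{\vartheta\in\varTheta}\mathbb{E}_{\vartheta}\big[\big|A_n^{\vartheta}-B_n^{\vartheta}\big|\big],
\]
and the right-hand side tends to $0$ as $n\to\infty$ by assumption, independently of the sequences $\{a_n\}$ and $\{\mu_n\}$. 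Taking $n\to\infty$ in the chain of inequalities gives the stated limit.

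There is essentially no analytic obstacle here; the only point deserving a line of care is the measure-theoretic bookkeeping, namely that $(\vartheta,\cdot)\mapsto A_n^{\vartheta},B_n^{\vartheta}$ are jointly measurable so that the map $\vartheta\mapsto\mathbb{E}_{\vartheta}[|A_n^{\vartheta}-B_n^{\vartheta}|]$ is measurable and the disintegration above is legitimate. This is part of the standing setup in which $\{A_n^{\vartheta}\},\{B_n^{\vartheta}\}$ live on a common space equipped with the $\sigma$-algebra associated to $\{\mu_n\}$, and in the application of interest ($\vartheta=(\theta,\epsilon)$, $A_n^{\vartheta}=\Lambda_T(\theta,\epsilon;\omega,\xi)$ and $B_n^{\vartheta}$ its Corollary \ref{cor:almostsureapproximation} approximation) the required measurability has already been recorded. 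So the lemma follows with no further work beyond the two displays above.
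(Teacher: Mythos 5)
Your argument is correct and matches the paper's own proof, which likewise rests on the $1$-Lipschitz property of $\alpha(x)=1\wedge\exp(x)$ together with the boundedness of $\varphi$ and then concludes immediately from the hypothesis; your version merely spells out the triangle inequality, the cancellation of $a_n$, and the disintegration over $\vartheta_n\sim\mu_n$ that the paper leaves implicit. No changes needed.
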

\begin{proof}
$\alpha(x)$ is Lipschitz since $|1\wedge\exp(x)\text{\textminus}1\wedge\exp(y)|=1\wedge|\exp(0\wedge x)\text{\textminus}\exp(0\text{\ensuremath{\wedge}}y)|\text{\ensuremath{\le}}1\wedge|x\text{\textminus}y|$
and the proof is immediate by using the fact that $\varphi$ is bounded. 
\end{proof}
We need the following intermediate result. 
\begin{lem}
\label{lem:EAPinCDF}Let $Z$ be a random variable on some probability
space with cumulative distribution $F$. Then for any $a\in\mathbb{R}$,
\[
\mathbb{E}\big[1\wedge\exp\big(a+Z\big)\big]=1-\int_{-\infty}^{0}F(u-a)\exp(u){\rm d}u.
\]
\end{lem}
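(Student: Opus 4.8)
The plan is to reduce the identity to an elementary application of Tonelli's theorem. Writing $1\wedge\exp(a+Z)=1-\max\{0,1-\exp(a+Z)\}$, it suffices to prove that
\[
\mathbb{E}\big[\max\{0,1-\exp(a+Z)\}\big]=\int_{-\infty}^{0}F(u-a)\exp(u)\,{\rm d}u.
\]
The starting point is the pointwise identity, valid for every $z\in\mathbb{R}$,
\[
\max\{0,1-\exp(a+z)\}=\int_{-\infty}^{-a}\mathbb{I}\{z<u\}\exp(a+u)\,{\rm d}u,
\]
which I would check by splitting into the cases $z\ge-a$, where both sides vanish (since then $a+z\ge0$ and the indicator is zero throughout the range of integration), and $z<-a$, where the right-hand side equals $\int_{z}^{-a}\exp(a+u)\,{\rm d}u=1-\exp(a+z)\ge0$.

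I would then take expectations on both sides and interchange expectation and integral. The integrand $(z,u)\mapsto\mathbb{I}\{z<u\}\exp(a+u)$ is nonnegative, so Tonelli's theorem applies with no integrability hypothesis on $Z$, giving
\[
\mathbb{E}\big[\max\{0,1-\exp(a+Z)\}\big]=\int_{-\infty}^{-a}\mathbb{P}(Z<u)\exp(a+u)\,{\rm d}u.
\]
Since $u\mapsto\mathbb{P}(Z<u)$ and $F(u)=\mathbb{P}(Z\le u)$ differ only at the atoms of the law of $Z$, of which there are at most countably many, the two functions coincide Lebesgue-almost everywhere and may be exchanged inside the integral. Writing $\exp(a+u)=\exp(a)\exp(u)$ and substituting $v=u-a$ (so $v$ runs over $(-\infty,0)$) converts $\exp(a)\int_{-\infty}^{-a}F(u)\exp(u)\,{\rm d}u$ into $\int_{-\infty}^{0}F(u-a)\exp(u)\,{\rm d}u$, which is the claimed expression.

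There is no serious obstacle here; the only points deserving a word of care are the justification of the swap of expectation and integral (handled by nonnegativity via Tonelli) and the replacement of $\mathbb{P}(Z<\cdot)$ by $F(\cdot)$ under the integral sign, which is legitimate because the two agree off a Lebesgue-null set. I would also remark in passing that both sides of the asserted identity are automatically finite, being bounded above by $1$, so the statement requires no assumption beyond $Z$ being a real-valued random variable.
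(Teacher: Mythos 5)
Your proof is correct, and at bottom it is the same kind of argument as the paper's: an integral representation of the exponential part followed by a Tonelli swap. The decompositions differ, though. The paper splits $1\wedge\exp(a+Z)$ over the events $\{a+Z\leq0\}$ and $\{a+Z>0\}$, writes $\exp(a+Z)=\int_{0}^{1}\mathbb{I}\{t<\exp(a+Z)\}{\rm d}t$ on the first event, applies Tonelli, and only afterwards substitutes $u=\log t$ to produce the weight $\exp(u)$ on $(-\infty,0)$; a convenient by-product of its choice of inequalities is that the probability appearing inside the integral is exactly $F(-a)-F(\log t-a)$, so no strict-versus-weak inequality issue ever arises. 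You instead pass to the complementary quantity $\max\{0,1-\exp(a+Z)\}$ and represent it directly as $\int_{-\infty}^{-a}\mathbb{I}\{z<u\}\exp(a+u){\rm d}u$, which avoids the logarithmic change of variables but obliges you to note (correctly) that $\mathbb{P}(Z<u)$ and $F(u)$ agree Lebesgue-almost everywhere before exchanging them under the integral; the paper's route never needs this step. Two minor points: the final substitution should be labelled $v=u+a$ (equivalently $u=v-a$), not $v=u-a$ \textemdash{} the stated range $(-\infty,0)$ and the final formula show this is only a slip in naming the substitution, not an error in the computation \textemdash{} and your observation that no integrability assumption on $Z$ is required is equally implicit in the paper's argument, since all quantities involved are bounded by $1$.
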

\begin{proof}
We have 
\begin{align*}
\mathbb{E}\big[1\wedge\exp\big(a+Z\big)\big] & =\mathbb{E}\big[\exp\big(a+Z\big)\mathbb{I}\{a+Z\leq0\}\big]+\mathbb{E}\big[\mathbb{I}\{a+Z>0\}\big]\\
 & =\mathbb{E}\big[\mathbb{I}\{a+Z\leq0\}\int_{0}^{1}\mathbb{I}\{t<\exp\big(a+Z\big)\}{\rm d}t\big]+1-F(-a)\\
 & =\int_{0}^{1}\mathbb{E}\big[\mathbb{I}\{\log(t)<a+Z\leq0\}\}\big]{\rm d}t+1-F(-a)\\
 & =\int_{0}^{1}\big[F(-a)-F(\log(t)-a)\big]{\rm d}t+1-F(-a)\\
 & =1-\int_{-\infty}^{0}F(u-a)\exp(u){\rm d}u,
\end{align*}
where we have used Tonelli's theorem. 
\end{proof}
We will use the following technical lemma. 
\begin{lem}
\label{lem:perturbcoefCLT}Consider a sequence $\{(m_{n}^{\vartheta},s_{n}^{\vartheta}),n\in\mathbb{N}\}\in(\mathbb{R}\times\mathbb{R}_{+})^{\mathbb{N}}$,
$s_{-}$, $s_{+}\in\mathbb{R}_{+}\times\mathbb{R}_{+}$ and $(m^{\vartheta}=-s^{\vartheta}/2,s^{\vartheta})\in\mathbb{R}\times\mathbb{R}_{+}$
such that 
\[
\lim_{n\rightarrow\infty}\sup_{\vartheta\in\varTheta}\big(|m_{n}^{\vartheta}-m^{\vartheta}|+|s_{n}^{\vartheta}-s^{\vartheta}|\big)=0,
\]
\[
0<s_{-}\leq\inf_{(\vartheta,n)\in\varTheta\times\mathbb{N}}s_{n}^{\vartheta}\leq\sup_{(\vartheta,n)\in\varTheta\times\mathbb{N}}s_{n}^{\vartheta}\leq s_{+}<\infty.
\]
Then for any $\{a_{n}\}\in\mathbb{R}^{\mathbb{N}}$, 
\[
\lim_{n\rightarrow\infty}\sup_{\vartheta\in\varTheta,u\in\mathbb{R}}\left|\Phi\left(\tfrac{u-a_{n}^{\vartheta}-m_{n}^{\vartheta}}{\sqrt{s_{n}^{\vartheta}}}\right)-\Phi\left(\tfrac{u-a_{n}^{\vartheta}-m^{\vartheta}}{\sqrt{s^{\vartheta}}}\right)\right|=0.
\]
\end{lem}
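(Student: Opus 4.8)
The plan is to first observe that $u$ and $a_n^\vartheta$ enter the displayed expression only through the difference $v:=u-a_n^\vartheta$, so that for each fixed $\vartheta$ and $n$,
\[
\sup_{u\in\mathbb{R}}\left|\Phi\!\left(\tfrac{u-a_n^\vartheta-m_n^\vartheta}{\sqrt{s_n^\vartheta}}\right)-\Phi\!\left(\tfrac{u-a_n^\vartheta-m^\vartheta}{\sqrt{s^\vartheta}}\right)\right|=\sup_{v\in\mathbb{R}}\left|\Phi\!\left(\tfrac{v-m_n^\vartheta}{\sqrt{s_n^\vartheta}}\right)-\Phi\!\left(\tfrac{v-m^\vartheta}{\sqrt{s^\vartheta}}\right)\right|,
\]
and in particular the sequence $\{a_n^\vartheta\}$ is irrelevant. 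We are thus reduced to showing that the Kolmogorov distance between $\mathcal{N}(m_n^\vartheta,s_n^\vartheta)$ and $\mathcal{N}(m^\vartheta,s^\vartheta)$ tends to $0$ uniformly in $\vartheta$; the argument splits into a tail region and a compact region of $v$.

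For the tails, recall $s_n^\vartheta,s^\vartheta\in[s_-,s_+]$, that $|m^\vartheta|=s^\vartheta/2\le s_+/2$ since $m^\vartheta=-s^\vartheta/2$, and that by the uniform convergence hypothesis there is $N_0$ with $\sup_\vartheta|m_n^\vartheta-m^\vartheta|\le1$ for $n\ge N_0$, whence $|m_n^\vartheta|\le s_+/2+1$. Given $\epsilon>0$ one may therefore choose $M=M(\epsilon,s_-,s_+)$ so large that $\Phi((v-m)/\sqrt s)<\epsilon/2$ for every $v\le-M$ and every $(m,s)$ with $|m|\le s_+/2+1$ and $s\in[s_-,s_+]$, and symmetrically $\Phi((v-m)/\sqrt s)>1-\epsilon/2$ for $v\ge M$; hence for $|v|>M$ both distribution functions lie within $\epsilon/2$ of the common endpoint ($0$ or $1$), so their difference is at most $\epsilon$, uniformly in $\vartheta$ and in $n\ge N_0$.

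For the compact region $|v|\le M$ I would use that $\Phi$ is globally Lipschitz with constant $(2\pi)^{-1/2}$ together with the elementary bound
\[
\left|\tfrac{v-m_n^\vartheta}{\sqrt{s_n^\vartheta}}-\tfrac{v-m^\vartheta}{\sqrt{s^\vartheta}}\right|\le\frac{|m_n^\vartheta-m^\vartheta|}{\sqrt{s_-}}+(M+s_+/2)\,\frac{|s_n^\vartheta-s^\vartheta|}{2\,s_-^{3/2}},
\]
which follows from $|v-m^\vartheta|\le M+s_+/2$ and $|s^{-1/2}-s_n^{-1/2}|\le|s-s_n|/(2s_-^{3/2})$. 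By the joint uniform convergence $\sup_\vartheta(|m_n^\vartheta-m^\vartheta|+|s_n^\vartheta-s^\vartheta|)\to0$, the right-hand side tends to $0$ uniformly in $\vartheta$, so for $n$ large the argument difference, and hence the distribution-function difference, is at most $\epsilon$ for all $\vartheta$ and all $|v|\le M$. Combining the two regions with the obvious union bound yields the claim.

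I do not anticipate a genuine obstacle: this is a routine real-analysis lemma. The only point requiring care is that the splitting threshold $M$ can be chosen independently of $\vartheta$ and $n$, which is precisely what the two-sided uniform bounds $0<s_-\le s_n^\vartheta\le s_+<\infty$ and the uniform boundedness of $m^\vartheta$ (hence eventually of $m_n^\vartheta$) provide; without control of the variances away from $0$ the argument of $\Phi$ could fail to remain bounded on $\{|v|\le M\}$ and the compact-region estimate would break down.
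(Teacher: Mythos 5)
Your proof is correct, but it follows a different decomposition from the paper's. You reduce to the Kolmogorov distance between $\mathcal{N}(m_n^{\vartheta},s_n^{\vartheta})$ and $\mathcal{N}(m^{\vartheta},s^{\vartheta})$ and then split $v=u-a_n^{\vartheta}$ into a tail region, where both distribution functions are within $\epsilon/2$ of $0$ or $1$, and a compact region $|v|\leq M$, where Lipschitz continuity of $\Phi$ together with your elementary bound on the argument difference finishes the job; this works, and the points needing care (the uniform bound $|m^{\vartheta}|\leq s_{+}/2$ coming from $m^{\vartheta}=-s^{\vartheta}/2$, hence $|m_n^{\vartheta}|\leq s_{+}/2+1$ eventually, and $s^{\vartheta}\in[s_{-},s_{+}]$ as the limit of the $s_n^{\vartheta}$) are all available from the hypotheses. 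The paper avoids any tail/compact splitting: it treats the scale perturbation by the mean value theorem in $s$, whose derivative produces a factor of the form $|z|\phi(z)$ at a rescaled argument, and then uses the single observation that $\sup_{z\in\mathbb{R}}|z|\phi(z)<\infty$ to obtain a bound uniform in $u$, namely $C\,|s_n^{\vartheta}-s^{\vartheta}|$ with $C=\tfrac{1}{2}\sqrt{s_{+}}\,s_{-}^{-3/2}\sup_{z}|z|\phi(z)$, plus the same Lipschitz estimate $(2\pi s_{-})^{-1/2}|m_n^{\vartheta}-m^{\vartheta}|$ for the mean perturbation. The paper's route thus yields an explicit quantitative rate proportional to $\sup_{\vartheta}(|m_n^{\vartheta}-m^{\vartheta}|+|s_n^{\vartheta}-s^{\vartheta}|)$ and never needs boundedness of the means (so the specific form $m^{\vartheta}=-s^{\vartheta}/2$ plays no role there), whereas your argument is more elementary, requiring only Lipschitz continuity of $\Phi$ and Gaussian tail decay, but delivers only the qualitative limit and leans on the uniform control of $m^{\vartheta}$ to fix the splitting threshold $M$ independently of $\vartheta$ and $n$.
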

\begin{proof}
We exploit the mean value theorem and with $\phi(\cdot)$ the probability
density of a standard normal distribution the fact that 
\[
\sup_{s\in[s_{n}^{\vartheta}\wedge s^{\vartheta},s_{n}^{\vartheta}\vee s^{\vartheta}]}\phi\big(\tfrac{u-a_{n}^{\vartheta}-m^{\vartheta}}{\sqrt{s}}\big)\leq\phi\big(\tfrac{u-a_{n}^{\vartheta}-m^{\vartheta}}{\sqrt{s_{+}}}\big).
\]
More precisely for any $(u,\vartheta,n)\in\mathbb{R}\times\varTheta\times\mathbb{N}$
\begin{align*}
\left|\Phi\left(\tfrac{u-a_{n}^{\vartheta}-m^{\vartheta}}{\sqrt{s_{n}^{\vartheta}}}\right)-\Phi\left(\tfrac{u-a_{n}^{\vartheta}-m^{\vartheta}}{\sqrt{s^{\vartheta}}}\right)\right| & \leq\frac{1}{2}s_{-}^{-3/2}\phi\big(\tfrac{u-a_{n}^{\vartheta}-m^{\vartheta}}{\sqrt{s_{+}}}\big)|u-a_{n}^{\vartheta}-m^{\vartheta}|\left|s_{n}^{\vartheta}-s^{\vartheta}\right|
\end{align*}
Clearly 
\[
C:=\frac{1}{2}\sqrt{s_{+}}s_{-}^{-3/2}\sup_{z\in\mathbb{R}}|z|\phi(z)<\infty,
\]
from which we deduce that for any $(u,\vartheta,n)\in\mathbb{R}\times\varTheta\times\mathbb{N}$
\[
\left|\Phi\left(\tfrac{u-a_{n}^{\vartheta}-m^{\vartheta}}{\sqrt{s_{n}^{\vartheta}}}\right)-\Phi\left(\tfrac{u-a_{n}^{\vartheta}-m^{\vartheta}}{\sqrt{s^{\vartheta}}}\right)\right|\leq C\left|s_{n}^{\vartheta}-s^{\vartheta}\right|.
\]
We also have for any $(u,\vartheta,n)\in\mathbb{R}\times\varTheta\times\mathbb{N}$
\[
\left|\Phi\left(\tfrac{u-a_{n}^{\vartheta}-m_{n}^{\vartheta}}{\sqrt{s_{n}^{\vartheta}}}\right)-\Phi\left(\tfrac{u-a_{n}^{\vartheta}-m^{\vartheta}}{\sqrt{s_{n}^{\vartheta}}}\right)\right|\leq\frac{1}{\sqrt{2\pi}}\tfrac{1}{\sqrt{s_{-}}}\left|m^{\vartheta}-m_{n}^{\vartheta}\right|.
\]
We therefore deduce that 
\[
\lim_{n\rightarrow\infty}\sup_{\vartheta\in\varTheta,u\in\mathbb{R}}\left|\Phi\left(\tfrac{u-a_{n}^{\vartheta}-m_{n}^{\vartheta}}{\sqrt{s_{n}^{\vartheta}}}\right)-\Phi\left(\tfrac{u-a_{n}^{\vartheta}-m^{\vartheta}}{\sqrt{s^{\vartheta}}}\right)\right|=0.
\]
\end{proof}
We now establish the log-normal approximation we are interested in. 
\begin{prop}
\label{prop:penaltyapproxEAP}Let $\{B_{n}^{\vartheta},\vartheta\in\varTheta,n\in\mathbb{N}\}$
be a family of random variables defined on some probability space.
Let $\big\{\vartheta_{n}\sim\mu_{n},n\in\mathbb{N}\big\}$ for a family
of probability distributions $\big\{\mu_{n},n\in\mathbb{N}\big\}$
on $\varTheta$ and an associated $\sigma-$algebra, $\varphi:\varTheta\rightarrow[0,1]$
and $\{a_{n},n\in\mathbb{N}\}$ a real valued sequence. Assume there
exist a sequence $\{(m_{n}^{\vartheta},s_{n}^{\vartheta}),n\in\mathbb{N}\}\in(\mathbb{R}\times\mathbb{R}_{+})^{\mathbb{N}}$,
$s_{-}$, $s_{+}$and $(m^{\vartheta}=-s^{\vartheta}/2,s^{\vartheta})\in\mathbb{R}\times\mathbb{R}_{+}$
such that 
\[
\lim_{n\rightarrow\infty}\sup_{\vartheta\in\varTheta}\big(|m_{n}^{\vartheta}-m^{\vartheta}|+|s_{n}^{\vartheta}-s^{\vartheta}|\big)=0,
\]
\[
0<s_{-}\leq\inf_{(\vartheta,n)\in\varTheta\times\mathbb{N}}s_{n}^{\vartheta}\leq\sup_{(\vartheta,n)\in\varTheta\times\mathbb{N}}s_{n}^{\vartheta}\leq s_{+}<\infty
\]
and 
\[
\lim_{n\rightarrow\infty}\sup_{z\in\mathbb{R},\vartheta\in\varTheta}\big|\mathbb{P}_{\vartheta}\big(\tfrac{B_{n}^{\vartheta}-m_{n}^{\vartheta}}{\sqrt{s_{n}^{\vartheta}}}\leq z\big)-\Phi(z)\big|=0.
\]
Then with $B^{\vartheta}\sim\mathcal{N}(-s^{\vartheta}/2,s^{\vartheta})$
\[
\lim_{n\rightarrow\infty}\big\{\mathbb{E}\big[\alpha\big(a_{n}^{\vartheta_{n}}+B_{n}^{\vartheta_{n}}\big)\varphi(\vartheta_{n})\big]-\mathbb{E}\big[\alpha\big(a_{n}^{\vartheta_{n}}+B^{\vartheta_{n}}\big)\varphi(\vartheta_{n})\big]\big\}=0.
\]
\end{prop}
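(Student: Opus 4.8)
The plan is to reduce the claim to a uniform comparison, over both the real argument and the index $\vartheta$, of the cumulative distribution function of $B_{n}^{\vartheta}$ under $\mathbb{P}_{\vartheta}$ with that of the Gaussian limit $B^{\vartheta}\sim\mathcal{N}(-s^{\vartheta}/2,s^{\vartheta})$, and then to turn this into a bound on the expected acceptance probabilities via the integral identity of Lemma \ref{lem:EAPinCDF}. First I would condition on $\vartheta_{n}$: setting $g_{n}(\vartheta):=\mathbb{E}_{\vartheta}[\alpha(a_{n}^{\vartheta}+B_{n}^{\vartheta})]$ and $h_{n}(\vartheta):=\mathbb{E}[\alpha(a_{n}^{\vartheta}+B^{\vartheta})]$, the difference $\mathbb{E}[\alpha(a_{n}^{\vartheta_{n}}+B_{n}^{\vartheta_{n}})\varphi(\vartheta_{n})]-\mathbb{E}[\alpha(a_{n}^{\vartheta_{n}}+B^{\vartheta_{n}})\varphi(\vartheta_{n})]$ equals $\mathbb{E}[(g_{n}(\vartheta_{n})-h_{n}(\vartheta_{n}))\varphi(\vartheta_{n})]$ by the tower property, and its modulus is at most $\sup_{\vartheta\in\varTheta}|g_{n}(\vartheta)-h_{n}(\vartheta)|$ because $\varphi$ takes values in $[0,1]$ and $\mu_{n}$ is a probability distribution ($g_{n},h_{n}$ being measurable in $\vartheta$ by the structure assumed on $\varTheta$, or directly from the representation below). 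Hence it suffices to prove $\sup_{\vartheta\in\varTheta}|g_{n}(\vartheta)-h_{n}(\vartheta)|\to0$.

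Next, let $F_{n}^{\vartheta}$ be the cumulative distribution function of $B_{n}^{\vartheta}$ under $\mathbb{P}_{\vartheta}$ and $G^{\vartheta}(v):=\Phi((v-m^{\vartheta})/\sqrt{s^{\vartheta}})$ that of $B^{\vartheta}$. Applying Lemma \ref{lem:EAPinCDF} with shift $a=a_{n}^{\vartheta}$ to $Z=B_{n}^{\vartheta}$ and then to $Z=B^{\vartheta}$ gives
\[
g_{n}(\vartheta)-h_{n}(\vartheta)=\int_{-\infty}^{0}\big[G^{\vartheta}(u-a_{n}^{\vartheta})-F_{n}^{\vartheta}(u-a_{n}^{\vartheta})\big]\exp(u)\,{\rm d}u,
\]
so, using $\int_{-\infty}^{0}\exp(u)\,{\rm d}u=1$,
\[
\sup_{\vartheta\in\varTheta}|g_{n}(\vartheta)-h_{n}(\vartheta)|\leq\sup_{\vartheta\in\varTheta}\sup_{v\in\mathbb{R}}\big|F_{n}^{\vartheta}(v)-G^{\vartheta}(v)\big|.
\]
Note that the (possibly $\vartheta$-dependent) shift $a_{n}^{\vartheta}$ has been absorbed into the supremum over $v$; this is precisely why the comparison is routed through the Kolmogorov distance of the laws rather than, say, a Wasserstein distance.

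It then remains to show that $\sup_{\vartheta\in\varTheta}\sup_{v\in\mathbb{R}}|F_{n}^{\vartheta}(v)-G^{\vartheta}(v)|\to0$. Writing $F_{n}^{\vartheta}(v)=\mathbb{P}_{\vartheta}\big((B_{n}^{\vartheta}-m_{n}^{\vartheta})/\sqrt{s_{n}^{\vartheta}}\leq(v-m_{n}^{\vartheta})/\sqrt{s_{n}^{\vartheta}}\big)$ and adding and subtracting $\Phi((v-m_{n}^{\vartheta})/\sqrt{s_{n}^{\vartheta}})$ yields
\begin{multline*}
\big|F_{n}^{\vartheta}(v)-G^{\vartheta}(v)\big|\leq\sup_{z\in\mathbb{R}}\Big|\mathbb{P}_{\vartheta}\big(\tfrac{B_{n}^{\vartheta}-m_{n}^{\vartheta}}{\sqrt{s_{n}^{\vartheta}}}\leq z\big)-\Phi(z)\Big|\\
+\sup_{w\in\mathbb{R}}\Big|\Phi\big(\tfrac{w-m_{n}^{\vartheta}}{\sqrt{s_{n}^{\vartheta}}}\big)-\Phi\big(\tfrac{w-m^{\vartheta}}{\sqrt{s^{\vartheta}}}\big)\Big|.
\end{multline*}
Taking the supremum over $\vartheta$, the first term tends to $0$ by the uniform central limit hypothesis, and the second by Lemma \ref{lem:perturbcoefCLT} applied with the zero shift sequence, which is available because $\sup_{\vartheta}(|m_{n}^{\vartheta}-m^{\vartheta}|+|s_{n}^{\vartheta}-s^{\vartheta}|)\to0$ and the variances are squeezed between $s_{-}>0$ and $s_{+}<\infty$. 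Combining the three displays concludes the argument.

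I do not anticipate a genuine obstacle here: once Lemma \ref{lem:EAPinCDF} is in hand the argument is routine bookkeeping. The only points requiring a little care are the measurability/conditioning step in the first paragraph — legitimate because $g_{n}$ and $h_{n}$ arise as regular conditional expectations and are therefore measurable in $\vartheta$ — and ensuring the final bound is uniform in the arbitrary shift $a_{n}^{\vartheta}$, which is handled automatically by passing through the sup-norm distance of distribution functions.
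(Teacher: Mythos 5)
Your proposal is correct and follows essentially the same route as the paper's own proof: both convert the expected acceptance probabilities into integrals of distribution functions via Lemma \ref{lem:EAPinCDF}, then split the resulting difference into a term controlled by the assumed uniform CLT and a Gaussian-perturbation term controlled by Lemma \ref{lem:perturbcoefCLT}, with uniformity in $\vartheta$ absorbing the arbitrary shifts $a_{n}^{\vartheta}$. The only cosmetic difference is that you first reduce to a supremum over $\vartheta$ of a Kolmogorov-type distance before decomposing, whereas the paper keeps the integration against $\mu_{n}$ and $\varphi$ throughout; the substance is identical.
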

\begin{proof}
We consider the difference and with $F_{\vartheta,n}\big(z\big):=\mathbb{P}_{\vartheta}\left(\tfrac{B_{n}^{\vartheta}-m_{n}^{\vartheta}}{\sqrt{s_{n}^{\vartheta}}}\leq z\right)$
we obtain with Lemma \ref{lem:EAPinCDF} 
\begin{align*}
\mathbb{E}\left\{ \big[\alpha\big(a_{n}^{\vartheta_{n}}+B^{\vartheta_{n}}\big)-\alpha\big(a_{n}^{\vartheta_{n}}+B_{n}^{\vartheta_{n}}\big)\big]\varphi(\vartheta_{n})\right\} \\
=\int_{\varTheta}\mu_{n}({\rm d}\vartheta) & \varphi(\vartheta)\int_{-\infty}^{0}\big[\mathbb{P}_{\vartheta}\big(B_{n}^{\vartheta}\leq u-a_{n}^{\vartheta}\big)-\mathbb{P}_{\vartheta}\big(B^{\vartheta}\leq u-a_{n}^{\vartheta}\big)\big]\exp(u){\rm d}u\\
=\int_{\varTheta}\mu_{n}({\rm d}\vartheta) & \varphi(\vartheta)\int_{-\infty}^{0}\big[F_{\vartheta,n}\left(\tfrac{u-a_{n}^{\vartheta}-m_{n}^{\vartheta}}{\sqrt{s_{n}^{\vartheta}}}\right)-\Phi\left(\tfrac{u-a_{n}^{\vartheta}-m_{n}^{\vartheta}}{\sqrt{s_{n}^{\vartheta}}}\right)\big]\exp(u){\rm d}u\\
\quad+\int_{\varTheta} & \mu_{n}({\rm d}\vartheta)\varphi(\vartheta)\int_{-\infty}^{0}\big[\Phi\left(\tfrac{u-a_{n}^{\vartheta}-m_{n}^{\vartheta}}{\sqrt{s_{n}^{\vartheta}}}\right)-\Phi\left(\tfrac{u-a_{n}^{\vartheta}-m^{\vartheta}}{\sqrt{s^{\vartheta}}}\right)\big]\exp(u){\rm d}u.
\end{align*}
The first term vanishes from the assumed U-CLT and the second from
Lemma \ref{lem:perturbcoefCLT}. 
\end{proof}

\subsection{{\normalsize{}{}Proof of the main result}}

The proof of the main result is a slight modification of the proposition
below relying on Lemma \ref{lem:perturbcoefCLT} and the fact that
$\lim_{T\rightarrow\infty}\sup_{(\theta,\epsilon)\in\Theta\times\Xi}|\varsigma_{T}^{2}(\theta,\epsilon)-\sigma^{2}(\theta,\epsilon)|=0$
from Lemma \ref{lem:basicLipschitzcontinuities}, where we remind
the reader that $\varsigma_{T}^{2}(\theta,\epsilon):=\sigma^{2}(\tilde{\theta}(\epsilon,T),\epsilon)$
and use the notation from Theorem \ref{thm:main}. 
\begin{prop}
\label{prop:nearlymainresult}Assume (A\ref{hyp:strong-likelihood-cond}).
Then $P-$a.s., for any $\varepsilon_{0}>0$ there exist $T_{0},N_{0}\in\mathbb{N}$
such that for any $T\geq T_{0}$ and any sequence $\big\{ N_{T}\big\}\in\mathbb{N}^{\mathbb{N}}$
such that $N_{T}\geq N_{0}$ for $T\geq T_{0}$ 
\[
\sup_{T\geq T_{0}}\left|\mathbb{E}_{T}^{\omega}\left[\min\{1,\tilde{r}_{T}(\theta,\epsilon;\omega,\xi)\}\right]-\check{\mathbb{E}}_{T}^{\omega}\left[\min\{1,r_{T}(\theta,\epsilon;\omega)\exp(Z)\}\right]\right|\leq\varepsilon_{0},
\]
and 
\[
\sup_{T\geq T_{0}}\left|\mathbb{E}_{T}^{\omega}\left[\min\{1,\tilde{r}_{T}(\theta,\epsilon;\omega,\xi)\}\epsilon^{2}\right]-\check{\mathbb{E}}_{T}^{\omega}\left[\min\{1,r_{T}(\theta,\epsilon;\omega)\exp(Z)\}\epsilon^{2}\right]\right|\leq\varepsilon_{0}
\]
where 
\[
Z\mid(\theta,\epsilon,\omega)\sim\mathcal{N}\left(-\frac{\sigma^{2}(\theta,\epsilon)}{2},\sigma^{2}(\theta,\epsilon)\right).
\]
\end{prop}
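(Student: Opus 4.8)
The plan is to recast both expectations in the abstract form treated by Lemma~\ref{lem:firstapproxEAP} and Proposition~\ref{prop:penaltyapproxEAP}, and then feed in the successive approximations of $\Lambda_{T}$ obtained above. Fix $\varepsilon_{0}>0$ and work on the almost sure event on which Corollary~\ref{cor:almostsureapproximation} and Lemmas~\ref{lem:UCLT} and~\ref{lem:convergencevariance} hold. With $\alpha(x):=1\wedge\exp(x)$ one has, when $r_{T}(\theta,\epsilon;\omega)>0$, $\min\{1,\tilde{r}_{T}(\theta,\epsilon;\omega,\xi)\}=\alpha\bigl(\log r_{T}(\theta,\epsilon;\omega)+\Lambda_{T}(\theta,\epsilon;\omega,\xi)\bigr)$ and $\min\{1,r_{T}(\theta,\epsilon;\omega)\exp(Z)\}=\alpha\bigl(\log r_{T}(\theta,\epsilon;\omega)+Z\bigr)$, while both quantities vanish when $r_{T}=0$, i.e.\ when $\theta'(\epsilon,T)\notin\Theta$. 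Since $\pi_{T}(\partial\Theta;\omega)=0$ and, by convexity of $\Theta$, $\tilde{\theta}(\epsilon,T)\in\mathring{\Theta}$ whenever $\theta\in\mathring{\Theta}$ and $\theta'(\epsilon,T)\in\Theta$, it suffices to bound both differences with integrands restricted to $G_{T}:=\{\theta\in\mathring{\Theta},\ \theta'(\epsilon,T)\in\Theta\}$, on which Lemmas~\ref{lem:taylor-approximation}--\ref{lem:convergencevariance} apply. A further localisation handles the degeneracy of $\sigma^{2}(\theta,\epsilon)$ at $\epsilon=0$: on $\{|\epsilon|\le\delta\}$ all integrands lie in $[0,1]$ and $q_{0}(\{|\epsilon|\le\delta\})\downarrow0$ as $\delta\downarrow0$, so the contribution of this set to either difference is at most $\varepsilon_{0}/3$ once $\delta$ is small enough; fix such a $\delta$. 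On the remaining set we take $\vartheta:=(\theta,\epsilon)$, $\varTheta:=\{(\theta,\epsilon)\in\mathring{\Theta}\times\Xi:|\epsilon|\ge\delta\}$, $n:=T$, $a_{n}^{\vartheta}:=\log r_{T}(\theta,\epsilon;\omega)$, $\varphi\equiv1$, $\mu_{n}$ the restriction to $G_{T}\cap\varTheta$ of the $(\theta,\epsilon)$-marginal of $\mathbb{E}_{T}^{\omega}$, and $\mathbb{E}_{\vartheta}:=\mathbb{E}_{\theta,\epsilon,T}^{\omega}$ the conditional expectation over $\xi$, so that $\mathbb{E}[\cdot]=\mathbb{E}_{T}^{\omega}[\cdot]$; the $\epsilon^{2}$-weighted statement is obtained identically with $\varphi(\vartheta):=\epsilon^{2}/\sup_{\Xi}\epsilon^{2}$ upon replacing $\varepsilon_{0}$ by $\varepsilon_{0}/\sup_{\Xi}\epsilon^{2}$ throughout.

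\emph{First replacement.} Put $A_{n}^{\vartheta}:=\Lambda_{T}(\theta,\epsilon;\omega,\xi)$ and $B_{n}^{\vartheta}:=S_{\theta,\epsilon,T}^{(1)}(\omega,\xi)-\tfrac{1}{2}\sigma^{2}(\theta,\epsilon)$. Corollary~\ref{cor:almostsureapproximation} gives $\lim_{T\to\infty}\sup_{(N_{T},\theta,\epsilon)\in\mathbb{N}\times\mathring{\Theta}\times\Xi}\mathbb{E}_{\theta,\epsilon,T}^{\omega}|A_{n}^{\vartheta}-B_{n}^{\vartheta}|=0$, the supremum being over \emph{all} $N_{T}\in\mathbb{N}$, so the hypothesis of Lemma~\ref{lem:firstapproxEAP} holds; moreover its proof bounds the difference of expectations by $\sup_{\vartheta}\mathbb{E}_{\vartheta}|A_{n}^{\vartheta}-B_{n}^{\vartheta}|$ and is unaffected by letting $a_{n}$ depend on $\vartheta$. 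Hence there is $T_{1}$, independent of the sequence $\{N_{T}\}$, with
\[
\Bigl|\mathbb{E}_{T}^{\omega}\bigl[\mathbb{I}_{G_{T}\cap\varTheta}\,\alpha(a_{n}^{\vartheta}+A_{n}^{\vartheta})\varphi(\vartheta)\bigr]-\mathbb{E}_{T}^{\omega}\bigl[\mathbb{I}_{G_{T}\cap\varTheta}\,\alpha(a_{n}^{\vartheta}+B_{n}^{\vartheta})\varphi(\vartheta)\bigr]\Bigr|\le\varepsilon_{0}/3,\qquad T\ge T_{1}.
\]

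\emph{Second replacement.} It remains to replace $B_{n}^{\vartheta}$ by $B^{\vartheta}\sim\mathcal{N}\bigl(-\tfrac{1}{2}\sigma^{2}(\theta,\epsilon),\sigma^{2}(\theta,\epsilon)\bigr)$, noting that $\mathbb{E}_{T}^{\omega}[\mathbb{I}_{G_{T}\cap\varTheta}\,\alpha(a_{n}^{\vartheta}+B^{\vartheta})\varphi(\vartheta)]$ is exactly the corresponding part of $\check{\mathbb{E}}_{T}^{\omega}[\min\{1,r_{T}(\theta,\epsilon;\omega)\exp(Z)\}\varphi(\vartheta)]$. We use the quantitative bound read off from the proof of Proposition~\ref{prop:penaltyapproxEAP} (via Lemmas~\ref{lem:EAPinCDF} and~\ref{lem:perturbcoefCLT}), namely
\[
\sup_{z,\vartheta}\bigl|\mathbb{P}_{\vartheta}\bigl(\tfrac{B_{n}^{\vartheta}-m_{n}^{\vartheta}}{\sqrt{s_{n}^{\vartheta}}}\le z\bigr)-\Phi(z)\bigr|+C(s_{-},s_{+})\sup_{\vartheta}\bigl(|m_{n}^{\vartheta}-m^{\vartheta}|+|s_{n}^{\vartheta}-s^{\vartheta}|\bigr),
\]
with $m_{n}^{\vartheta}:=\bar{S}_{\theta,\epsilon,T}^{(1)}(\omega)-\tfrac{1}{2}\sigma^{2}(\theta,\epsilon)$, $s_{n}^{\vartheta}:=\sigma_{T}^{2}(\theta,\epsilon;\omega)$, $m^{\vartheta}:=-\tfrac{1}{2}\sigma^{2}(\theta,\epsilon)$, $s^{\vartheta}:=\sigma^{2}(\theta,\epsilon)$, so that $(B_{n}^{\vartheta}-m_{n}^{\vartheta})/\sqrt{s_{n}^{\vartheta}}=(S_{\theta,\epsilon,T}^{(1)}-\bar{S}_{\theta,\epsilon,T}^{(1)})/\sigma_{T}(\theta,\epsilon;\omega)$. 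On $\varTheta$, $s^{\vartheta}=\tfrac{\epsilon^{2}}{2}\mathbb{V}_{\theta}[\dot{\ell}_{\theta}(X_{1}\mid Y_{1})]\in[s_{-},s_{+}]$ with $s_{-}=\tfrac{\delta^{2}}{2}\inf_{\theta,y}\mathbb{V}_{\theta}^{\omega}[\dot{\ell}_{\theta}(X_{1}\mid y)]>0$ by (A\ref{hyp:strong-likelihood-cond}). By Corollary~\ref{cor:existenceN0} the covariance condition~\eqref{eq:cond-covariance} holds once $N_{T}\ge N_{0}$ and $T\ge T_{0}$, so Lemma~\ref{lem:UCLT} makes the first term above vanish as $T\to\infty$; Lemma~\ref{lem:limitexpectationinCLT} bounds $\sup_{\vartheta}|m_{n}^{\vartheta}-m^{\vartheta}|=\sup_{\vartheta}|\bar{S}_{\theta,\epsilon,T}^{(1)}(\omega)|$ by any prescribed amount after enlarging $N_{0}$; and Lemma~\ref{lem:convergencevariance} controls $\sup_{\vartheta}|s_{n}^{\vartheta}-s^{\vartheta}|$ similarly for $T\ge T_{0}$, $N_{T}\ge N_{0}$ ($P$-a.s.), keeping $s_{n}^{\vartheta}$ in a fixed compact subinterval of $(0,\infty)$. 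Choosing the thresholds so that this bound is $\le\varepsilon_{0}/3$ for $T\ge T_{2}$, $N_{T}\ge N_{0}$, and adding the $\{|\epsilon|\le\delta\}$ contribution and the first replacement (with $T_{0}'=T_{1}\vee T_{2}$), gives $|\mathbb{E}_{T}^{\omega}[\min\{1,\tilde{r}_{T}\}]-\check{\mathbb{E}}_{T}^{\omega}[\min\{1,r_{T}\exp(Z)\}]|\le\varepsilon_{0}$ for all $T\ge T_{0}'$, $N_{T}\ge N_{0}$, whence $\sup_{T\ge T_{0}'}$ is automatic; the weighted statement is handled in the same way.

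\emph{Main obstacle.} All the analytic content is contained in the lemmas above; the only genuinely delicate points are organisational. One must keep separate the two asymptotic regimes — the approximation $\Lambda_{T}\approx S_{\theta,\epsilon,T}^{(1)}-\tfrac{1}{2}\sigma^{2}(\theta,\epsilon)$ is uniform over $N_{T}$ and needs only $T$ large, whereas the Gaussian approximation genuinely requires $N_{T}\ge N_{0}$ (to control the covariance $\mathbb{C}_{\theta,\epsilon,T}^{\omega}$ and the centring $\bar{S}_{\theta,\epsilon,T}^{(1)}$) — and one must handle the vanishing of $\sigma^{2}(\theta,\epsilon)$ as $\epsilon\to0$, done via the localisation near the origin; everything else is bookkeeping of $\varepsilon_{0}/3$-type error terms.
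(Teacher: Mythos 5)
Your proof is correct and follows essentially the same route as the paper's: the first replacement of $\Lambda_{T}$ by $S_{\theta,\epsilon,T}^{(1)}-\sigma^{2}(\theta,\epsilon)/2$ via Corollary \ref{cor:almostsureapproximation} and Lemma \ref{lem:firstapproxEAP}, then the Gaussian replacement via Corollary \ref{cor:existenceN0} and Lemmata \ref{lem:UCLT}, \ref{lem:limitexpectationinCLT}, \ref{lem:convergencevariance} fed into Proposition \ref{prop:penaltyapproxEAP} (through Lemmata \ref{lem:EAPinCDF} and \ref{lem:perturbcoefCLT}), with the $\epsilon^{2}$-weighted statement handled by taking $\varphi(\vartheta)=\epsilon^{2}/\sup_{\Xi}\epsilon^{2}$ and rescaling $\varepsilon_{0}$, and a final appeal to the tower property. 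Your additional localisation on $\{|\epsilon|\leq\delta\}$, which secures the lower bound $s_{-}>0$ required by Lemma \ref{lem:perturbcoefCLT} and Proposition \ref{prop:penaltyapproxEAP} even though $\sigma^{2}(\theta,\epsilon)\propto\epsilon^{2}$ vanishes as $\epsilon\rightarrow0$, is a legitimate refinement of a point the paper's proof leaves implicit.
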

\begin{proof}[Proof of Proposition \ref{prop:nearlymainresult}]
First note that by assumption $\Theta\smallsetminus\mathring{\Theta}$
has posterior probability zero and that we can ignore the terms such
that $r_{T}(\theta,\epsilon;\omega)=0$ in the expectations involved
(and we therefore assume implicitly below the presence of an indicator
of the event $r_{T}(\theta,\epsilon;\omega)\neq0$ in order to keep
notation simple). Choose $\varepsilon'_{0}>0$. From Corollary \ref{cor:almostsureapproximation},
$P-a.s.$ we have 
\[
\lim_{T\rightarrow\infty}\sup_{(N_{T},\theta,\epsilon)\in\mathbb{N}\times\mathring{\Theta}\times\Xi}\mathbb{E}_{\theta,\epsilon,T}^{\omega}\big|\Lambda_{T}(\theta,\epsilon;\omega,\xi)-S_{\theta,\epsilon,T}^{(1)}(\omega,\xi)+\sigma^{2}(\theta,\epsilon)/2\big|=0.
\]
Therefore we can apply Lemma \ref{lem:firstapproxEAP} for these realisations
of the observations $\omega$ and show that for some $T_{0,1}\in\mathbb{N}$
and $T\geq T_{0,1}$ 
\[
\sup_{(N_{T},\theta,\epsilon)\in\mathbb{N}\times\mathring{\Theta}\times\Xi}\left|\mathbb{E}_{\theta,\epsilon,T}^{\omega}\left[\min\{1,\tilde{r}_{T}(\theta,\epsilon;\omega,\xi)\}-\min\{1,r_{T}(\theta,\epsilon;\omega)\exp(S_{\theta,\epsilon,T}^{(1)}(\omega,\xi)-\sigma^{2}(\theta,\epsilon)/2)\}\right]\right|\leq\varepsilon'_{0}/3.
\]
Let $N_{0,2}\in\mathbb{N}$ be as in Corollary \ref{cor:existenceN0}.
Then from Lemma \ref{lem:UCLT} for any $\big\{ N_{T}\big\}$ such
that $\liminf_{T\rightarrow\infty}N_{T}\geq N_{0,2}$ we have the
existence of $T_{0,2}\in\mathbb{N}$ such that for any $\omega\in\mathsf{Y}^{\mathbb{N}}$
\[
\sup_{T\geq T_{0,2}}\sup_{(\theta,\epsilon,z)\in\mathring{\Theta}\times\Xi\times\mathbb{R}}\left|\mathbb{P}_{\theta,\epsilon,T}^{\omega}\left(\tfrac{S_{\theta,\epsilon,T}^{(1)}(\xi,\omega)-\bar{S}_{\theta,\epsilon,T}^{(1)}(\omega)}{\sigma_{T}(\theta,\epsilon;\omega)}\leq z\right)-\Phi(z)\right|\leq\varepsilon'_{0}/3.
\]
Let $\lambda_{T}:=\log r_{T}(\theta,\epsilon;\omega)$. From Lemma
\ref{lem:limitexpectationinCLT} and \ref{lem:convergencevariance},
there exist $\alpha_{3},\alpha_{4}>0$ and $N_{0,3},T_{0,3}\in\mathbb{N}$
such that 
\[
\sup_{T\geq1}\sup_{(\theta,\epsilon,\omega)\in\mathring{\Theta}\times\Xi\times\mathsf{Y}^{\mathbb{N}}}\left|\bar{S}_{\theta,\epsilon,T}^{(1)}(\omega)\right|\mathbb{I}\{N_{T}\geq N_{0,3}\}\leq\alpha_{3},
\]
\[
\sup_{T\geq T_{0,3}}\sup_{(\theta,\epsilon)\in\mathring{\Theta}\times\Xi}\left|\sigma_{T}^{2}(\theta,\epsilon;\omega)-\sigma^{2}(\theta,\epsilon)\right|\mathbb{I}\{N_{T}\geq N_{0,3}\}\leq\alpha_{4}\quad P-a.s.
\]
and 
\[
\sup_{(z,\theta,\epsilon,\omega)\in\mathbb{R}\times\mathring{\Theta}\times\Xi\times\mathsf{Y}^{\mathbb{N}}}\left|\Phi\left(\tfrac{z-\lambda_{T}+\sigma^{2}(\theta,\epsilon)/2-\bar{S}_{\theta,\epsilon,T}^{(1)}(\omega)}{\sqrt{\sigma_{T}^{2}(\theta,\epsilon;\omega)}}\right)-\Phi\left(\tfrac{z-\lambda_{T}+\sigma^{2}(\theta,\epsilon)/2}{\sqrt{\sigma^{2}(\theta,\epsilon)}}\right)\right|\leq\varepsilon'_{0}/3.
\]
We can now apply Proposition \ref{prop:penaltyapproxEAP} on the intersection
of the sets of \foreignlanguage{british}{realisations} of the observations
above for $\varepsilon_{0}'=\varepsilon_{0}$ and $\varepsilon'_{0}=\varepsilon_{0}/\sup_{\epsilon\in\Xi}\epsilon^{2}$,
$T_{0}=T_{0,1}\vee T_{0,2}\vee T_{0,3}$ and $N_{0}=N_{0,2}\vee N_{0,3}$.
The two statements of the proposition follow by application of the
tower property of the expectation.
\end{proof}

\subsection{{\normalsize{}{}Discussion of the assumptions\label{subsec:DiscussionAssumptions}}}

We here briefly discuss how restrictive our assumptions are. It should
be clear that the following conditions are mild or can be easily lifted: 
\begin{itemize}
\item we have $\Theta\subset\mathbb{R}$ in order to avoid unnecessary technicalities
inherent to the multivariate scenario. It should be clear from the
proof that our result also holds in the multivariate scenario, 
\item the convexity of $\Theta$ is not a requirement, but here simply ensures
that for any $\theta,\theta'\in\Theta$ then $(\theta+\theta')/2\in\Theta$.
More general intermediate points could be considered in non-convex
scenarios, 
\item the differentiability conditions are satisfied if $\mu_{\theta}(x)$
and $g_{\theta}\left(y\mid x\right)$ are three times differentiable
w.r.t. $\theta$ and do not represent a significant restriction. Lipchitz
continuity of the second derivative could replace the existence of
the third derivative. 
\end{itemize}
The more restrictive conditions are, at various degrees, related to
the existence of bounds uniform in $\theta,\epsilon,\omega$ or $\xi$,
implying in particular in practice that $\mathsf{X}$ and $\mathsf{Y}$
are ``bounded'' ($\Theta$ and $\Xi$ are assumed compact, the latter
not being a serious restriction). Inspection of the proof however
suggests that these conditions can be relaxed and the arguments adapted,
albeit at the expense of significant technical complications. Our
first main point is that our proof ignores the fact that the sequence
of posterior distributions with densities $\{\pi_{T}(\theta;\omega);T\geq1\}$
will, under standard assumptions ensuring that a Bernstein-von Mises
result holds, concentrate on a particular value $\theta^{*}\in\Theta$
of the parameter \citet{kleijn2012}. This suggests that uniformity
in a neighbourhood of $\theta^{*}$ should be sufficient (allowing
one, for example, to relax (A\ref{hyp:strong-likelihood-cond})-(\ref{enu:variancepositiveforallthetas}))
and that the control of terms of the form $\mathbb{E}_{\theta,\epsilon,T}^{\omega}\big[\phi(X_{t},y_{t})\big]$
required in our proof may be achieved through establishing explicit
bounds in $\theta$ and $y_{t}$ which can then be controlled via
concentration on the one hand, and the existence of moments of the
observations on the other hand.
\end{document}